\title{Incentivizing Exploration with Heterogeneous Value of Money}
\author{Li Han \and David Kempe \and Ruixin Qiang}
\begin{document}
\newcommand{\dif}{\mathrm{d}}
\newcommand{\bana}{\mathcal{U}}
\newcommand{\armnum}{\ensuremath{N}}
\newcommand{\reward}[2]{v_{#1,#2}}
\newcommand{\payment}[2]{c_{#1,#2}}
\newcommand{\sigSpace}{\Sigma}
\newcommand{\sigFun}{\varphi(r,\sig)}
\newcommand{\sigSch}{\varphi}
\newcommand{\sig}{s}
\newcommand{\opt}{\mbox{OPT}}
\newcommand{\theB}{B}
\newcommand{\inv}[1]{{#1}^{-1}}
\newcommand{\qStar}{\mathbf{q}^\ast}
\newcommand{\qHat}{\hat{\mathbf{q}}}
\newcommand{\sumSig}{\sum_{s \in \sigSpace}}
\newcommand{\pard}[1]{\frac{\partial}{\partial {#1}}}
\newcommand{\derone}{\frac{d}{d q}}
\newcommand{\dertwo}{\frac{d^2}{d q^2}}
\newcommand{\Umr}{conversion ratio\xspace}
\newcommand{\Umrs}{conversion ratios\xspace}
\newcommand{\MUF}{money utility function\xspace}
\newcommand{\MUFs}{money utility functions\xspace}
\newcommand{\MF}[1][]{\ifthenelse{\equal{#1}{}}{\mu}{\mu_{#1}}}
\newcommand{\Mf}[2][]{\ifthenelse{\equal{#1}{}}{\mu(#2)}{\mu_{#1}(#2)}}
\newcommand{\sigProb}{\mathbf{p}}
\newcommand{\DR}{R^{(\gamma)}}
\newcommand{\DC}{C^{(\gamma)}}
\newcommand{\lagobj}{Lagrangian objective\xspace}
\newcommand{\Zed}{Z}
\newcommand{\Tet}{total expected time-discounted\xspace}
\newcommand{\CDF}{F\xspace}
\newcommand{\pdf}{f\xspace}
\newcommand{\semiReg}{semi-regular\xspace}

\newcommand{\ALG}{\mathcal{A}}
\newcommand{\MP}[1]{q_{#1}}
\newcommand{\MPV}{\ve{q}}
\newcommand{\SP}[1]{p_{#1}}
\newcommand{\BUDGET}{b}

\providecommand{\Te}[2]{\ensuremath{\textbf{TES}_{\mathbf{#1},\mathcal{#2},\sigSch}}}
\providecommand{\TES}[2]{\ensuremath{\textbf{TES}_{\mathbf{#1},\mathcal{#2},
    \sigSch}}}
\providecommand{\TEold}[2]{\ensuremath{\textbf{TE}_{#1,\mathcal{#2}}}}
\providecommand{\TP}[1]{\ensuremath{\textbf{TP}_{#1,\mathcal{A}}}}
\newcommand{\ve}{\mathbf}
\maketitle

%\markboth{}{}

\begin{abstract}
Recently, Frazier et al.~proposed a natural model for crowdsourced
exploration of different a priori unknown options: a principal is
interested in the long-term welfare of a population of agents who
arrive one by one in a multi-armed bandit setting. 
However, each agent is myopic, so in order to incentivize him to
explore options with better long-term prospects, the principal must
offer the agent money. Frazier et al.~showed that a simple class of
policies called time-expanded are optimal in the worst case, and
characterized their budget-reward tradeoff.

The previous work assumed that all agents are equally and uniformly
susceptible to financial incentives. 
In reality, agents may have different utility for money.
%assign different relative value to money and bandit rewards. 
We therefore extend the model of Frazier et al.~to allow agents that
have heterogeneous and non-linear utilities for money.
%are heterogeneous in their relative utility of money and rewards. 
The principal is informed of the agent's tradeoff via a signal that
could be more or less informative. 

Our main result is to show that a convex program can be used to derive
a signal-dependent time-expanded policy which achieves the best
possible Lagrangian reward in the worst case. The worst-case guarantee
is matched by so-called ``Diamonds in the Rough'' instances; the proof
that the guarantees match is based on showing that two different
convex programs have the same optimal solution for these specific
instances.
These results also extend to the budgeted case as in Frazier et al.
We also show that the optimal policy is monotone with respect to information, i.e., the approximation ratio of the optimal policy improves as the signals become more informative. 
%\keywords{multi-armed bandit problems; mechanism design; incentives}
\end{abstract}

%\category{J.4}{Economics}{}

%\acmformat{Li Han, David Kempe, Ruixin Qiang, 2015. Incentivizing
%  Exploration with Heterogeneous Value of Money}

% \begin{bottomstuff}
% Work supported in part by NSF Grant CCF 1423618.
% \end{bottomstuff}

\section{Introduction}
The goal of mechanism design is to align incentives when different
parties have conflicting interests.
In the VCG mechanism, the mechanism designer wants to maximize social
welfare whereas each bidder selfishly maximizes his own payoff.
In revenue maximization, the objectives are even more directly
opposed, as any increase in the bidders' surplus hurts the revenue
for the auctioneer.
In all of these cases, it is the mechanism's task to trade off
between the differing interests.

The phrase ``trade off'' is also frequently applied in the
context of online learning and the multi-armed bandit (MAB) problem,
where the ``exploration vs.~exploitation tradeoff'' is routinely referenced.
However, in the traditional view of a single principal making a
sequence of decisions to maximize long-term rewards, it is not clear
what exactly is being traded off against what.
Recent work by \citet{original} makes this tradeoff more explicit, by
juxtaposing a principal (with a far-sighted goal of maximizing
long-term rewards) with selfish and myopic agents.
Thus, the principal wants to ``explore,'' while the agents want to
``exploit.'' 
In order to partially align the incentives, the principal can
offer the agents monetary payments for pulling particular arms.

The framework of \citet{original} is motivated by many real-world
applications, all sharing the property that the principal is
interested in the long-term outcome of an exploration of different
options, but cannot carry out the exploration herself\footnote{To
  avoid ambiguity, we consistently refer to the principal as female
  and the agents as male.}.
Perhaps the most obvious fit is that of an online retailer with a
large selection of similar products (e.g., cameras on amazon.com);
in order to learn which of these products are best (and ensure that
\emph{future} buyers purchase the best product), the retailer
needs to rely on customers to buy and review the products.
Each customer prefers to purchase the best product for himself based
on the current reviews, whereas the principal may want to obtain additional
reviews for products that currently have few reviews, but may have the
potential of being high quality. 
Customers can be incentivized to purchase such products by
offering suitable discounts.

Other applications include crowd-sourced science projects, such as the
search for celestial objects or bird or fish counts: individuals may
prefer visiting areas with reliable sightings, while the principal
would like underexplored areas to receive more coverage. 
In fact, even research funding can be naturally viewed in this context:
while individual research groups may prefer to carry out research with
good short-term rewards, funding agencies can use grants as an
incentive to explore directions with long-term benefits.

\citet{original} explore this tradeoff under the standard
time-discounted Bayesian\footnote{Both \citet{original} and our
    work in fact consider a generalization in which each arm
    constitutes an independent Markov chain with Martingale rewards.}
multi-armed bandit model (described formally in
Section~\ref{sec:preliminaries}).
In each round, each arm $i$ has a known posterior reward distribution
$v_i$ conditioned on its history so far, and one arm is pulled based
on the current state of the arms.
The principal's goal is to maximize the \Tet reward
$R = \sum_{t=0}^\infty \gamma^t \Expect{v_{i_t}}$, %\reward{t}{i_t}
where $\gamma$ is the time discount factor.
However, without incentives, each selfish agent would pull the
\emph{myopic} arm $i$ maximizing the immediate expected reward
$\Expect{v_i}$.
When the principal offers payments $c_i$ for pulling arms $i$,
in \cite{original}, the agent's utility for pulling arm $i$ is
$\Expect{v_i} + c_i$, and a myopic agent will choose the arm
maximizing this sum.

Implicit in this model is the assumption that all agents have the same
(one-to-one) tradeoff between arm rewards and payments.
In reality, different agents might have different and
  non-linear tradeoffs between these two, due to a number of causes.
The most obvious is that an agent with a large money
endowment (personal wealth or research funding) may not value
additional payments as highly as an agent with smaller endowment; 
this is generally the motivation for positing concave utility
  functions of money.
In the case of an online retailer, another obvious reason is that
different customers may intend to use the product for different
amounts of time or with different intensity, making the optimization
of quality more or less important. 
Concretely, a professional photographer may be much less willing to
compromise on quality in return for a discount than an amateur.

The main contribution of the present article is an
extension of the model and analysis of \citet{original} to incorporate
non-uniform and non-linear tradeoffs between rewards and money.
%We call this tradeoff the \emph{\Umr}.
We assume that each myopic agent has a monotone and concave \emph{\MUF} 
$\MF : \R^+ \to \R^+$ mapping the agent's
  payment to the corresponding utility.
%parameter $r$ measuring in a
%precise sense the value of money for the agent.
%Specifically, if an agent has \Umr $r$, then $r$ units of arm-based
%reward are worth $1$ unit of payment for him. 
The utility of an agent with \MUF $\MF$ 
is quasi-linear: $\Expect{v_i} + \Mf{c_i}$.
The larger the values of $\MF$, the easier it is to incentivize the
agent with money, while an agent with $\MF \equiv 0$ 
cannot be incentivized at all.
When an agent arrives in round $t$, we assume that his \MUF
$\MF[t]$ is drawn i.i.d.~from some known distribution $\CDF$. 

An important question is then how much the principal knows about $\MF[t]$
at the time she chooses the payment vector $\ve{c}_t = (\payment{t}{i})_i$
to announce for the arm pulls.
In the worst case, the principal may know nothing about agent
$t$ as he arrives. In that case, the payment vector $\ve{c}_t$
can only depend on $\CDF$.
At the other extreme, the principal may learn the value of
$\MF[t]$ exactly. 
Then she is able to precisely control the myopic agent's decision by
setting $\ve{c}$ accordingly\footnote{If she wants arm $i$ to be
pulled, setting $c_i = \MF[t]^{-1}(\max_j \Expect{v_j} - \Expect{v_i})$
suffices.}.

Reality will typically lie between these two extreme cases. 
Both financial endowments and intended use can be partially inferred
from past searches and purchases in the case of an online retailer.
%\dkcomment{I don't feel too comfortable going into details on the NSF
%  example here. We'd be saying to offer more money to well-funded
%  researchers --- I'm sure we'll get criticized for that conclusion.}
%\dkdelete{
%In the Amazon camera-buying example,
%whether a buyer is a professional or an amateur can be infered from his purchase history in the past,
%and this reveals partial information about his \Umr.
%In the NSF funding example,
%the publication record a researcher reflects his preference towards long-term research projects.}
This partial information will give the principal a more accurate
estimate of the agent's \MUF value than what could be learned from the
prior distribution $\CDF$ alone, allowing her to better engineer the
incentives. 
%\footnote{\rqedit{In Appendix \ref{sec:garbling}, we will show that
%    if one signling scheme is more accurate than another, the
%    principal can always get a better result with the former. Here by `more accurate'
% we adopt the definition of garbling of infomation structure\cite{marschak1968economic}.}}.

We formally model the notion of partial information using the standard
economic notion of an exogenous signaling scheme \cite{spence1973job}.
A signaling scheme specifies how
signals are correlated with the ground truth, via a conditional distribution.

For each possible ground truth value (here: $\MF[t]$), the signaling
scheme prescribes a distribution over possible signals $\sig \in
\sigSpace$ that could be revealed. Then, a known signaling scheme
induces a posterior distribution when receiving a signal:
when signal $\sig$ is revealed to the principal, she can update her
posterior belief of the \MUF $\MF$ from $\CDF$ to a more ``accurate''
$F(\MF|s)$. 

The principal now has two goals, which stand in contrast with each
other: minimizing her \Tet payment 
$C = \sum_{t=0}^\infty \gamma^t \payment{t}{i_t}$, 
and maximizing her \Tet reward $R$. 
These two quantities in some sense capture the two objectives that
must be traded off: $R$ is the long-term reward to be maximized, while
$C$ captures the loss in immediate payoffs.
There are two natural ways of combining these two objectives:
The first is to maximize a Lagrangian objective $R - \lambda C$, 
for some constant $\lambda \in (0, 1)$.
The other is to maximize $R$ subject to a constraint on the 
\Tet payments.

\subsection*{Our Results}
In Section~\ref{sec:concave-linear}, we show that in a certain sense,
\emph{linear} functions $\MF$ constitute the worst case for the principal. 
Specifically, we show that for every distribution over functions $\MF$
and a corresponding signaling scheme, we can determine a distribution
over linear functions such that the proposed mechanisms in this
article perform at least as well under the original distribution as 
under the modified one.
On the other hand, we show that for certain MAB instances, the
original distribution does not allow strictly better mechanisms than
the distribution over linear functions.
The proof relies heavily on the techniques developed throughout the
article; however, it retroactively justifies the sole focus 
in the rest of the article on the case that
$\Mf[t]{x} = r_t \cdot x$.

Let $\opt_\gamma$ be the optimal \Tet reward if the principal is
allowed to pull arms herself.
We call a policy an $\alpha$-approximation policy if it achieves at
least an $\alpha$-fraction of $\opt_\gamma$ for \emph{every} MAB
instance. (Precise definitions are given in Section~\ref{sec:preliminaries}.)
The main result of this paper is to characterize the optimal
way to utilize partial information of agents to incentivize
exploration from them.

\begin{theorem} \label{theorem:main_at_intro}
  Let $\gamma$ be the time discount factor.
  Given a prior distribution $\CDF$ (satisfying some technical
    conditions) and signaling scheme $\sigSch$, 
  one can efficiently compute a policy \textbf{TES} and $p^*(\sigSch)$
  such that the Lagrangian reward of \textbf{TES} is a 
  $(1 - p^\ast(\sigSch) \gamma)$-approximation to $\opt_\gamma$.
  This bound is tight. 
\end{theorem}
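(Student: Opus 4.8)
By the reduction established in Section~\ref{sec:concave-linear}, it suffices to handle the case of linear \MUFs, $\Mf[t]{x} = r_t x$, where the \Umr $r_t$ is drawn i.i.d.\ from a distribution and each signal $\sig \in \sigSpace$ induces a posterior on $r_t$. The first step is to parametrize the family of signal-dependent time-expanded policies by a collection of decision variables: for every signal $\sig$, a (monotone) profile $q_\sig(\cdot)$ describing, as a function of the realized myopic reward gap, how much the principal pays -- equivalently, with what probability she steers the arriving agent toward the arm that the unconstrained optimum $\pi^\ast_\gamma$ would pull versus letting him exploit for free, thereby ``time-expanding'' the simulation. On any fixed MAB instance, the \Tet reward and payment of such a policy decompose round by round along a coupling with $\pi^\ast_\gamma$, and the Markov/martingale structure of the arm rewards makes each round's contribution depend only on the current myopic gap and on the per-signal profiles. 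Taking the worst case over instances of the ratio between the Lagrangian reward $R - \lambda C$ and $\opt_\gamma$ then yields a minimization, over instances, of a linear functional of the $q_\sig$'s; imposing the incentive-feasibility and Lagrangian constraints preserves convexity. Call this Convex Program~(I); it is efficiently solvable, and we define $p^\ast(\sigSch)$ so that its optimal value equals $1 - p^\ast(\sigSch)\gamma$, with \textbf{TES} the policy read off from the optimal solution.

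The second step is the achievability direction: on \emph{every} MAB instance, the Lagrangian reward of \textbf{TES} is at least $(1 - p^\ast(\sigSch)\gamma)\,\opt_\gamma$. Here I would couple \textbf{TES} with $\pi^\ast_\gamma$: whenever $\pi^\ast_\gamma$ pulls an arm that is already myopic for the arriving agent, \textbf{TES} pays nothing and loses nothing; whenever $\pi^\ast_\gamma$ pulls a non-myopic arm, \textbf{TES} randomizes between paying to steer the agent (using the realized signal and the optimized profile $q_\sig$) and letting him exploit, which delays the simulated action. Because the rewards form a martingale, the reward accrued during the inserted myopic rounds offsets part of the discounting caused by the delay, and a telescoping/geometric-series computation shows that the cumulative loss factor is exactly the quantity optimized in Convex Program~(I). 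This is the generalization of the \citet{original} analysis to per-signal randomization; the accounting is heavier because the amount of padding now depends on the realized signal, but conceptually it is the routine direction.

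The decisive and harder step is tightness. I would construct the ``Diamonds in the Rough'' family: few-arm instances with a deterministic ``safe'' arm and a ``rough'' arm whose single informative pull reveals either a high-value ``diamond'' or a worthless ``stone,'' calibrated so that the unconstrained optimum always pulls the rough arm first while every myopic agent strictly prefers the safe arm. For this family I would write a second Convex Program~(II) -- again over the per-signal payment profiles, but with objective and constraints specialized to these instances -- whose optimal value is, by construction, an upper bound on what \emph{any} signal-dependent policy can guarantee against the family. The crux is then the lemma that Convex Programs~(I) and~(II) have the same optimal solution, hence the same value $1 - p^\ast(\sigSch)\gamma$: via convex duality, an optimal dual certificate for~(II) -- supported on Diamonds-in-the-Rough instances -- is also dual-optimal for~(I), i.e.\ the worst-case instance for the general problem can be taken from this family without loss. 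Proving this equality is where the real work lies; it requires pinning down the support of the worst-case instance distribution, matching complementary-slackness conditions across the two programs, and ruling out that some MAB instance outside the family is strictly worse. Combining the achievability bound with this matching upper bound yields the tight $(1 - p^\ast(\sigSch)\gamma)$-approximation, and the budgeted version follows from the same Lagrangian-to-budget translation as in \citet{original}.
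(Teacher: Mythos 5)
Your overall architecture matches the paper's --- a signal-dependent time-expanded policy whose per-signal myopic probabilities come from a convex program, a Diamonds-in-the-Rough family for the upper bound, and a KKT/duality argument identifying the optima of the two programs --- but two concrete steps in your plan would not go through as written. The main one is your Convex Program~(I): you describe it as ``a minimization, over instances, of a linear functional of the $q_s$'s'' and then assert it is efficiently solvable. A program quantified over all MAB instances is not a finite convex program, and avoiding exactly this is the point of the paper's formulation. The paper's program never mentions an instance: it minimizes $p = \sum_{s} p_s q_s$ subject to the single constraint $\sum_{s} p_s q_s \ge \lambda \sum_{s} p_s \frac{1-q_s}{F_s^{-1}(q_s)}$. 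That constraint is a sufficient condition engineered so that in every round, on every instance, the expected per-round Lagrangian reward is at least the expected reward $y$ of the targeted non-myopic arm --- the myopic reward terms $x$ cancel exactly --- and the instance-uniform guarantee then falls out of the time-expansion accounting; semi-regularity of the posteriors $F_s$ is what makes this one constraint convex. Relatedly, the paper's $q_s$ is a single scalar per signal (the \emph{payment}, not the probability, scales with the realized reward gap $x-y$); letting $q_s$ depend on the gap, as you propose, forfeits the instance-independence that makes $p^\ast(\sigSch)$ computable before any instance is seen.

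The second issue is in the achievability direction: you couple $\textbf{TES}$ with $\pi^\ast_\gamma$. The inserted myopic rounds mean the embedded policy is effectively run under the slowed-down discount $\eta = \frac{(1-p)\gamma}{1-p\gamma} < \gamma$, so the correct policy to simulate is $\opt_\eta$, not $\opt_\gamma$. One obtains $R^{(\gamma)}_\lambda(\textbf{TES}) \ge \frac{1-\eta}{1-\gamma}\,\opt_\eta$ and then needs the separate inequality $\opt_\eta \ge \frac{(1-\gamma)^2}{(1-\eta)^2}\,\opt_\gamma$ to land on $(1-p\gamma)\opt_\gamma$; simulating $\opt_\gamma$ only gives $\frac{1-\eta}{1-\gamma} R^{(\eta)}(\opt_\gamma)$, and $R^{(\eta)}(\opt_\gamma)$ admits no useful lower bound in terms of $\opt_\gamma$. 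Your tightness step is otherwise in the paper's spirit, though note the matching there is not ``the worst case for the general problem lies in this family'': it is the combination of a uniform lower bound from the first program with an exact computation, on one calibrated Diamonds instance (parameter $B$ chosen so the KKT stationarity conditions of the two programs coincide at an interior point), showing no policy beats $(1-p^\ast\gamma)\opt_\gamma$ there.
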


\begin{theorem} \label{theorem:budget_at_intro}
  Given a prior distribution $\CDF$ (again, satisfying some
    technical conditions), signaling scheme $\sigSch$ and budget
    constraint $b$,  
  there exists a policy \textbf{TES} whose
  \Tet reward is a $\min_\lambda \{ 1 - p^\ast(\sigSch) \lambda + \lambda b \} - \epsilon$
  approximation to $\opt_\gamma$, while spending at most $b
  \opt_\gamma$ in expectation.
  This bound is tight.
\end{theorem}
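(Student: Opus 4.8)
The plan is to obtain Theorem~\ref{theorem:budget_at_intro} from the Lagrangian guarantee of Theorem~\ref{theorem:main_at_intro} by Lagrangian duality, trading the hard budget constraint $C \le b\,\opt_\gamma$ for a one-parameter family of unconstrained Lagrangian problems indexed by the multiplier $\lambda$. Write $R$ and $C$ for the \Tet reward and payment of a policy. For any $\lambda \ge 0$ and any budget-feasible policy, $R \le (R - \lambda C) + \lambda b\,\opt_\gamma$, so on any fixed instance the budgeted optimum is at most $\min_\lambda\{\max_{\ALG}(R - \lambda C) + \lambda b\,\opt_\gamma\}$, where the inner maximum is over \emph{all} (not necessarily budget-feasible) policies. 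This both suggests the target factor $\min_\lambda\{1 - p^\ast(\sigSch)\lambda + \lambda b\}$ and splits the proof into (i) constructing a single budget-feasible \textbf{TES} policy whose reward matches that factor up to $\epsilon$, and (ii) exhibiting instances on which no budget-feasible policy does better.

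For (i), I would first invoke the convex-programming machinery behind Theorem~\ref{theorem:main_at_intro}, now read with a free Lagrange multiplier $\lambda$ rather than the fixed normalization, to produce for each $\lambda$ a signal-dependent time-expanded policy $\textbf{TES}_\lambda$ with $R_\lambda - \lambda C_\lambda \ge (1 - p^\ast(\sigSch)\lambda)\,\opt_\gamma$ on every instance. The obstacle is that $\textbf{TES}_\lambda$ need not respect the budget: $C_\lambda$ may exceed $b\,\opt_\gamma$. To repair this, note that as $\lambda$ grows the policy buys less exploration, so its expected payment $C_\lambda$ decreases from that of the most aggressive \textbf{TES} policy toward $0$, the latter coinciding with the payment of the purely myopic policy; when the budget is non-binding, $\textbf{TES}_0$ already works. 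Otherwise some $\lambda$ (possibly combined with a randomization between $\textbf{TES}_\lambda$ and a neighboring \textbf{TES} policy, or the myopic policy, to account for a discontinuity in $C_\lambda$) yields expected payment exactly $b\,\opt_\gamma$. Since $R_\lambda - \lambda C_\lambda \ge (1 - p^\ast(\sigSch)\lambda)\opt_\gamma$ is affine in $(R,C)$ it is preserved by this randomization, so the resulting policy has $R \ge (1 - p^\ast(\sigSch)\lambda + \lambda b)\,\opt_\gamma$; optimizing $\lambda$ gives the claimed factor. The additive $\epsilon$ absorbs the slack from discretizing or truncating the a priori infinite-dimensional convex program, from solving it to finite accuracy, and from hitting the budget $b\,\opt_\gamma$ only approximately.

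For (ii), I would instantiate the ``Diamonds in the Rough'' family used for the tightness half of Theorem~\ref{theorem:main_at_intro}. On such an instance the two convex programs — the one lower-bounding what \textbf{TES} achieves and the one upper-bounding $\max_{\ALG}(R - \lambda C)$ over all policies — share an optimum, so $\max_{\ALG}(R - \lambda C) = (1 - p^\ast(\sigSch)\lambda)\,\opt_\gamma$ exactly (possibly after passing to a limit along a sequence of such instances). Substituting into the weak-duality bound of the first paragraph gives $R \le \min_\lambda\{1 - p^\ast(\sigSch)\lambda + \lambda b\}\,\opt_\gamma$ for every budget-feasible policy, matching (i). I expect the main difficulty to be entirely in step (i): showing that the expected payment of $\textbf{TES}_\lambda$ varies monotonically and densely enough in $\lambda$ once the signal-dependent parameters $\ve q$ are fixed to admissible (e.g., integral) values, so that the right $\lambda$ and randomization exist and meet the budget, and confirming that the affine Lagrangian inequality genuinely survives the particular randomization used — this is where the interplay between the signal structure and the time-expansion must be handled carefully. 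The tightness direction is comparatively routine, being just weak Lagrangian duality on the Diamond instances.
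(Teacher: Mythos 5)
Your proposal matches the paper's own argument in both halves: the lower bound is obtained exactly as you describe, by sweeping the Lagrange multiplier $\lambda$ (equivalently, the myopic probability $p^\ast(\lambda)$), randomizing between a budget-respecting and a budget-exhausting policy in the family to hit $b\,\opt_\gamma$, and converting the affine Lagrangian guarantee into a reward guarantee; the upper bound is precisely weak Lagrangian duality applied on the Diamonds-in-the-Rough instance matched (via Lemma~\ref{lemma:imposs}) to each $\lambda$. The only cosmetic difference is that the paper phrases the budget-matching step as surjectivity of $\lambda \mapsto p^\ast(\lambda)$ plus a limit over a sequence $(p_n)$, rather than monotonicity of $C_\lambda$, but these are the same construction.
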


In a sense, these theorems quantify the power of partial information
$\sigSch$ about a \MUF distribution $\CDF$ in a single number
$1 - p^\ast(\sigSch) \gamma$,  
via the approximation guarantee that can be achieved using this signal.
If this number is meaningful, more informative signaling
schemes should allow for better approximation ratios. 
Specifically, a \emph{garbling} \cite{marschak1968economic} 
of a signaling scheme $\sigSch$ is another signaling scheme $\sigSch'$
whose output is computed solely from the \emph{output} of $\sigSch$,
without knowledge of the true underlying state of the world. 
(A formal definition is given in Section~\ref{sec:garbling}.)
In this sense, $\sigSch'$ cannot contain more information than
$\sigSch$. Then, we prove the following theorem in Section \ref{sec:garbling}.

%One well-known
%concept that allows comparison between two signaling schemes is
%called \emph{garbling}  \footnote{Formal
%    definition given in Appendix C.}. Intuitively, a garbled signaling
%  scheme contains more noise and less information, so a worse
%  approximation guarantee should be expected. We formally show this in
%  the next theorem (proof delegated to Appendix C). 

\begin{theorem} \label{theorem:garbling_at_intro}
Let $\sigSch$ and $\sigSch'$ be two signaling schemes such that
$\sigSch'$ is a garbling of $\sigSch$.
Then, $1 - p^\ast(\sigSch) \gamma \geq 1 - p^\ast(\sigSch') \gamma$. 
\end{theorem}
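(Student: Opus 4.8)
The goal is to establish $p^\ast(\sigSch) \le p^\ast(\sigSch')$, since $\gamma>0$ makes this equivalent to the claimed inequality $1 - p^\ast(\sigSch)\gamma \ge 1 - p^\ast(\sigSch')\gamma$. Recall that $\sigSch'$ being a garbling of $\sigSch$ means there is a state-independent stochastic kernel $w(\cdot \mid s)$ on signals such that drawing $s$ from $\sigSch$ (on a given \MUF) and then $s'$ from $w(\cdot\mid s)$ is distributionally identical to drawing $s'$ directly from $\sigSch'$. An immediate consequence is that every posterior $F(\cdot\mid s')$ induced by $\sigSch'$ is a convex combination of the posteriors $F(\cdot\mid s)$ induced by $\sigSch$, with mixture weights proportional to $\Pr[s]\,w(s'\mid s)$.

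The plan is a simulation (``mimicking'') argument. Given any policy $\ALG'$ that the principal could run while observing the signals of $\sigSch'$, build a policy $\ALG$ for a principal observing $\sigSch$: whenever $\ALG$ sees signal $s$, it privately samples $s' \sim w(\cdot\mid s)$ and then announces exactly the payment vector $\ALG'$ would announce on $s'$ given the same arm history. Because $w$ ignores the true \MUF, the joint law of (the arriving agent's \MUF, the announced payment vector) is identical under $(\sigSch,\ALG)$ and under $(\sigSch',\ALG')$; hence so is the joint law of (arm pulled, realized reward, payment) in every round, and therefore the \Tet reward $R$ and payment $C$ have the same distribution for \emph{every} MAB instance. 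Consequently $\ALG$ and $\ALG'$ achieve exactly the same worst-case Lagrangian ratio.

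Now instantiate $\ALG'$ as the optimal policy for $\sigSch'$: by Theorem~\ref{theorem:main_at_intro} its worst-case Lagrangian ratio equals $1 - p^\ast(\sigSch')\gamma$, and this is the best achievable with $\sigSch'$. The mimicking policy $\ALG$ attains the same ratio $1 - p^\ast(\sigSch')\gamma$ using only $\sigSch$. Since Theorem~\ref{theorem:main_at_intro} also asserts that $1 - p^\ast(\sigSch)\gamma$ is the \emph{best} ratio achievable with $\sigSch$ (the tightness / ``Diamonds in the Rough'' direction), we conclude $1 - p^\ast(\sigSch)\gamma \ge 1 - p^\ast(\sigSch')\gamma$, as desired.

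The step I expect to require the most care is the claim that the optimal ratio with a given signaling scheme is \emph{exactly} $1 - p^\ast(\cdot)\gamma$ over \emph{all} policies — i.e., that the mimicking policy $\ALG$, which need not itself be a signal-dependent time-expanded policy, cannot beat the TES bound. This is precisely the content of the matching lower bound in Theorem~\ref{theorem:main_at_intro}, so the argument hinges on invoking its tightness, not merely its achievability. A self-contained alternative avoids this by working directly with the convex program that defines $p^\ast$: transform an optimal solution of the program for $\sigSch'$ into a feasible solution of the program for $\sigSch$ of equal value, pushing the per-signal decision variables through the posterior mixture $F(\cdot\mid s') = \sum_s \lambda_{s,s'} F(\cdot\mid s)$ and using convexity of the program's objective (and the correct direction of its constraints) in those variables. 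There the main obstacle is verifying that this convex/linear structure makes the transformation value-\emph{preserving} rather than merely value-decreasing. Either way, the crux is the same: garbling only rearranges posteriors by mixing, and both the policy space and the convex program are closed under such mixing.
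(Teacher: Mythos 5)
Your argument is correct, but it takes a genuinely different route from the paper's. You argue operationally, in Blackwell style: any policy for the garbled scheme $\sigSch'$ can be mimicked under $\sigSch$ by privately resampling $s'\sim w(\cdot\mid s)$, with identical joint laws of types, payments, pulls and rewards on every instance; you then translate between optimal approximation ratios and the quantities $1-p^\ast(\cdot)\gamma$ via Theorem~\ref{theorem:main_at_intro}. The paper instead works entirely inside the convex program \eqref{program:main}: starting from the optimal solution $(\MP{s'})_{s'}$ for $\sigSch'$, it introduces conditional myopic probabilities $\MP{s,s'}$, uses the garbling identity $F_{s,s'}=F_s$ to show $F_s^{-1}(\MP{s,s'})=F_{s'}^{-1}(\MP{s'})$, defines $\MP{s}=\sum_{s'}\frac{\SP{s,s'}}{\SP{s}}\MP{s,s'}$, and verifies feasibility by Jensen's inequality applied to the convex function $\frac{1-x}{F_s^{-1}(x)}$ --- i.e., semi-regularity of the \emph{ungarbled} posteriors only. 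That is exactly the ``self-contained alternative'' you sketch in your last paragraph, and it is value-preserving because the objective is linear in the $\MP{s,s'}$. The trade-off: your route leans on the hardest result in the paper (the matching Diamonds-in-the-Rough upper bound), and as written you also invoke tightness for $\sigSch'$, which would require semi-regularity of the garbled posteriors $F_{s'}$; these are mixtures of the $F_s$ and are not automatically semi-regular. Fortunately only achievability is needed on the $\sigSch'$ side (Lemma~\ref{lem:TE} applies to any feasible solution), so the chain $1-p^\ast(\sigSch)\gamma \ge \mathrm{best}(\sigSch) \ge \mathrm{best}(\sigSch') \ge 1-p^\ast(\sigSch')\gamma$ closes using tightness only for $\sigSch$, where semi-regularity is the paper's standing assumption. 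The paper's program-level proof avoids the upper-bound machinery altogether and is correspondingly more robust, while yours makes the information-theoretic content of the statement more transparent.
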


In Section \ref{sec:algo}, we prove the lower bound (algorithm) part of Theorem
\ref{theorem:main_at_intro} by using the idea of time-expansion of policies
from \cite{original}. 
The main idea there is to randomize between pulls of the myopically
optimal arm and the arm pulled by an optimal policy.
Contrary to \cite{original}, we now have to carefully coordinate the
randomization and payment policies for the different possible signals.
This is accomplished by a convex program:
the program is formulated
predominantly as a heuristic, designed to cancel out myopic reward
terms in the objective, which are otherwise difficult to analyze. 

In Section \ref{sec:diamond}, we prove that this heuristic is surprisingly optimal.
For the matching upper bound, we use a class of instances called
Diamonds-in-the-Rough \cite{original},
and show that the optimal policy using payments can only achieve
a Lagrangian reward of $(1 - p^\ast(\phi) \gamma) \opt_\gamma$. 
We characterize the optimal policy with a different convex program.
Using the Karush-Kuhn-Tucker (KKT) conditions for optimality of
solutions to a convex program \cite{luenberger1968optimization}, 
we relate these two convex programs and show that their solution is
actually the same, thus proving worst-case optimality.

%\dkdelete{This is interesting as the convex program \text{CVX-CANCEL} (that cancels out myopic rewards) depends only on $\CDF$ and the signaling scheme, while the second one is instance-specific.
%Also, we came up with the \text{CVX-CANCEL} \eqref{program:main} only because we cannot handle myopic rewards analytically.
%Surprisingly, this desperate attempt turns out to be optimal and matched by convex program \text{CVX-DITR} \eqref{program:dp}, which actually maximizes a concrete objective on DitR instances.}

We remark on the computational considerations of the proposed
policies. Given an explicit representation of the distribution $\CDF$
and the signaling scheme, the convex program produces --- without
knowledge of any MAB instance --- a vector of target randomization
probabilities $\MPV$ between myopic and non-myopic play, as well as
the optimal $p^*$.
When an actual MAB instance is specified, the algorithm draws on the
optimum policy $\opt_\eta$ for a different time-discount factor $\eta$.
When an agent arrives, the principal needs to identify the myopically
optimal arm and the arm pulled by $\opt_\eta$ under a subset of
the revealed information. Then, using the desired randomization, she
can easily compute the required payments.

Thus, the only computationally challenging part is to compute the arm
pulled by the optimum policy $\opt_\eta$. 
This can be accomplished by the well-known Gittins Index policy
\cite{GiJo74}, which computes an index for each arm $i$ based on the
posterior distribution of rewards (or state of the Markov chain), and
then chooses the arm with largest index. Thus, the policy avoids a
combinatorial explosion in the number of arms, but computing the index
of an arm can be non-trivial.

For the budgeted version, one first needs to find a suitable
Lagrangian multiplier $\lambda$ ensuring that the expected payment
respects the 
budget, and then find the optimum policy for that $\lambda$.
Finding $\lambda$ requires analyzing the specific MAB instance
and thus the optimal probabilities $\MPV$ are not independent of the MAB
instance any more.
Theorem~\ref{theorem:budget_at_intro} is proved in Section~\ref{sec:budget}.

Throughout this article and \cite{original}, it was assumed that the principal can observe the action that the agent took, and base the payment upon it. In many settings, the principal may only observe the arm \emph{reward} that the agent obtained, but not the actual arm pulled. For instance, when the agent is a scientist, a funding agency may be unable to tell whether a scientific result was achieved by deliberately following an ambitious agenda, or by stumbling upon it. In terms of the model, an inability to observe the action means that the payment scheme can be based only on the agent's observed reward. In Appendix B, we show that in this kind of scenario, it is impossible to incentivize the agent to pull optimal arms.

\subsection*{Related Work}

The MAB problem was first proposed by \citet{Ro52} as a model for
sequential experiments design. 
Under the Bayesian model with time-discounted rewards, the problem is
solved optimally by the Gittins Index policy \cite{GiJo74};
a further discussion is given in
\cite{Wh80,KaVe87,Gi89,GittinsGlazebrookWeber2011}. 

An alternative objective, often pursued in the CS literature, is
regret-minimization, as initiated by \citet{LaRo85} within a Bayesian arm
reward setting. \citet{AuCeFrSc95,Auer2003} gave an algorithm with
regret bound for adversarial settings.

There is a rich literature that considers MAB problems when
incentive issues arise. A common model is that a principal has to
hire workers to pull arms, and both sides want to maximize their own
utility. \citet{singla:krause:truthful-crowdsourcing} gave a truthful
posted price mechanism. 
In \cite{wisdom-of-the-crowd,bandit-exploration}, the reward 
history is only known by the principal, and she can incentivize workers
by disclosing limited information about the reward history to the
worker. 
\citet{HoSlVa2014} used the MAB framework as a tool to design optimal
contracts for agents with moral hazard. Using the technique of
discretization, they achieved sublinear regret for the net
utility (reward minus payment) over the time horizon.
For a review of more work in the area, see the position paper
\cite{slivkins:wortman:position-paper}.

\citet{bergemann:valimaki:learning-pricing} and \citet{bolton:harris:experimentation}
consider incentive issues from another viewpoint. 
In their model, there are only two arms corresponding to two sellers
and one or multiple buyers. 
Both sellers can set a price for a single pull of their own arm. 
After each pull, sellers can adjust their prices, and buyers can
change their choices, which will eventually lead to an (efficient)
equilibrium. 

MAB problems with additional constraints and structure are also
studied in various settings.
\citet{guha:munagala:budgeted-learning,guha:munagala:bayesian-bandit,guha:munagala:switching-costs}
study a series of models with different constraints. 
In \cite{KlSlUp2008}, for arms defined within a metric space and satisfying
a Lipschitz condition, the authors found an optimal algorithm that
matches the best possible regret ratio. \citet{ratio-index} introduced
an index policy, similar to the Gittins index policy,  with constant
approximation for the time horizon constrained MAB
problem. \citet{bks:bandits-knapsacks} investigated MAB problems with
general multi-dimensional constraints. 

The role of information in markets was introduced formally by
\citet{stigler1962information}. Subsequently,
\citet{akerlof1970market,spence1973job} began the study of effects of
additional information, or signals, on the market. 
For exogenous signaling schemes,
\citet{hirshleifer1971private,bassan2003positive,lehrer2010signaling}
explored the positive and negative effects on the equilibrium of
different game settings.  

\section{Preliminaries}
\label{sec:preliminaries}

\subsection{Multi-armed Bandits}
In a Bayesian multi-armed bandits (MAB) instance, we are
given $\armnum$ arms, each of which evolves independently as a known
Markov chain whenever pulled.
In each round\footnote{We use the terms ``round'' and ``time''
interchangeably.} $t = 0, 1, 2, \cdots$, an algorithm can only 
pull one of the arms; the pulled arm will generate a random reward
and then transition to a new state randomly according to the
  known Markov chain.

Formally, let $\reward{t}{i}$ be the random reward generated by arm
$i$ if it is pulled at time $t$.  
Let $S_{0,i}$ be the initial state of the Markov chain of the $i$-th
arm and $S_{t, i}$ the state of arm $i$ in round $t$. 
The distribution of $\reward{t}{i}$ is determined 
by $S_{t,i}$.
Then, an MAB instance consists of $\armnum$ independent Markov chains
and their initial states $\ve{S}_0 = (S_{0,i})_{i=1}^\armnum$.

In this article, we are only interested in cases where the
reward sequence for any single arm forms a \emph{Martingale}, i.e.,
\begin{align*}
\ExpectC{\ExpectC{\reward{t+1}{i}}{S_{t+1,i}}}{S_{t,i}}
& = \ExpectC{\reward{t}{i}}{S_{t,i}}.
\end{align*}
A \emph{policy} $\ALG$ is an algorithm that decides which arm to pull
in round $t$ based on the history of observations and the current
state of all arms. 
Formally, a policy is a (randomized) mapping 
$\ALG: (t, \mathcal{H}_t, \ve{S}_t) \mapsto i_t$, where
$\ve{S}_t = (S_{t,i})_{i=1}^N$ is the vector of arms' states,
$\mathcal{H}_t$ is the history up to time $t$,
and $i_t$ is the selected arm.

To evaluate the performance of a policy $\ALG$, we use standard time-discounting \cite{GiJo74}.
Let $\gamma \in (0,1)$ be the time discount factor that measures the
relative importance between future rewards and present rewards.
If a policy $\ALG$ receives an immediate reward of
$\reward{t}{i_t}$ in round $t$, it is discounted by a factor of
$\gamma^t$ and then added to the total reward. 
The \Tet reward can thus be defined as:
\begin{align*}
  \DR(\ALG)
& =\Expect[\ALG]{\sum_{t=0}^\infty \gamma^t \reward{t}{i_t}},
\end{align*}
where $\Expect[\ALG]{ \: \cdot \:}$ denotes the expectation
conditioned on the policy $\ALG$ being followed and the
information it obtained, as in \cite{original}.

Given a time discount factor $\gamma$,
we denote the optimal policy for that time discount (and also --- in a
slight overload of notation --- its \Tet reward) by $\opt_\gamma$.

We call the arm with the maximum \emph{immediate} expected reward
$\ExpectC{\reward{t}{i}}{\ve{S}_t}$ the \emph{myopic arm}. 
A policy is called myopic if it pulls the myopic arm in each round. 
The myopic policy only \emph{exploits} with no \emph{exploration},
so it is inferior to the optimum policy in general,
especially when the time-discount factor $\gamma$ is close to $1$.

\subsection{Selfish Agents}
We label each agent by the time $t$ when he arrives.  
The Markov chain state $\ve{S}_t$ and
$\ExpectC{\reward{t}{i}}{\ve{S}_t}$ are publicly known by both the
agents and the principal.

In round $t$, the principal can offer a payment $\payment{t}{i}$
for pulling arm $i$.
Incentivized by these extra payments $\payment{t}{i}$,
agent $t$ with \MUF $\MF[t]$ now pulls the arm maximizing 
$\ExpectC{\reward{t}{i}}{S_{t, i}} + \Mf[t]{\payment{t}{i}}$.
If the agent pulls arm $i_t$ at time $t$, then the principal's reward
from this pull is $\ExpectC{\reward{t}{i_t}}{S_{t, i_t}}$, 
and the agent's utility is 
$\ExpectC{\reward{t}{i_t}}{S_{t, i_t}} + \Mf[t]{\payment{t}{i_t}}$.
(\cite{original} studied the special case where $\mu_t(x) = x$ for all $x \ge 0$ and $t$.)

We assume a publicly known prior 
(whose distribution is denoted by $\CDF$ \footnote{When the \MUFs are always linear, $\CDF$ is also the cumulative distribution function of the slope.}) 
over the \MUFs $\MF$.
When a new agent arrives, his \MUF is drawn from $\CDF$ independently
of prior draws.

As discussed in the introduction, we show in
Section~\ref{sec:concave-linear} that the worst-case analysis can
without loss of generality focus on the case in which all
\MUF are linear, i.e., of the form $\MF[t](x) = r_t \cdot x$ for some
$r_t$. Therefore, apart from that section itself, we will exclusively
focus on the case of linear \MUFs.
We then identify the distribution $\CDF$ with a distribution over the
values $r_t$, which we call the \emph{\Umr} of agent $t$.
For the remainder of this article, all distributions and signals are
assumed to be over \Umrs instead of \MUFs.

\begin{lemma}
\label{lemma:concave-linear}
For every distribution $F$ over \MUFs, 
there exists another distribution $F'$ over \emph{linear} \MUFs (i.e., over \Umrs) 
such that
the optimal approximation ratio is the same for $F$ and $F'$.
\end{lemma}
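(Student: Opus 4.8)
The plan is to establish the two directions of the claim separately: that the optimal approximation ratio achievable under $F'$ is no larger than under $F$ (so linear \MUFs are a worst case), and that for suitable MAB instances it is no smaller (so nothing is lost by the reduction). The construction of $F'$ from $F$ should be the natural one suggested by the footnote in the introduction: when the true \MUF is $\mu$ and the principal wants to induce a payment-to-utility behavior, the only thing that matters operationally is the function $x \mapsto \mu(x)$ restricted to the relevant range, and since $\mu$ is concave and monotone with $\mu(0) = 0$, it lies below its own tangent line at the origin (or below any chord). I would define the conversion ratio associated to $\mu$ by its ``hardest'' slope — concretely, for each payment level that the policy might use, a concave $\mu$ converts money to utility at least as efficiently as the linear function with the smallest relevant slope; replacing $\mu$ by that linear function makes the agent weakly harder to incentivize. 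Thus $F'$ is the pushforward of $F$ under the map $\mu \mapsto r(\mu)$ where $r(\mu)$ is this appropriately chosen slope, and one argues that any policy performing well against $F'$ performs at least as well against $F$, and conversely any policy against $F$ can be simulated against $F'$ at no loss.

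More carefully, for the first direction I would argue as follows. Fix any policy (a signal-dependent time-expanded payment policy of the form the rest of the paper considers) and any MAB instance. Under distribution $F$ the agent with \MUF $\mu$, offered payment $c$ on a non-myopic arm, pulls it iff $\mu(c) \ge \Delta$ where $\Delta$ is the myopic reward gap; under $F'$ the corresponding linear agent with ratio $r(\mu)$ pulls it iff $r(\mu) c \ge \Delta$. Choosing $r(\mu)$ so that $\mu(c) \ge r(\mu) c$ for all $c$ in the range the policy uses (possible by concavity, taking $r(\mu)$ to be, e.g., the slope of $\mu$ at the largest payment the policy might offer, which bounds $\mu$ from below on $[0, c_{\max}]$ by concavity) ensures the linear agent is \emph{weakly harder} to convince for every fixed payment, hence the principal's reward under $F'$ is a lower bound on her reward under $F$ for the \emph{same} policy. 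Taking the policy optimal for $F'$ shows the optimal ratio for $F$ is at least that for $F'$. The reverse inequality — that the optimal ratio for $F$ is at most that for $F'$ — is the direction that requires the ``certain MAB instances'' caveat and the paper's deeper machinery: here I would invoke the Diamonds-in-the-Rough construction and the convex-programming characterization developed later in the article, showing that on those instances the principal's optimal behavior against $F$ cannot beat what the linear surrogate achieves, because the KKT-characterized optimum only ever uses payments at a single critical level where the concave and linear functions can be made to coincide.

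The main obstacle, and the reason the lemma is stated as a reduction rather than a pointwise equivalence, is making the choice of $r(\mu)$ uniform across all MAB instances and all signals simultaneously: the ``relevant range'' of payments $[0, c_{\max}]$ depends on the instance (through the reward gaps $\Delta$) and on which arm the optimal policy pulls, so a single distribution $F'$ must work against all of them. I expect the resolution is that one does \emph{not} need a single $r(\mu)$ that works for every instance for the easy direction — it suffices that for each instance there \emph{exists} a linear lower bound, and then one observes that the family of candidate conversion ratios can be handled by noting the optimal policies only care about $\mu$ through a one-dimensional sufficient statistic. For the hard direction, the obstacle is genuinely that one needs the structural results of Sections~\ref{sec:algo} and~\ref{sec:diamond}: the claim that concavity buys the principal nothing is false for arbitrary instances (a concave $\mu$ really can be exploited at small payments) and only becomes true in the worst case, which is exactly why the lemma speaks of the \emph{optimal approximation ratio} (a worst-case quantity over instances) rather than per-instance performance. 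I would therefore defer the hard direction's proof to after the Diamonds-in-the-Rough analysis and present the easy direction in full here, flagging the dependency explicitly.
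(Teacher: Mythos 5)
Your overall two-direction structure (linear \MUFs are weakly harder to incentivize, so they give a lower bound; and on suitable instances they are no easier, so the bound is matched) is exactly the paper's, and your easy direction is essentially the paper's argument: since a concave $\mu$ dominates a linear minorant, one couples the agents' choices and shows that any policy designed for the linear surrogate can be simulated against $F$ with weakly smaller expected payments. But there are two concrete gaps. First, your choice of \Umr --- the slope of $\mu$ at the largest payment the policy might offer --- is instance-dependent, and you flag but never resolve the resulting obstacle of producing a \emph{single} distribution $F'$. The paper resolves it by taking the \emph{asymptotic} slope $r_\mu = \lim_{x\to\infty}\mu(x)/x$, which by concavity exists, is finite, and satisfies $r_\mu x \le \mu(x)$ for \emph{all} $x$; this makes the coupling work uniformly over all MAB instances and signals with one fixed pushforward distribution.

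Second, and more importantly, your sketch of the hard direction points in the wrong direction. The paper does not argue that the KKT-characterized optimum ``only uses payments at a single critical level where the concave and linear functions can be made to coincide'' --- a concave $\mu$ and its asymptotic linear minorant generally coincide nowhere except at the origin. Instead, the paper sandwiches $\mu$ between $r_\mu x$ and the affine majorant $r_\mu x + d_\mu$ (where $d_\mu$ is the minimal intercept), so the principal's worst-case utility under $F$ is at most her utility under the corresponding affine distribution; it then \emph{scales the rewards of the Diamonds-in-the-Rough instance by a large constant $C$}, which forces the optimal payments $c_s$ to grow without bound, so that the additive intercepts $d_\mu$ become negligible relative to $r_\mu c_s$ and the affine optimum converges to the linear optimum up to a $1+O(\epsilon)$ factor (with some care restricting to a finite set of signals carrying a $1-\epsilon$ fraction of the utility, and using continuity of the linear slope distribution). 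This scaling argument --- that only the asymptotic behavior of a concave \MUF matters once rewards are large --- is the actual mechanism by which concavity buys the principal nothing in the worst case, and it is missing from your proposal.
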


The definition of \emph{optimal} approximation ratio can be found in Definition \ref{def:approx}.
 
\subsection{Signaling Scheme}
\label{sec:prelim-signaling}
We assume the existence of an exogenous signaling scheme, i.e., the
signaling scheme is given as input\footnote{This is in contrast to the
  goal of \emph{designing} a signaling scheme with certain properties.}.
When an agent with \Umr $r$ arrives, a signal
$\sig \in \sigSpace$ correlated to $r$ is revealed to the principal
according to the signaling scheme $\sigSch$; $\sigSpace$ is called the
\emph{signal space}, and we assume that it is countable.
When the signal space is uncountable, defining the posterior
probability density requires the use of Radon-Nikodym
derivatives, and raises computational and representational
issues. In Section~\ref{sec:full-information}, we consider what is
perhaps the most interesting special case: that the signal reveals the
precise value of $r$ to the principal.

%TODO: Maybe DELETE this paragraph, ask david.
%Informally, we can regard the signal $\sig$ as partial information
%correlated to the agent's true \Umr $r$. For example, if $\sig$ is the
%historical behavior of the agent, we can use this to better estimate the value
%of $r$.

Formally,
let $\sigFun$ be the probability that signal $\sig$ is revealed when
the agent's \Umr is $r$.
In this way, the signals are statistically correlated with the \Umr
$r$, and thus each signal reveals partial information about the true
value of $r$. 
After receiving the signal $\sig$, the principal updates her
posterior belief of the agent's \Umr according to Bayes
  Law\footnote{In Equation~\ref{eqn:bayes-update}, if the support of
  $r$ is finite, $f(r)$ can be replaced by the probability mass
  function.}: 
\begin{align} \label{eqn:bayes-update}
  f_\sig(r) & = \frac{ \sigFun f(r)}{\SP{s}},
\end{align}
where $f_\sig(r)$ is the PDF of the posterior belief and 
$\SP{s} = \int_0^\infty \sigFun f(r) \mathrm{d} r$ is the probability that
signal $s$ is observed.
For each signal $\sig \in \sigSpace$, 
let $F_s$ be the CDF of the corresponding posterior belief. 
As a special case, if the signaling scheme reveals no information,
then $F_s = \CDF$.

%\dkdelete{When the signal space is uncountable,
%the posterior-updating formula $f_\sig(r) = \frac{ \sigFun f(r)}{\ve{p}_s}$ is not well-defined as probability mass functio%n (PMF) cannot be defined on an uncoutable space. 
%Usually, the PMF can be replaced with a probability density function. But a signal is an abstract object, and does not have to lie in a Euclidian space. A proper generalization would require the use of Radon-Nikodym derivatives. Thus for uncoutable signal spaces, we only consider the special case where $s=r$, i.e., full information revelation.}

Throughout, we focus on the case when the posterior distributions
$F_s$ satisfies a condition called \emph{semi-regularity}
(this is the technical condition mentioned in
  Theorems~\ref{theorem:main_at_intro} and \ref{theorem:budget_at_intro}), 
which is defined as follows: 

\begin{definition}[Semi-Regularity]
  A distribution with CDF $G$ is called \emph{\semiReg} if 
  $\frac{1 - x}{\inv{G}(x)}$ is convex.
  (When $G$ is not invertible, we define 
   $\inv{G}(x) := \sup \{t \ge 0: G(t) \le x\}$.)
\end{definition}

Semi-regularity is a generalization of a well-known condition called
regularity, defined as follows.

\begin{definition}[Regularity]
A distribution with CDF $G$ is \emph{regular} 
if $\inv{G}(x) \cdot (1 - x)$ is concave.
\end{definition}

Lemma~\ref{lemma:semireg}, proved in
Appendix~\ref{sec:missing-proofs}, establishes that
regularity implies semi-regularity, and hence our result is more
general.

\begin{lemma} \label{lemma:semireg}
Let $G$ be a CDF. If $\inv{G}(x) \cdot (1 - x)$ is concave, then
$\frac{1-x}{\inv{G}(x)}$ is convex.
In particular, regularity implies semi-regularity.
\end{lemma}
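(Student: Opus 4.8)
The plan is to turn the hypothesis and the conclusion into the two sides of a single, standard convexity inequality. The key rewriting is
\[
  \psi(x) \;:=\; \frac{1-x}{\inv{G}(x)} \;=\; \frac{(1-x)^2}{\inv{G}(x)\,(1-x)} \;=\; \frac{(1-x)^2}{\phi(x)},
  \qquad \text{where } \phi(x) := \inv{G}(x)\,(1-x),
\]
valid wherever $\inv{G}(x)>0$. Note that $\phi$ is exactly the function assumed concave, and it is nonnegative (a product of nonnegative factors). So the lemma reduces to: if $\phi$ is concave and positive on an interval, then $x \mapsto (1-x)^2/\phi(x)$ is convex.

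First I would recall the standard fact that the ``quadratic-over-linear'' map $q(s,w) = s^2/w$ is jointly convex on $\R \times \R_{>0}$ (its Hessian is positive semidefinite). Then, for $x_1,x_2$ in the domain, $\lambda \in [0,1]$, and $x_\lambda := \lambda x_1 + (1-\lambda) x_2$, I would apply the joint convexity of $q$ at the two points $(1-x_1,\phi(x_1))$ and $(1-x_2,\phi(x_2))$, using that $1-x_\lambda$ is precisely the $\lambda$-combination of $1-x_1$ and $1-x_2$:
\[
  \frac{(1-x_\lambda)^2}{\lambda\,\phi(x_1) + (1-\lambda)\,\phi(x_2)} \;\le\; \lambda\,\frac{(1-x_1)^2}{\phi(x_1)} + (1-\lambda)\,\frac{(1-x_2)^2}{\phi(x_2)} \;=\; \lambda\,\psi(x_1) + (1-\lambda)\,\psi(x_2).
\]
Finally I would use concavity of $\phi$, namely $\phi(x_\lambda) \ge \lambda\,\phi(x_1) + (1-\lambda)\,\phi(x_2) > 0$, which enlarges the denominator and hence gives $\psi(x_\lambda) = (1-x_\lambda)^2/\phi(x_\lambda) \le (1-x_\lambda)^2 / \big(\lambda\phi(x_1)+(1-\lambda)\phi(x_2)\big)$. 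Chaining the two displayed inequalities yields $\psi(x_\lambda) \le \lambda\psi(x_1) + (1-\lambda)\psi(x_2)$, i.e.\ $\psi$ is convex. The ``in particular'' clause is then immediate, since regularity is literally the hypothesis ``$\inv{G}(x)(1-x)$ concave'' and semi-regularity is literally the conclusion ``$\frac{1-x}{\inv{G}(x)}$ convex.''

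I do not anticipate a serious obstacle. The only mild points of care are: (i) spotting the rewriting $\frac{1-x}{\inv{G}} = \frac{(1-x)^2}{\inv{G}(1-x)}$, which is what makes the hypothesis and conclusion fit the quadratic-over-linear inequality and lets the argument go through with no smoothness assumptions on $\inv{G}$; and (ii) the degenerate case where $\inv{G}(x) = 0$ on an initial interval, where $\psi \equiv +\infty$ there (consistently with $\phi \equiv 0$), so the statement should be read as convexity on the interval where $\inv{G} > 0$ --- which, since $\inv{G}$ is nondecreasing, is an interval, and on which the argument above applies verbatim. As an optional sanity check for the smooth case one can instead verify the pointwise identity $u^3 (t^2/u)'' = 2(u - t u')^2 - t^2 u u''$ with $u = \phi$ and $t = 1-x$; since $u>0$ and $u'' \le 0$, the right-hand side is nonnegative, re-proving convexity. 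I would still present the perspective-function argument as the main proof because it requires no differentiability of $G$.
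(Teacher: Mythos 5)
Your proof is correct, and it takes a genuinely different route from the paper's. The paper argues directly at three points $x < z=\alpha x+(1-\alpha)y < y$: it expands the concavity inequality for $\inv{G}(x)(1-x)$, divides through by $\inv{G}(z)$, and then uses \emph{monotonicity} of $\inv{G}$ to replace the denominator $\inv{G}(z)$ by $\inv{G}(x)$ in the negative term and by $\inv{G}(y)$ in the positive term, which after rearranging is exactly the convexity inequality for $\frac{1-x}{\inv{G}(x)}$. You instead write $\frac{1-x}{\inv{G}(x)}=\frac{(1-x)^2}{\phi(x)}$ with $\phi(x)=\inv{G}(x)(1-x)$ and invoke joint convexity of the quadratic-over-linear map $(s,w)\mapsto s^2/w$ together with monotonicity of $s^2/w$ in $w$ and concavity of $\phi$. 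Both arguments are elementary and require no differentiability; yours has the small advantages of not using monotonicity of $\inv{G}$ at all (only its nonnegativity) and of isolating a reusable general fact --- if $\ell$ is affine and $\phi$ is concave and positive, then $\ell^2/\phi$ is convex --- while the paper's is self-contained modulo nothing beyond the definition of concavity. Both proofs share the same caveat, which you explicitly flag and the paper leaves implicit: the claim is about the region where $\inv{G}>0$, since both arguments divide by $\inv{G}$.
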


\subsection{Policies with partial information}
The previous definition we gave of a policy did not take
information about the agent's type into account. In light of this
additional information, we give a refined definition.
In addition to deciding which arm to pull, a policy may decide the
payment to offer the agents based on the partial information obtained
from signals.
Formally, a policy is now a randomized mapping
$\ALG: (t, \mathcal{H}_t, \ve{S}_t, s_t) \mapsto \ve{c}_t$,
where $s_t$ is the signal revealed in round $t$, and
$c_{t, i}$ is the extra payment offered for pulling arm $i$ in round $t$.
After $\ve{c}_t$ is announced, a myopic agent with \Umr $r$ will
pull the arm $i_t$ that maximizes his own utility, causing that arm to
transition according to the underlying Markov chain.

The expected payment of the principal is also
time-discounted by the same\footnote{A natural justification
    for having the same discount factor is that after each round, with
    probability $1-\gamma$, the game ends.} factor $\gamma$. 
When $\ALG$ is implemented, the total expected payment will be
\begin{align*}
  \DC(\ALG) &=\Expect[\ALG]{\sum_{t=0}^\infty \gamma^t\payment{t}{i_t}}.
\end{align*}

%\dkdelete{It is natural to consider the principal as subject to a budget constraint
%(otherwise, the problem is trivial as the principal can offer unbounded payments
%to exactly implement the Gittins policy).}
The principal faces two conflicting objectives: 
(a) maximizing the \Tet reward $\DR(\ALG)$; 
(b) minimizing the \Tet payment $\DC(\ALG)$;
There are two natural ways of combining the two objectives: 
via a Lagrangian multiplier, or by optimizing one subject to a
constraint on the other.

In the \lagobj, the principal wishes to maximize 
$\DR(\ALG) - \lambda \DC(\ALG)$ for some constant $\lambda \in (0,1)$.
Here, $\lambda$ can also be regarded as the \Umr for the principal herself.
Alternatively, the principal may be constrained by a budget $\BUDGET$,
and want to maximize $\DR(\ALG)$ subject to the constraint that
$\DC(\ALG) \le \BUDGET$.
%\dkdelete{In both cases, we show matching upper and lower bounds.}

\subsection{Approximation Framework}
\citet{original} performed a worst-case analysis over MAB instances
and studied the (worst-case) approximation ratio.
In this article, we similarly perform a worst-case analysis with respect
to the MAB instances, while keeping an exogenous signaling scheme
$\sigSch$ and the prior $\CDF$ fixed.

\begin{definition}
\label{def:approx}
For the \lagobj of the problem, a policy $\ALG$
has \emph{approximation ratio} $\alpha$ 
under the signaling scheme $\sigSch$ and prior $\CDF$
if for all MAB instances\footnote{Note that all $\DR(\ALG)$,
$\DC(\ALG)$ and $\opt_\gamma$ depend on the MAB instance.},
\begin{align}
\DR(\ALG) - \lambda \DC(\ALG) & \ge \alpha \cdot \opt_\gamma.
\end{align}

We say that $\alpha$ is the \emph{optimal} approximation ratio if there exists a policy with approximation ratio $\alpha$ and no policies have a better approximation ratio.

Likewise, for the budgeted version, a policy has approximation ratio
$\alpha$ respecting budget $\BUDGET$ if
\begin{align}
\DR(\ALG) & \ge \alpha \cdot \opt_\gamma 
&  \DC(\ALG) & \le \BUDGET \cdot \opt_\gamma.
\end{align}
\end{definition}

%\dkdelete{In both versions, we show matching upper and lower bounds.}

\section{Lower bound: Time-Expanded Algorithm}
\label{sec:algo}
In this section, we focus on the Lagrangian objective, and
analyze \emph{time-expanded algorithms}, in a generalization of the
originally proposed notion of \cite{original}.
In a time-expanded algorithm, the principal
randomizes between offering the agents no reward (having them play
myopically), and offering the reward necessary to incentivize the
agent to play the arm $i_t^\ast$ according to a particular algorithm $\ALG$.
In the presence of signals, the randomization probabilities for the
different signals need to be chosen and optimized carefully, which is
the main algorithmic contribution in this section.
On the other hand, notice that if the posterior distribution of the \Umr
conditioned on the signal is continuous, then the randomness in the
user's type can instead be used as a randomization device, and the
principal may be able to offer incentives deterministically.

%, we use the idea of \emph{time-expansion}, but
%extend the concept to non-uniform probabilities based on the received
%signals, and optimize those probabilities carefully. 
% We will show a simple transformation of a stationary deterministic
% policy $\mathcal{A}$, which we call \emph{time-expansion} (following
% \cite{original}).

More formally, \citet{original} define a time-expanded version
$\TEold{p}{\ALG}$ of a policy $\ALG$, parameterized by a probability
$p$, as  
\CDFunction{\TEold{p}{A}(t)}{\ALG(\hat{\ve{S}}_t)}{\text{if }\Zed_t
= 1}{\argmax_i \ExpectC{\reward{t}{i}}{\ve{S}_t},}
where $\Zed_t$ is a Bernoulli$(1-p)$ variable. 
$\hat{\ve{S}}_t$ is the arm status that couples the execution of the
time-expanded policy and the policy $\ALG$, which we will formally
define later.
When $\Zed_t = 1$, with the uniform agents defined in \cite{original},
in order to incentivize an agent to pull the
non-myopic arm, the principal has to offer a payment of 
$\max_i\ExpectC{\reward{t}{i}}{\ve{S}_t}-\ExpectC{\reward{t}{i^\ast_t}}{\ve{S}_t}$,
where $i_t^\ast= \ALG(\hat{\ve{S}}_{t})$.

A time-expanded version of policy $\ALG$ with signaling scheme $\sigSch$
works as follows: at time $t$, conditioned on the received signal $s$,
the principal probabilistically offers a payment of 
$\payment{t}{i_t^\ast}$
if the agent $t$ pulls the arm $i_t^\ast$.
Notice that only two options might maximize the agent's utility:
pulling the myopic arm, or pulling the arm $i_t^\ast$ and getting the
payment. 
There is a direct correspondence between the payment
$\payment{t}{i_t^\ast}$ and the probability $\MP{s}$ that the 
agent chooses to pull the myopic arm.
We will describe this correspondence below.

First, though, we discuss \emph{which} arm $i_t^\ast$ the principal is
trying to incentivize the agent to pull.
As in \cite{original}, it is necessary for the analysis that the
execution of $\ALG$ and of its time-expanded version can be coupled.
To achieve this, in order to evaluate which arm should be pulled
next by $\ALG$, the principal must only take the information
obtained from the \emph{non-myopic pulls} into consideration.
Formally, we define $\hat{\ve{S}}_t$ as follows: 

Define the random variable
\CDFunction{\Zed_{t}}{0}{\text{agent $t$ pulls the myopic
    arm}}{1} 
and $X_{t,i} = 1$ if arm $i$ is pulled at time $t$ and $0$ otherwise.
Notice that
$\Zed_t$ is a Bernoulli variable, 
and $\Prob{\Zed_t=0}$ depends on the received signal $s$ and the
payment offered by the principal. 
Let $N_{t,i}=\sum_{0}^{t-1} \Zed_{t} X_{t,i}$ be the number of
non-myopic pulls of arm $i$ before time $t$.
Using this notation, we define $\hat{S}_{t,i}$ to be the state of
the Markov chain of arm $i$ after the first\footnote{As in
\cite{original}, in order to facilitate the analysis, this may include
myopic and non-myopic pulls of arm $i$. For instance, if arm $1$ was
pulled as non-myopic arm at times $1$ and $6$, and a myopic pull of arm $1$
occurred at time $3$, then we would use the state of arm $1$ after the pulls at times $1$ and $3$.} $N_{t,i}$ pulls in the
execution history of the time-expanded policy, and 
$\hat{\ve{S}}_{t} = (\hat{S}_{t,i})_i$.

%Notice that we consider both the myopic and
% non-myopic pulls of arm $i$ induced from $\Zed_{0:t-1} = (\Zed_0,
% \Zed_1, \dots, \Zed_{t-1})$.

% \dkcomment{In the notation for payments, we don't seem to have the
%   signal appearing. Also, this is a rearranged merged version of two
%   previous paragraphs.}
Let $F_s$ be the posterior CDF of the agent's \Umr.
Let $x=\max_i \ExpectC{\reward{t}{i}}{\ve{S}_t}$ be the
expected reward of the myopic arm and
$y=\ExpectC{\reward{t}{i_t^\ast}}{\ve{S}_t}$ be the expected
reward of arm $i_t^\ast$.
If the principal offers a payment of $\payment{t}{i_t^\ast}$, 
then agents with \Umr $r < \frac{x-y}{\payment{t}{i_t^\ast}}$ will still
choose the myopic arm. 
Assuming that agents break ties in favor of the principal,
when $r \geq \frac{x-y}{\payment{t}{i_t^\ast}}$, they will
prefer to pull arm $i_t^\ast$. 

Conversely, in order to achieve a probability of $\MP{s}$ for pulling
the myopic arm, the principal can choose a payment of 
$\payment{t}{i_t^\ast}=\inf \{c|F_s(\frac{x-y}{c}) \le \MP{s}\}$. 
If $F_s$ is continuous at $\frac{x-y}{\payment{t}{i_t^\ast}}$,
the probability of myopic play (conditioned on the signal) is exactly
$\MP{s}$, and $\payment{t}{i_t^\ast}$ is the smallest payment
achieving this probability.
If there is a discontinuity at $\frac{x-y}{\payment{t}{i_t^\ast}}$,
then for every $\epsilon > 0$, the probability of myopic play
with payment $\payment{t}{i_t^\ast} + \epsilon$ is less than
$\MP{s}$. In that case, the principal offers a
payment of $\payment{t}{i_t^\ast}$ with probability
$\frac{1-\MP{s}}{1-F_s((x-y)/\payment{t}{i_t^\ast})}$ 
for pulling arm $i_t^\ast$, and no payment otherwise.
Now, the probability of a myopic pull will again be exactly
$\MP{s}$. 

To express the payment more concisely, we write
$\inv{F_s}(\MP{s}) = \sup \{r|F_s(r)\le \MP{s}\}$.
Then, the payment can be expressed as
$\payment{t}{i_t^\ast} = \frac{x-y}{\inv{F_s}(\MP{s})}$. 
In particular, when $F_s$ is continuous,
$\payment{t}{i_t^\ast}$ will be offered deterministically;
otherwise, the principal randomizes.

In summary, we have shown a one-to-one mapping between
desired probabilities $\MP{s}$ for myopic play, and 
payments (and possibly probabilities, in the case of discontinuities)
for achieving the $\MP{s}$.
We write $\MPV = (\MP{s})_{s \in \sigSpace}$ for the vector of all
probabilities.
The unconditional (prior) probability of playing myopically is
$\sum_{s \in \sigSpace} \SP{s} \MP{s}$,
and the expected payment 
$(x-y) \cdot \sum_{s \in \sigSpace} \SP{s} \frac{1 - \MP{s}}{\inv{F_s}(\MP{s})}$.

We can now summarize the argument above and give a formal
definition of the time-expanded version of policy $\ALG$ with
signaling scheme $\sigSch$, 

\begin{definition}
A policy \Te{q}{A} is a \emph{time-expanded version} of policy $\ALG$ with
signaling scheme $\sigSch$, if at time $t$, after receiving the signal $s$
about agent $t$'s \Umr,
the principal chooses (randomized) payments such that the
myopic arm is pulled with probability  $\MP{s}$, 
and the arm  $i_t^\ast=\ALG(\hat{\ve{S}}_t)$ is pulled
with probability  $1-\MP{s}$.

This can be achieved by offering the agent, with probability
$\frac{1-\MP{s}}{1-\sup\{F_s(r)|F_s(r)\le \MP{s}\}}$,
a payment of 
$\frac{\max_i\ExpectC{\reward{t}{i}}{\ve{S}_t}-\ExpectC{\reward{t}{i^\ast_t}}{\ve{S}_t}}{\inv{F_s}(\MP{s})}$
for pulling arm $i_t^\ast$.

Here, $F_s(r)$ is the CDF of the
posterior distribution of the agent's \Umr conditioned on signal $s$. 
\end{definition}

% Since the principal can observe signals which provides him more information, the
% principal can choose to offer payments \textit{after} obersving a signal $s$.
% First, upon observing signal $s$ the principal will update his
% posterior belief of the agent's \Umr. 

The key technical lemma gives a sufficient condition on
$\MPV$ that allows us to obtain a good approximation ratio of the
Lagrangian to the optimum solely in terms of 
$\sum_s \SP{s} \MP{s}$.

\begin{lemma}
\label{lem:TE}
Fix a signaling scheme $\sigSch$. If $\MPV$ satisfies $\sum_{s \in
    \sigSpace} \SP{s} \MP{s} \ge\lambda\sum_{s \in \sigSpace} \SP{s} \frac{1 -
\MP{s}}{F_s^{-1} (\MP{s})}$,  there exists a policy $\ALG$, 
such that the time-expanded policy $\Te{q}{A}$ satisfies
\begin{equation}
\DR_{\lambda}(\Te{q}{A}) 
\geq (1 - \gamma \cdot \sum_{s \in \sigSpace} \SP{s} \MP{s}) \cdot \opt_{\gamma}.
\end{equation}
%where $p=\sum_{s\in \sigSpace} \SP{s} \MP{s}$ is the unconditional probability that the
%myopic arm is pulled by the agent.
\end{lemma}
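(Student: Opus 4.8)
The plan is to set up a careful coupling between the time-expanded policy $\Te{q}{A}$ and the reference policy $\ALG$, mimicking the argument of \citet{original} but tracking the extra payment terms that the signal-dependent randomization introduces. The natural choice for $\ALG$ is $\opt_\eta$ for a suitably deflated discount factor $\eta < \gamma$; intuitively, because $\Te{q}{A}$ only makes a non-myopic pull with probability $1 - \MP{s}$ in each round, the "effective" discount rate at which $\ALG$-pulls accumulate is slower than $\gamma$, and one picks $\eta$ to match this rate. Concretely I expect $\eta$ to be defined so that $\eta = \gamma \cdot (1 - \sum_s \SP{s}\MP{s})$ up to the bookkeeping of how $\hat{\ve S}_t$ advances, using the fact that the number of non-myopic pulls $N_{t,i}$ is a sum of Bernoulli$(1-\MP{s_t})$ variables and that the Martingale property makes the expected reward depend only on the number of pulls, not their timing.

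The key steps, in order: (1) Write $\DR_\lambda(\Te{q}{A}) = \DR(\Te{q}{A}) - \lambda \DC(\Te{q}{A})$ and expand each term round by round. In round $t$, conditioned on the signal $s$, with probability $\MP{s}$ the agent pulls the myopic arm (reward $x = \max_i \ExpectC{\reward{t}{i}}{\ve S_t}$, payment $0$) and with probability $1 - \MP{s}$ the agent pulls $i_t^\ast = \ALG(\hat{\ve S}_t)$ (reward $y = \ExpectC{\reward{t}{i_t^\ast}}{\ve S_t}$, expected payment $(x-y)\cdot\frac{1-\MP{s}}{\inv{F_s}(\MP{s})}$, using the payment formula derived above). (2) Observe that the reward from the myopic pull is at least the reward $\opt_\gamma$ would get, and the reward $y$ from the $\ALG$-pull, summed with the correct discounting over the coupled execution, telescopes against the value of $\opt_\eta$. (3) The troublesome terms are the "lost" myopic reward $x$ that appears with a negative sign when an $\ALG$-pull is made instead, and these must be cancelled against the payment term: this is exactly where the hypothesis $\sum_s \SP{s}\MP{s} \ge \lambda \sum_s \SP{s}\frac{1-\MP{s}}{\inv{F_s}(\MP{s})}$ enters. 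The payment $\lambda \DC$ is $\lambda (x-y) \sum_s \SP{s}\frac{1-\MP{s}}{\inv{F_s}(\MP{s})}$ per round (times the appropriate discount), and the hypothesis lets us bound this by $(x-y)\sum_s \SP{s}\MP{s}$, which is precisely the coefficient needed to absorb the $x$-terms into the myopic-reward accounting. (4) Assemble: the myopic reward contributes $\gamma \sum_s \SP{s}\MP{s}$ worth of $\opt_\gamma$ at each stage (in a discounted sense summing to the factor $1 - (1 - \gamma\sum_s \SP{s}\MP{s})$ being subtracted), the $\ALG$-reward contributes the rest, and the payment is fully absorbed, leaving $\DR_\lambda(\Te{q}{A}) \ge (1 - \gamma\sum_s \SP{s}\MP{s})\opt_\gamma$.

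I expect the main obstacle to be step (2)–(3): making the coupling precise enough that the discounted reward of the $\ALG$-pulls in the time-expanded execution is genuinely comparable to $\opt_\eta \ge (\text{something}) \cdot \opt_\gamma$, while simultaneously keeping the myopic-pull rewards and the payments in a form where the hypothesis on $\MPV$ can be applied termwise. The subtlety is that $x$ and $y$ are random and vary across rounds and across the branches of the coupling, so the cancellation must be argued at the level of conditional expectations in each round before summing; one has to be careful that $\hat{\ve S}_t$ — which couples the two executions and includes stray myopic pulls of the same arm as noted in the footnote — does not break the Martingale-based identity $\ExpectC{\reward{}{i}}{\text{$k$ pulls}}$ depending only on $k$. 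I would handle this exactly as in \cite{original}: induct on $t$, peel off the round-$0$ term, and recurse on the residual instance, so that the choice of $\eta$ and the per-round absorption of payments propagate cleanly through the induction.
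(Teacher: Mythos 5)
Your outline follows essentially the same route as the paper's proof: couple $\Te{q}{A}$ with $\ALG = \opt_\eta$ for a deflated discount $\eta$, use the hypothesis on $\MPV$ to cancel the expected Lagrangian payment against the myopic-reward excess $(x-y)\sum_s \SP{s}\MP{s}$ within each round (so that the per-round conditional Lagrangian utility is at least $y$, the non-myopic arm's reward), and then sum over rounds via the coupling through $\hat{\ve{S}}_t$. Two quantitative points that you leave open are exactly what determines the constant in the lemma, so they are worth flagging. First, the correct effective discount is $\eta = \frac{(1-p)\gamma}{1-p\gamma}$ with $p = \sum_s \SP{s}\MP{s}$, not $\gamma(1-p)$; it falls out of the negative-binomial identity $\sum_{t}\gamma^t\binom{t}{n}(1-p)^n p^{t-n} = \frac{1-\eta}{1-\gamma}\,\eta^n$, which is what turns the per-round bound (Lemma~\ref{cite42}) into $\DR_\lambda(\Te{q}{A}) \ge \frac{1-\eta}{1-\gamma} R^{(\eta)}(\ALG)$ (Lemma~\ref{cite32}). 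Second, your ``$\opt_\eta \ge (\text{something})\cdot\opt_\gamma$'' is a genuine separate ingredient, namely Theorem 1.2 of \cite{original} (Lemma~\ref{cite12}): $\opt_\eta \ge \frac{(1-\gamma)^2}{(1-\eta)^2}\opt_\gamma$. Multiplying the two factors gives $\frac{1-\gamma}{1-\eta} = 1 - p\gamma$, which is precisely the claimed ratio; with these two identities supplied, your sketch becomes the paper's proof.
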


The proof of Lemma~\ref{lem:TE} rests heavily on variations of the following lemmas
from \cite{original}. They are extremely straightforward modifications of Lemmas 4.2 and 3.2 from [9], and the proofs are deferred to Appendix~\ref{sec:missing-proofs}.

\begin{lemma}[Modification of Lemma 4.2 of \cite{original}]
\label{cite42}
Given a parameter $\lambda$ and a signaling scheme $\sigSch$. Let $\zeta_{t-1}=\sum_{t'<t}Z_{t'}$ be the total
number of non-myopic steps performed by the time-expanded algorithm $\Te{\MPV}{A}$
prior to time $t$, where $\MPV$  satisfies
$\sum_{s \in \sigSpace} \SP{s} \MP{s} \ge\lambda\sum_{s \in \sigSpace}
\SP{s} \frac{1 -\MP{s}}{F_s^{-1} (\MP{s})}$. 
Then, for any $0\le n\le t$,
\begin{align*}
\ExpectC[\Te{\MPV}{A}]{\reward{t}{i_t}-\lambda\payment{t}{i_t}}{\zeta_{t-1} = n}
& \ge \Expect[\ALG]{\reward{n}{i_n}}.
\end{align*}
\end{lemma}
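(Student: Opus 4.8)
The plan is to mimic the proof of Lemma~4.2 of \cite{original}, tracking the extra payment term that arises because our agents have heterogeneous \Umrs and must be paid (in expectation) to play non-myopically. The coupling $\hat{\ve{S}}_t$ is designed precisely so that, conditioned on $\zeta_{t-1} = n$, the arm $i_t^\ast = \ALG(\hat{\ve{S}}_t)$ that the principal tries to incentivize is distributed exactly as the arm $\ALG$ would pull at \emph{its} time $n$; this is the structural fact inherited verbatim from \cite{original}, since only non-myopic pulls feed information into $\hat{\ve{S}}_t$. So the first step is to condition on $\zeta_{t-1} = n$ and on the realized signal $s$ in round $t$, and split into the two events $Z_t = 0$ (myopic pull) and $Z_t = 1$ (non-myopic pull), which occur with probabilities $\MP{s}$ and $1 - \MP{s}$ respectively by the construction of \Te{\MPV}{A}.

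The second step is to bound $\ExpectC[\Te{\MPV}{A}]{\reward{t}{i_t} - \lambda \payment{t}{i_t}}{\zeta_{t-1} = n, s}$ from below. On the event $Z_t = 0$, the payment is $0$ and the reward is that of the myopic arm, i.e.\ $x = \max_i \ExpectC{\reward{t}{i}}{\ve{S}_t} \ge y = \ExpectC{\reward{t}{i_t^\ast}}{\ve{S}_t}$. On the event $Z_t = 1$, the reward is $y$ and the expected payment is $(x - y)/\inv{F_s}(\MP{s})$ (this is exactly the expected payment computed just before the lemma, incorporating the possible randomization at a discontinuity of $F_s$). Combining, the conditional expectation equals
\begin{align*}
\MP{s} \cdot x + (1 - \MP{s}) \Bigl( y - \lambda \frac{x - y}{\inv{F_s}(\MP{s})} \Bigr)
&= y + (x - y)\Bigl( \MP{s} - \lambda \frac{1 - \MP{s}}{\inv{F_s}(\MP{s})} \Bigr).
\end{align*}
Now take expectation over the signal $s \sim \sigProb$ (still conditioning on $\zeta_{t-1} = n$), noting that $x - y \ge 0$ always: the hypothesis $\sum_s \SP{s} \MP{s} \ge \lambda \sum_s \SP{s} \frac{1 - \MP{s}}{\inv{F_s}(\MP{s})}$ is exactly what makes the aggregated coefficient of the nonnegative quantities nonnegative in the right direction, so the whole correction term is $\ge 0$ and we are left with a lower bound of $\Expect{y \mid \zeta_{t-1} = n} = \ExpectC[\Te{\MPV}{A}]{\reward{t}{i_t^\ast}}{\zeta_{t-1}=n}$. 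One subtlety: $x$ and $y$ themselves depend on $\ve{S}_t$ and hence are correlated with $s$; but $s$ is drawn independently of the MAB state (the signaling scheme depends only on the agent's \Umr), so we may first take the expectation over $s$ holding $\ve{S}_t$ (equivalently $x, y$) fixed, which is where the clean factorization above is used.

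The final step is to replace $\ExpectC[\Te{\MPV}{A}]{\reward{t}{i_t^\ast}}{\zeta_{t-1}=n}$ by $\Expect[\ALG]{\reward{n}{i_n}}$. Here $i_t^\ast = \ALG(\hat{\ve{S}}_t)$, and by the coupling $\hat{\ve{S}}_t$ is distributed (conditioned on $\zeta_{t-1} = n$) as $\ALG$'s state after $n$ of its own steps; since rewards are Martingales, $\ExpectC{\reward{t}{i_t^\ast}}{\hat{\ve{S}}_t} $ has the same expectation as $\reward{n}{i_n}$ under $\ALG$ — this is the Martingale argument already used in \cite{original} to pass between the two time indices. The main obstacle, and the only place real care is needed beyond bookkeeping, is making the independence-of-signal-from-state argument in step two rigorous: one must verify that the conditioning event $\{\zeta_{t-1} = n\}$ together with the current state $\ve{S}_t$ renders $s$ conditionally independent of everything else relevant, so that the per-state factorization is valid before averaging over states. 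Everything else is a direct transcription of \cite{original}'s argument with the payment term carried along.
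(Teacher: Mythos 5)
Your proposal is correct and follows essentially the same route as the paper's proof: compute the per-signal expected Lagrangian utility $\MP{s}x + (1-\MP{s})\bigl(y - \lambda\tfrac{x-y}{\inv{F_s}(\MP{s})}\bigr)$, aggregate over signals so the hypothesis cancels the $(x-y)$ correction term and leaves a lower bound of $y$, then use the coupling via $\hat{\ve{S}}_t$ together with the Martingale property to identify $\ExpectC[\Te{\MPV}{A}]{\reward{t}{i_t^\ast}}{\zeta_{t-1}=n}$ with $\Expect[\ALG]{\reward{n}{i_n}}$. The independence-of-signal-from-state subtlety you flag is handled in the paper by conditioning on $\ve{S}_t$ and $\Zed_{0:t-1}$ before averaging, exactly as you propose.
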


\begin{lemma}[Modification of Lemma 3.2 of \cite{original}]
\label{cite32}
Given a parameter $\lambda$ and a signaling scheme $\sigSch$.  Assume  $\MPV$ satisfies
$\sum_{s \in \sigSpace} \SP{s} \MP{s} \ge\lambda\sum_{s \in \sigSpace}
\SP{s} \frac{1 -\MP{s}}{F_s^{-1} (\MP{s})}$, then for $\eta =
\frac{(1-p)\gamma}{1-p\gamma}$, where $p=\sum_{\sig\in \sigSpace}\SP{s}\MP{s}$, we have 
\begin{align*}
R_{\lambda}^{(\gamma)}(\Te{\MPV}{A})
& \ge \frac{1-\eta}{1-\gamma}\cdot R^{(\eta)}(\ALG).
\end{align*}
\end{lemma}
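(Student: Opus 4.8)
The plan is to reduce the entire statement to the per-round Lagrangian bound of Lemma~\ref{cite42} and then perform a generating-function computation that collapses the double sum over rounds $t$ and non-myopic counts $n$ into a single geometric series with ratio $\eta$. First I would expand the time-discounted Lagrangian reward of the time-expanded policy by conditioning on the number $\zeta_{t-1}$ of non-myopic pulls performed before round $t$:
\[
R_\lambda^{(\gamma)}(\Te{\MPV}{A}) = \sum_{t=0}^\infty \gamma^t \sum_{n=0}^t \Prob{\zeta_{t-1}=n}\cdot \ExpectC[\Te{\MPV}{A}]{\reward{t}{i_t} - \lambda \payment{t}{i_t}}{\zeta_{t-1}=n}.
\]
Since $\MPV$ satisfies the standing hypothesis $\sum_{s}\SP{s}\MP{s} \ge \lambda \sum_s \SP{s}\frac{1-\MP{s}}{\inv{F_s}(\MP{s})}$, Lemma~\ref{cite42} applies and lower-bounds each inner conditional expectation by $\Expect[\ALG]{\reward{n}{i_n}}$. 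Abbreviating $w_n := \Expect[\ALG]{\reward{n}{i_n}}$, this gives $R_\lambda^{(\gamma)}(\Te{\MPV}{A}) \ge \sum_{t=0}^\infty \gamma^t \sum_{n=0}^t \Prob{\zeta_{t-1}=n}\,w_n$.

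The second step is to pin down the law of $\zeta_{t-1}$. By construction of the time-expanded policy, conditioned on the signal $\sig$ received in round $t'$ the agent plays myopically with probability exactly $\MP{\sig}$, independently of the arm states and of all other rounds; because signals are drawn i.i.d.\ from the prior, the indicators $Z_{t'}$ are i.i.d.\ Bernoulli with success probability $1-p$, where $p = \sum_{\sig\in\sigSpace}\SP{\sig}\MP{\sig}$. Hence $\zeta_{t-1}\sim\mathrm{Binomial}(t,1-p)$, so $\Prob{\zeta_{t-1}=n}=\binom{t}{n}(1-p)^n p^{t-n}$. Substituting and interchanging the two sums (legitimate since every term is nonnegative), then substituting $k=t-n$, I would group by $n$ to obtain
\[
R_\lambda^{(\gamma)}(\Te{\MPV}{A}) \ge \sum_{n=0}^\infty w_n\,\gamma^n (1-p)^n \sum_{k=0}^\infty \binom{n+k}{n}(\gamma p)^k.
\]

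The final step is the generating-function identity $\sum_{k=0}^\infty \binom{n+k}{n}x^k = (1-x)^{-(n+1)}$, valid for $x=\gamma p<1$ (which holds since $\gamma<1$ and $p\le 1$). This turns the inner sum into $\frac{1}{1-\gamma p}\bigl(\frac{\gamma(1-p)}{1-\gamma p}\bigr)^n = \frac{1}{1-\gamma p}\,\eta^n$, using the definition $\eta=\frac{(1-p)\gamma}{1-p\gamma}$. Therefore $R_\lambda^{(\gamma)}(\Te{\MPV}{A}) \ge \frac{1}{1-\gamma p}\sum_{n=0}^\infty \eta^n w_n = \frac{1}{1-\gamma p}\,R^{(\eta)}(\ALG)$, and a one-line algebraic check, $1-\eta = \frac{1-\gamma p - \gamma(1-p)}{1-\gamma p} = \frac{1-\gamma}{1-\gamma p}$, yields $\frac{1}{1-\gamma p} = \frac{1-\eta}{1-\gamma}$, which is exactly the claimed bound.

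The main obstacle is the independence claim in the second step: the whole computation collapses cleanly only because each $Z_{t'}$ is i.i.d.\ Bernoulli$(1-p)$. I would take care to justify precisely why $Z_{t'}$ is independent of $\hat{\ve{S}}_{t'}$ and of the reward terms controlled by Lemma~\ref{cite42}; this relies on the fact that the randomized-payment construction achieves myopic probability exactly $\MP{\sig}$ conditioned on signal $\sig$ \emph{regardless of the current arm states}, combined with the i.i.d.\ draw of signals. This is precisely where the coupling through $\hat{\ve{S}}_t$ is essential: it is what allows the \emph{conditional} bound of Lemma~\ref{cite42} to be married to the \emph{unconditional} Binomial law of $\zeta_{t-1}$ without introducing correlations that would break the binomial expansion.
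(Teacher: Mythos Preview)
Your proposal is correct and follows essentially the same route as the paper's proof: condition on $\zeta_{t-1}=n$, apply Lemma~\ref{cite42}, use that $\zeta_{t-1}\sim\mathrm{Binomial}(t,1-p)$, interchange sums, and collapse via the negative-binomial identity $\sum_{k\ge 0}\binom{n+k}{n}x^k=(1-x)^{-(n+1)}$. Your added justification of the i.i.d.\ Bernoulli law of the $Z_{t'}$ and the role of the coupling through $\hat{\ve{S}}_t$ is more explicit than what the paper writes, but the argument is the same.
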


\begin{lemma}[Theorem 1.2 of \cite{original}] \label{cite12}  
Consider a fixed MAB instance (without selfish agents) with two
different time discount factors $\eta < \gamma$, and let
$\opt_\eta,\opt_\gamma$ be the optimum time-discounted reward
achievable under these discounts. 
Then, $\opt_\eta \ge \frac{(1-\gamma)^2}{(1-\eta)^2}\cdot \opt_\gamma$, 
and this bound is tight.
\end{lemma}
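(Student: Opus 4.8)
The plan is to reduce the statement to a monotonicity property of $\opt_x$ in the discount factor, which in turn rests on a single structural lemma about optimal policies on martingale instances. Fix an optimal policy $\pi$ for discount $\gamma$, and for $k\ge0$ let $m_k$ be the expected reward $\pi$ collects on its $k$-th pull (expectation over all randomness of rewards, transitions, and $\pi$), so that $\opt_\gamma=\sum_{k\ge0}\gamma^k m_k$; doing the same at every discount $x\in(0,1)$ yields sequences $(m_k^{(x)})_k$ with $\opt_x=\sum_{k\ge0}x^k m_k^{(x)}$. Because each $\sum_k x^k m_k^{(\pi')}$ is convex and increasing in $x$ for a fixed policy $\pi'$ (all $m_k^{(\pi')}\ge0$), $\opt_x=\sup_{\pi'}\sum_k x^k m_k^{(\pi')}$ is convex, hence differentiable off a countable set, and where it is differentiable the envelope theorem gives $\opt_x'=\sum_{k\ge1}kx^{k-1}m_k^{(x)}$. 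The target inequality is exactly the statement that $g(x):=(1-x)^2\opt_x$ is non-increasing on $(0,1)$, since that gives $g(\eta)\ge g(\gamma)$, i.e.\ $(1-\eta)^2\opt_\eta\ge(1-\gamma)^2\opt_\gamma$. From $g'(x)=(1-x)\bigl[(1-x)\opt_x'-2\opt_x\bigr]$ and $1-x>0$, it suffices to prove $(1-x)\opt_x'\le 2\opt_x$ for every $x$; rewriting this as $\opt_x'\le\frac{2}{1-x}\opt_x$ and comparing power-series coefficients, it is enough to establish the per-index inequality $(j+1)\,m_{j+1}^{(x)}\le 2\sum_{i=0}^{j}m_i^{(x)}$ for all $j\ge0$.

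Thus the whole proof hinges on the following structural lemma, which I expect to be the main obstacle: \emph{for an optimal policy on any martingale MAB instance, the expected-reward sequence satisfies $(j+1)\,m_{j+1}\le 2\sum_{i=0}^{j}m_i$ for every $j$} --- equivalently, the next expected reward never exceeds twice the average of the expected rewards realized so far. The intuition is that the martingale property forbids ``deferred gratification'': the expected reward of the pulled arm can only increase by conditioning on good news about some arm, and --- by the exchange/index structure of the optimum (the maximum Gittins index is non-increasing along the optimal trajectory, and an arm's immediate expected reward never exceeds its index) --- such conditioning burns probability mass fast enough that acquiring one more unit of expected reward forces ``paying'' a comparable amount of already-realized expected reward. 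Making this precise is where the real work lies; I would attempt it by an optional-stopping / potential argument along the optimal trajectory (tracking accumulated expected reward against a multiple of the current expected reward across the ``exploration sub-tree'' rooted at each arm), falling back on a direct exchange argument showing that any sequence violating the inequality can be improved, contradicting optimality.

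Granting the lemma, the conclusion follows immediately as above; alternatively one can bypass the envelope theorem. Running the $\gamma$-optimal policy $\pi$ under the harsher discount $\eta$ is legitimate (a policy uses no knowledge of the discount factor, and the numbers $m_k$ depend only on $\pi$), so $\opt_\eta\ge\sum_k\eta^k m_k$; writing $m_k=\sum_{i=0}^{k}c_i$ with $c_i\ge 0$, the lemma rearranges into a form for which Abel summation reduces $\sum_k\eta^k m_k\ge\frac{(1-\gamma)^2}{(1-\eta)^2}\sum_k\gamma^k m_k$ to the elementary fact that $n\mapsto\bigl(\sum_{k\le n}(k+1)\eta^k\bigr)/\bigl(\sum_{k\le n}(k+1)\gamma^k\bigr)$ is non-increasing with limit $\frac{(1-\gamma)^2}{(1-\eta)^2}$. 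Either way the square in the bound is precisely the exponent in $\sum_k(k+1)x^k=(1-x)^{-2}$.

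For tightness I would use a ``cascade of deferred exploration'': an instance with a stream of i.i.d.\ arms where pulling the current arm either (with probability $\theta$) moves it to a state whose expected reward is larger by the factor $1/\theta$ forced by the martingale property, or (with probability $1-\theta$) kills it, after which the policy starts afresh on the next identical arm. For every discount the always-climbing policy is optimal, and as $\theta\to0$ one checks that $m_k\to(k+1)\,m_0$ coordinatewise, so $\opt_x\to m_0/(1-x)^2$ and $\opt_\eta/\opt_\gamma\to(1-\gamma)^2/(1-\eta)^2$; the same instances also make the structural lemma asymptotically tight at every $j$. The extremal ``factor $2$'' in the lemma is already visible in its one-step case: a single pull of a martingale arm can at most double the expected reward, and the $\theta\to 0$ cascade is exactly the iterated version of that behavior.
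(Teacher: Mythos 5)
Your argument rests entirely on the structural lemma that, for an optimal policy on a martingale instance, the per-step expected rewards satisfy $(j+1)\,m_{j+1}\le 2\sum_{i=0}^{j}m_i$; you explicitly leave this unproved, and unfortunately it is not just a missing step --- it is false, already at $j=0$. Consider two arms: arm $A$ deterministically pays $10$ on every pull; arm $B$ starts in an ``unknown'' state from which a single pull pays $100$ and moves to a permanent $100$-state with probability $0.01$, and pays $0$ and moves to a permanent $0$-state otherwise. This is a martingale instance, and for $\gamma$ close to $1$ (say $\gamma=0.99$) the Gittins index of $B$ in the unknown state is about $50$, which exceeds $10$, so the optimal policy pulls $B$ first: $m_0=1$. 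On the next pull it plays $B$ if the good state was revealed and $A$ otherwise, so $m_1=0.01\cdot 100+0.99\cdot 10=10.9>2=2m_0$. The intuition you invoke (``one pull of a martingale arm can at most double the expected reward'') conflates the conditional mean of the arm being explored --- which the martingale property \emph{preserves} in expectation, rather than doubling --- with the reward collected by the policy, which can jump by an arbitrary factor when the policy abandons a low-mean exploratory arm in favor of a high-mean safe arm. With the key lemma false, both of your routes (the envelope-theorem/derivative route and the Abel-summation route) collapse; the Abel-summation route has the further unjustified step of writing $m_k=\sum_{i\le k}c_i$ with $c_i\ge 0$, i.e.\ assuming $(m_k)$ is non-decreasing, which your lemma would not imply even if it were true.

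For calibration: the paper does not prove this statement at all --- it is imported verbatim as Theorem~1.2 of \cite{original} --- so there is no in-paper proof to match; the question is purely whether your argument stands on its own, and it does not. The known proofs of this bound work with a quantity that genuinely is monotone along the optimal trajectory, namely the prevailing charge in the Gittins-index decomposition of $\opt_\gamma$, and the two factors of $\frac{1-\gamma}{1-\eta}$ arise from re-discounting that non-increasing sequence and from relating index values at the two discount factors; choosing the raw per-step expected reward $m_k$ as the controlled object is precisely where your approach breaks. Your tightness construction is essentially the Diamonds-in-the-rough family already used in this paper and does exhibit the ratio $\frac{(1-\gamma)^2}{(1-\eta)^2}$, but that only shows the bound cannot be improved, not that it holds.
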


\begin{extraproof}{Lemma~\ref{lem:TE}}
Let $\ALG$ be the optimal policy for the time-discount factor
$\eta =\frac{(1-p)\gamma}{1-p\gamma}$, where $p=\sum_{\sig\in \sigSpace}\SP{\sig}\MP{\sig}$. We then have 
\begin{align*}
R_\lambda ^{(\gamma)}(\Te{q}{A}) 
&\stackrel{\text{Lemma}~\ref{cite32}}{\ge} \frac{1-\eta}{1-\gamma}\cdot\opt_\eta \stackrel{\text{Lemma}~\ref{cite12}}{\ge} \frac{1-\gamma}{1-\eta}\cdot\opt_\gamma \\
&= \frac{1-\gamma}{1-(1-p)\gamma/(1-p\gamma)} \cdot\opt_\gamma =
  (1-p\gamma) \cdot \opt_\gamma ,
\end{align*}
completing the proof.
\end{extraproof}

According to Lemma~\ref{lem:TE}, the approximation guarantee of the
time-expanded policy $\textbf{TES}$ is monotone decreasing in
$p = \sum_{s \in \sigSpace} \SP{s} \MP{s}$.
This suggests a natural heuristic for choosing the myopic
probabilities $\MPV$: minimize $p$ subject to satisfying the
conditions of the lemma.
This optimization can be carried out using the following
non-linear program.
Surprisingly, this na\"{\i}ve heuristic,
motivated predominantly by the need to cancel out terms in the
  proof of Lemma~\ref{lem:TE},
actually gives us the optimal approximation ratio. 
We will prove this in Section~\ref{sec:diamond}.

\begin{LP}[program:main]{minimize}{\sum_{s \in \sigSpace} \SP{s} \MP{s}}
\sum_{s \in \sigSpace} \SP{s} \MP{s} 
\ge \lambda \sum_{s \in \sigSpace} \SP{s} \frac{1 - \MP{s}}{F_s^{-1} (\MP{s})} \\
0 \le \MP{s} \le 1, \quad \mbox{ for all } s \in \sigSpace.
\end{LP}

First, notice that the optimization problem is feasible, because
$\MPV=\ve{1}$ is a trivial solution.  
Whenever $F_s$ is \semiReg, $\frac{1-x}{F_s^{-1}(x)}$ is convex.
Therefore, the feasibility region of the optimization problem
\eqref{program:main} is convex, and the problem can be solved 
efficiently \cite{boyd2004convex}.

% Assume the optimal solution is
% $q^\ast$, then at time $t$, the principal offers a payment of $\frac{x-y}{\inv{F_s}(q^\ast_s)}$
% when signal $s$ is observed.

\begin{theorem} \label{theorem:main_algo}
Given  a signaling scheme $\sigSch$, 
let $\MPV^\ast$ be the optimal solution of the
convex program \eqref{program:main}, and $p^\ast$ the optimal value. 
Let $\eta = \frac{(1-p^\ast)\gamma}{1-p^\ast\gamma}$. 
Then, $\textbf{TES}_{\mathbf{\MPV^\ast},\textnormal{OPT}_\eta}$ is a
$(1 - p^\ast \gamma)$-approximation policy to $\textnormal{OPT}_\gamma$.
\end{theorem}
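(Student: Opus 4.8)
The plan is to verify that the hypotheses of Lemma~\ref{lem:TE} are satisfied by the pair $(\MPV^\ast, \textnormal{OPT}_\eta)$ and then read off the claimed approximation guarantee. First I would observe that $\MPV^\ast$ is feasible for the convex program~\eqref{program:main}, which is exactly the statement that $\sum_{s} \SP{s} \MP[s]^\ast \ge \lambda \sum_{s} \SP{s} \frac{1 - \MP[s]^\ast}{F_s^{-1}(\MP[s]^\ast)}$; this is precisely the condition required to invoke Lemma~\ref{lem:TE}. Since the program is feasible ($\MPV = \ve{1}$ works) and its feasible region is convex under semi-regularity of each $F_s$, the optimum $\MPV^\ast$ and optimal value $p^\ast = \sum_s \SP{s} \MP[s]^\ast$ are well-defined.

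Next I would apply Lemma~\ref{lem:TE} with this particular $\MPV^\ast$. The lemma asserts the existence of a policy $\ALG$ such that $\DR_\lambda(\TES{\MPV^\ast}{A}) \ge (1 - \gamma \sum_s \SP{s} \MP[s]^\ast)\opt_\gamma = (1 - p^\ast \gamma)\opt_\gamma$. Inspecting the proof of Lemma~\ref{lem:TE} (via Lemmas~\ref{cite32} and~\ref{cite12}), the witnessing policy $\ALG$ is exactly $\textnormal{OPT}_\eta$, the optimal policy for the time-discount factor $\eta = \frac{(1-p^\ast)\gamma}{1 - p^\ast \gamma}$ — this is the value of $p$ appearing there, namely $p = \sum_s \SP{s}\MP[s]^\ast = p^\ast$. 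Thus the policy named in the theorem statement, $\textnormal{\textbf{TES}}_{\MPV^\ast, \textnormal{OPT}_\eta}$, is the very policy produced by the lemma, and it achieves Lagrangian reward at least $(1 - p^\ast \gamma)\opt_\gamma$ on every MAB instance.

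Finally I would close by matching this against Definition~\ref{def:approx}: a guarantee of the form $\DR_\lambda(\ALG) \ge (1 - p^\ast \gamma)\opt_\gamma$ holding for all MAB instances is, by definition, an $(1 - p^\ast \gamma)$-approximation with respect to the Lagrangian objective. I would also remark on efficiency: computing $\MPV^\ast$ and $p^\ast$ reduces to solving the convex program~\eqref{program:main}, which is tractable by standard convex optimization~\cite{boyd2004convex} whenever each posterior $F_s$ is semi-regular.

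The proof is essentially a bookkeeping assembly of results already established, so there is no genuine obstacle here; the only point requiring care is confirming that the existentially-quantified policy $\ALG$ in Lemma~\ref{lem:TE} is concretely $\textnormal{OPT}_\eta$ with the $\eta$ determined by $p^\ast$ — but this is immediate from tracing the proof of Lemma~\ref{lem:TE}, which invokes Lemma~\ref{cite32} precisely with $\ALG = \textnormal{OPT}_\eta$. The substantive content — namely that this heuristic choice of $\MPV^\ast$ is not merely feasible but in fact \emph{optimal} among all policies — is deferred to Section~\ref{sec:diamond} and is not part of this theorem.
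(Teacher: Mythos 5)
Your proposal is correct and follows essentially the same route as the paper: Theorem~\ref{theorem:main_algo} is obtained by noting that the optimal solution $\MPV^\ast$ of program~\eqref{program:main} satisfies the feasibility condition of Lemma~\ref{lem:TE}, whose proof instantiates $\ALG = \opt_\eta$ with $\eta = \frac{(1-p^\ast)\gamma}{1-p^\ast\gamma}$, yielding the $(1-p^\ast\gamma)$ guarantee. Your added remarks on well-definedness of $\MPV^\ast$ and on matching Definition~\ref{def:approx} are consistent with the paper's treatment.
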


This proves the first half of Theorem~\ref{theorem:main_at_intro} in the introduction.
Notice in Theorem~\ref{theorem:main_algo} that $\MPV$ can be
determined without knowledge of the specific MAB instance; only the
signaling scheme needs to be known.
\newcommand{\Dime}{\ensuremath{\boldsymbol{\Delta}(\theB,\gamma)}\xspace}
\newcommand{\optdime}{OPT_{\lambda}^{(\gamma)}(\Dime)\xspace}

\section{Upper bound: Diamonds in the Rough}
\label{sec:diamond}

In this section, we show that the approximation ratio $1 - p^\ast
\gamma$ is actually tight when the distribution $F_s$ is \semiReg,
where $p^\ast$ is the value of the convex program \eqref{program:main}.
For simplicity, when $\qStar$ is clear from the context,  we let
$\textbf{TES}^\ast$ denote the policy 
$\textbf{TES}_{\qStar, \opt_\eta}$ 
where $\eta = \frac{(1 - p^\ast) \gamma}{1- p^\ast \gamma}$ 
(as in Theorem \ref{theorem:main_algo}). 
We will show that on a class of MAB instances called
Diamonds-in-the-rough \cite{original}, 
the \emph{optimal policy with payments} (defined below) can achieve
only a $(1- p^\ast \gamma)$-fraction of $\opt_\gamma$.
Therefore, not only is the analysis of $\textbf{TES}^\ast$'s
approximation ratio tight, but $\textbf{TES}^\ast$ also has the
\emph{optimal} approximation ratio $1 - p^\ast \gamma$.

\begin{definition}
The Diamonds-in-the-rough MAB instance \Dime is defined as follows.
Arm $1$ has constant value $1 - \gamma$.
All other (essentially infinitely many) arms have the following reward
distribution:
\begin{enumerate}
 \item With probability $1/M$, the arm's reward is a degenerate distribution of the constant $(1 - \gamma) \theB \cdot M$
 (good state);
 \item With probability $1 - 1/M$, the arm's reward is a degenerate
   distribution of the constant $0$ (bad
 state).
\end{enumerate}
\end{definition}
Note that if $\theB < 1$, then arm $1$ is the myopic arm.

Since \Dime is uniquely determined by $\theB$ and is just one single
instance, the optimal policy that maximizes the \lagobj,
i.e., $\DR(\mathcal{A}) - \lambda \DC(\mathcal{A})$, is
well-defined\footnote{This is in contrast to the case where the
  performance of a policy is evaluated on a \emph{class} of instances
  rather than single instance.}.
We call the policy that maximizes the \lagobj the \emph{optimal policy
with payments}, and denote it by $\optdime$.
%Our plan is to upper bound the performance of $\textbf{TES}^\ast$ by that of the optimal policy with payments and show that the latter has \lagobj $(1 - p^\ast \gamma) \cdot \opt_\gamma$
%\footnote{Discuss with David, arent we trying to argue the approximation ratio is tight.}.

We can solve for the optimal
policy with payments using another convex program, which we next derive.
Suppose that the optimal policy with payments has time-discounted \lagobj $V$.
In the first round, it only has two options: (a) let the agent play myopically
(i.e., pull the constant arm); (b) incentivize him to play a non-constant arm.

If option (a) is chosen and the agent pulled the constant arm, then
the principal learns nothing and faces the same situation in the
second round. 
So conditioned on the constant arm being pulled, the
principal will get $1 - \gamma + \gamma V$. 
If option (b) is chosen and a non-constant arm was pulled, then with
probability $1/M$, the non-constant arm will be revealed to be in the good 
state, and the principal does not need to pay any agent again,
obtaining value 
$(1 - \gamma) \theB \cdot M \sum_{i=0}^{\infty} \gamma^i = \theB \cdot M$; 
with probability $1 - 1/M$, the non-constant arm will be revealed to be in the bad
state, and the principal faces the same situation in the second round,
obtaining value $\gamma V$.
Recall that $c_s$ is the payment needed to ensure that the myopic arm
is played with probability at most $q_s$ when signal $s$ is revealed.
To summarize, if we set the probabilities for myopic play to 
$(q_s)_{s \in \sigSpace}$, 
then $V$ satisfies the following equation:

\begin{equation}
  \label{eq:diamond_opt}
    V =  (1 - \gamma + \gamma V) \sumSig \SP{s} q_s + \sumSig \SP{s} ( 1- q_s)(\frac{1}{M} \cdot \theB \cdot M + (1 - \frac{1}{M})
    \gamma V - \lambda c_s).
\end{equation}

Solving for $V$ while taking $M \to \infty$, we get 
$(1 - \gamma) V = (1 - \gamma) \sumSig \SP{s} q_s + \sumSig \SP{s} 
( 1- q_s)(\theB - \lambda c_s)$.
As the difference between the expected rewards of the myopic
arm and the non-myopic arm is 
$(1 - \gamma) - (1 - \gamma) \theB$, 
we have $c_s = \frac{(1-\gamma)(1 - \theB)}{\inv{F_s}(q_s)}$. 
The optimal policy with payments needs to choose the best myopic probabilities, which is equivalent to:

\begin{LP}[program:dp]
%\begin{aligned}
{maximize} { (1 - \gamma) \sumSig \SP{s} q_s + \sumSig \SP{s} ( 1- q_s)(\theB -\lambda \frac{(1-\gamma)(1 - \theB)}{\inv{F_s}(q_s)}) }
0 \le q_s \le 1, \quad \text{ for all } s \in \sigSpace.
%\end{aligned}
\end{LP}

Notice that the objective function of program \eqref{program:dp} is
concave, so the program is convex.
Let $\qHat$ be the optimal solution to the program \eqref{program:dp}.
Denote by $\mathcal{A}(\qHat)$ the policy determined by $\qHat$.
Recall that $\qStar$ is the solution to the following convex program:
%\begin{equation*}
%\tag{\ref{program:main} revisited}
%\begin{aligned}
%\min_{\ve{q}} \sum_{s \in \sigSpace} \SP{s} q_s  \\
%s.t. \sum_{s \in \sigSpace} \SP{s} q_s \ge \lambda \sum_{s \in \sigSpace} \SP{s} \frac{1 -
%q_s}{F_s^{-1} (q_s)} \\
%0 \le q_s \le 1, \forall s \in \sigSpace
%\end{aligned}
%\end{equation*}

\begin{LP}{minimize}{\sum_{s \in \sigSpace} \SP{s} q_s}
  \tag{\ref{program:main}}
\sum_{s \in \sigSpace} \SP{s} q_s \ge \lambda \sum_{s \in \sigSpace} \SP{s} \frac{1 -
q_s}{F_s^{-1} (q_s)} \\
0 \le q_s \le 1, \forall s \in \sigSpace.
\end{LP}

Note that the $\qHat$ are probabilities for choosing the myopic arm
given by the above program and depend on a specific MAB instance,
i.e., \Dime. On the other hand, $\qStar$ is \emph{independent} of any
MAB instance and only depends on the signaling scheme $\varphi$ and
$\CDF$.
Lemma \ref{lemma:imposs} shows that for the right choice of $\theB$,
$\qHat$ and $\qStar$ actually coincide on the corresponding
Diamonds-in-the-rough instance.

The proof relies heavily on the KKT condition \cite{luenberger1968optimization},
so we assume that $\frac{1 - x}{\inv{F_s}(x)}$ is
continuously differentiable and use the KKT condition for
differentiable functions. When derivatives do not exists, we can use
the sub-differential versions of KKT. 
Since we would like the characterization to hold for countably
infinite signal spaces (and thus infinitely many variables), we have
to be somewhat careful about the specific notion of differentiability,
but note that standard notions such as Gateaux Differentiability
\cite{luenberger1968optimization} can be used here.

% \dkdelete{
% Let $\bana$ be an (countably) infinite-dimensional vector space, i.e., $\mathbb{R}^\mathbb{N}$. Let $f: \bana \rightarrow \mathbb{R}$, $g_i: \bana \rightarrow \mathbb{R}$ ($i \in [m]$) be functions mapping infinite-dimensional real vectors to real values.

% \begin{definition}[Gateaux Differentiable]
%   Let $\mathcal{X}$ be a vector space and $\mathcal{Y}$ be a normed space.
%   Functional $f: \mathcal{X} \rightarrow \mathcal{Y}$ is Gateaux differentiable at $\ve{x} \in \mathcal{X}$ if for all $\ve{h} \in \mathcal{X}$,
%   \begin{equation}
%     \lim_{t \rightarrow 0} \frac{f(\ve{x} + t \ve{h}) - f(\ve{x})}{t} 
%     \end{equation}
%   exists.
%   \end{definition}
% }

\begin{definition}[Slater condition]
For a convex program with no equality constraints, the Slater
condition is met if there exists a point $\ve{x}$ such that
$g_i(\ve{x}) < 0$ for all constraints $i$.
\end{definition}

It to easy to check that the program \eqref{program:main} satisfies
the Slater condition.

\begin{theorem}[KKT condition for infinite dimension]
  \label{thm:kkt}
Consider a program satisfying the Slater condition: minimize $f(\ve{x})$ subject to 
$g_i(\ve{x}) \le 0$ for $i = 1, \ldots, m$.
If $\ve{x}^\ast$ is the local minimum, then there exist multipliers $\mu_i$ for $i = 1, \ldots, m$
such that:
\begin{equation}
  \begin{aligned}
    \nabla f(\ve{x}^\ast) + \sum_{i=1}^m \mu_i \nabla g_i(\ve{x}^\ast)
    & = 0\\
    g_i(\ve{x}^\ast) & \le 0 \\
    \mu_i & \ge 0 \\
    \mu_i \cdot g_i(\ve{x}^\ast) & = 0%    g_i(\ve{x}^\ast) \le 0  \    \mu_i  \ge 0 \     \mu_i \cdot g_i(\ve{x}^\ast) & = 0
    \end{aligned}
  \end{equation}
\end{theorem}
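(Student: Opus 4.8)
The plan is to derive the KKT conditions from Lagrangian strong duality, obtained via the classical separating–hyperplane argument carried out in the finite-dimensional space $\R^{m+1}$ of objective-and-constraint values (essentially the geometric proof in Luenberger or Boyd–Vandenberghe); the point of restating the theorem here is to check that nothing breaks when the decision variable $\ve{x}$ ranges over an infinite-dimensional space. I take as standing assumptions that $f$ and all $g_i$ are convex (this is the ``convex program'' of the surrounding subsection) and Gateaux differentiable. First I would note that, since the feasible set $\{\ve{x} : g_i(\ve{x}) \le 0 \ \forall i\}$ is convex and $f$ is convex, a local minimizer $\ve{x}^\ast$ is automatically a global minimizer; write $p^\ast = f(\ve{x}^\ast)$.

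Next I would introduce $\mathcal{A} = \{(y_0,y_1,\dots,y_m)\in\R^{m+1} : \exists\, \ve{x} \text{ with } f(\ve{x})\le y_0 \text{ and } g_i(\ve{x})\le y_i \text{ for all } i\}$. Convexity of $f$ and the $g_i$ makes $\mathcal{A}$ convex, and by construction it is upward-closed in every coordinate. Global optimality of $\ve{x}^\ast$ says exactly that $(p^\ast-\epsilon,0,\dots,0)\notin\mathcal{A}$ for every $\epsilon>0$, so $(p^\ast,\ve{0})$ is not an interior point of $\mathcal{A}$; meanwhile the Slater point $\bar{\ve{x}}$ furnishes an interior point of $\mathcal{A}$, so the separating-hyperplane theorem applies in $\R^{m+1}$ and produces a nonzero vector $(\mu_0,\mu_1,\dots,\mu_m)$ with $\mu_0 y_0 + \sum_i \mu_i y_i \ge \mu_0 p^\ast$ for all $(y_0,\dots,y_m)\in\mathcal{A}$. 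Upward-closedness forces $\mu_i\ge 0$ for all $i$ (including $\mu_0$), and the Slater condition rules out $\mu_0=0$: if $\mu_0=0$, feeding in the Slater point gives $\sum_i \mu_i g_i(\bar{\ve{x}})\ge 0$ with every summand $\le 0$ and at least one $\mu_i>0$, a contradiction. Rescale so $\mu_0=1$. Taking $y_0=f(\ve{x})$, $y_i=g_i(\ve{x})$ then yields the Lagrangian bound $L(\ve{x}) := f(\ve{x}) + \sum_i \mu_i g_i(\ve{x}) \ge p^\ast$ for every $\ve{x}$. Evaluating at $\ve{x}^\ast$, where $L(\ve{x}^\ast) \le f(\ve{x}^\ast) = p^\ast$ since each $\mu_i g_i(\ve{x}^\ast)\le 0$, forces $\sum_i \mu_i g_i(\ve{x}^\ast)=0$, and combined with nonnegativity of each term this gives complementary slackness $\mu_i g_i(\ve{x}^\ast)=0$ termwise. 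It also shows $\ve{x}^\ast$ is an unconstrained global minimizer of the convex function $L$, from which stationarity follows: for convex Gateaux-differentiable $L$, minimality at $\ve{x}^\ast$ gives $\langle \nabla L(\ve{x}^\ast), d\rangle \ge 0$ for every direction $d$, and replacing $d$ by $-d$ upgrades this to $\nabla L(\ve{x}^\ast)=0$, i.e.\ $\nabla f(\ve{x}^\ast) + \sum_i \mu_i \nabla g_i(\ve{x}^\ast)=0$.

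The step I expect to require the most care — and the reason the statement is restated rather than merely cited — is the interface with infinite dimensions. The separation itself is harmless: there are only $m$ constraints, so the hyperplane lives in $\R^{m+1}$ no matter how large the domain of $\ve{x}$ is, and the multipliers $\mu_i$ are ordinary reals. The genuinely infinite-dimensional point is the final implication ``$\ve{x}^\ast$ minimizes the convex function $L$ $\Rightarrow$ gradient vanishes,'' which must be read with a fixed notion of derivative; I would use Gateaux differentiability, for which the two-sided directional argument above goes through verbatim with $\nabla L(\ve{x}^\ast)=0$ interpreted as the Gateaux derivative being the zero functional. I would also remark that the applications (e.g.\ program~\eqref{program:main}) additionally carry the box constraints $0\le q_s\le 1$ for all $s\in\sigSpace$, which are countably many; these can be folded into the domain of optimization so that the theorem applies with only the single nontrivial constraint, or, if one insists on keeping them explicit, the finite-dimensional separation is replaced by a Hahn–Banach separation in the appropriate sequence space — but this refinement is not needed for the use made in the paper. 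Finally, the hypothesis of the theorem, the Slater condition, has already been verified for program~\eqref{program:main} in the text immediately preceding the statement.
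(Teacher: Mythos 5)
Your proof is correct and is exactly the classical value-set/separating-hyperplane argument from Luenberger (and Boyd--Vandenberghe) that the paper cites for this theorem; the paper itself offers no proof, deferring entirely to \cite{luenberger1968optimization}, so there is nothing it does differently. Your closing observation is also apt: the only genuinely infinite-dimensional ingredient is reading the stationarity condition as a vanishing Gateaux derivative, and the finite index set $i=1,\ldots,m$ in the statement does not literally cover the countably many box constraints of program~\eqref{program:main}, which must be folded into the domain (or handled by a Hahn--Banach separation) exactly as you note.
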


\newcommand{\bmu}{\boldsymbol{\mu}}

\begin{lemma}
\label{lemma:imposs}
There exists a $\theB$ such that the myopic probabilities given by the
convex program \eqref{program:main} are equal to the myopic
probabilities given by program \eqref{program:dp}.
\end{lemma}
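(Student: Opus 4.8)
The plan is to compare the Karush--Kuhn--Tucker conditions of the two convex programs. From the KKT conditions of program~\eqref{program:main} I will extract the Lagrange multiplier $\mu_0$ attached to its single coupling constraint, and then pick $\theB$ as an explicit function of $\mu_0$, $\gamma$ so that $\qStar$ itself satisfies the KKT conditions of program~\eqref{program:dp}; since that program is convex, KKT sufficiency will then force $\qHat = \qStar$. Throughout write $h_s(q) := \frac{1-q}{\inv{F_s}(q)}$, which by \semiReg{}ity is convex, is nonnegative, satisfies $h_s(1)=0$, and (as argued below) has $h_s'(q)<0$ for $q\in[0,1)$.

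First I would apply Theorem~\ref{thm:kkt} to program~\eqref{program:main}, which satisfies Slater: there exist multipliers $\mu_0\ge 0$ for $\lambda\sumSig\SP{s}h_s(\MP{s})-\sumSig\SP{s}\MP{s}\le 0$ and $\nu_s,\xi_s\ge 0$ for $\MP{s}\le 1$ and $\MP{s}\ge 0$, with stationarity
\[
  \SP{s}(1-\mu_0) + \mu_0\lambda\SP{s}h_s'(\MP{s}^\ast) + \nu_s - \xi_s = 0 \qquad\text{for every }s,
\]
together with complementary slackness. The crux is to show $\mu_0\in(0,1)$. For $\mu_0>0$: the coupling constraint must be tight at $\qStar$, since $\qStar\ne\ve 0$ ($\ve 0$ is infeasible as $h_s(0)>0$), and if the constraint were slack we could decrease some coordinate $\MP{s}^\ast>0$ slightly, keeping feasibility (the slack varies continuously in $\MPV$) while strictly lowering the objective. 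For $\mu_0<1$: since the constraint is tight, $\qStar\ne\ve 1$ (which would force $1=0$), so choose $s_0$ with $\MP{s_0}^\ast<1$, hence $\nu_{s_0}=0$, and stationarity gives $\SP{s_0}(1-\mu_0)=\xi_{s_0}+\mu_0\lambda\SP{s_0}\lvert h_{s_0}'(\MP{s_0}^\ast)\rvert>0$; here $h_{s_0}'(q)<0$ on $[0,1)$ follows from $h_{s_0}$ being convex, nonnegative and vanishing only at $1$ (if $h_{s_0}'(q_0)\ge 0$, convexity makes $h_{s_0}$ nondecreasing on $[q_0,1]$, contradicting $h_{s_0}(1)=0<h_{s_0}(q_0)$).

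Next I would set $\kappa:=\mu_0/(1-\mu_0)\in(0,\infty)$ and $\theB:=\frac{(1-\gamma)(1+\kappa)}{\kappa+1-\gamma}$, and check by a short computation that $\theB\in(1-\gamma,1)$ and that $\frac{(1-\gamma)(1-\theB)}{\theB-(1-\gamma)}=\kappa$. In particular $\theB<1$, so arm~$1$ is the myopic arm of $\Dime$ and the derivation of program~\eqref{program:dp} is valid. Using $\sumSig\SP{s}=1$ and $(1-\MP{s})/\inv{F_s}(\MP{s})=h_s(\MP{s})$, the objective of program~\eqref{program:dp} equals $\theB + (1-\gamma-\theB)\sumSig\SP{s}\MP{s} - \lambda(1-\gamma)(1-\theB)\sumSig\SP{s}h_s(\MP{s})$; since $\theB>1-\gamma$, maximizing it over the box is equivalent to \emph{minimizing} $\sumSig\SP{s}\MP{s} + \lambda\kappa\sumSig\SP{s}h_s(\MP{s})$, by the choice of $\kappa$ (the constant $\theB$ and the positive factor $\theB-(1-\gamma)$ do not affect the argmax).

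Finally, dividing the stationarity condition of program~\eqref{program:main} by $1-\mu_0>0$ gives exactly the stationarity condition of this reformulated program~\eqref{program:dp} at $\qStar$, with the rescaled multipliers $\nu_s/(1-\mu_0),\ \xi_s/(1-\mu_0)\ge 0$, which still satisfy complementary slackness; and $\qStar$ is feasible for the box. Hence $\qStar$ is a KKT point of the convex program~\eqref{program:dp}, so by sufficiency of KKT for convex programs it is optimal, i.e.\ $\qHat=\qStar$ for this $\theB$. I expect the main obstacle to be the proof that $\mu_0\in(0,1)$ together with the care required to run KKT in the (countably) infinite-dimensional setting --- using Gateaux derivatives, and the subdifferential form of Theorem~\ref{thm:kkt} where $h_s$ fails to be differentiable --- and to dispatch degenerate corner cases (e.g.\ $\inv{F_s}(0)=0$, or signals with $\SP{s}=0$) that could otherwise spoil the perturbation and strict-monotonicity arguments.
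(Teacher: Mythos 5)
Your strategy is fundamentally the same as the paper's: both proofs hinge on applying the KKT conditions to program~\eqref{program:main}, extracting the multiplier of the coupling constraint ($\mu_0$ in your notation, $\sigma$ in the paper's), showing it lies in $(0,1)$, and then choosing $\theB$ so that the stationarity conditions of the two programs align. Your explicit $\theB=\frac{(1-\gamma)(1+\kappa)}{\kappa+1-\gamma}$ with $\kappa=\mu_0/(1-\mu_0)$ is exactly the value determined by equating the interior stationarity conditions $h_s'(q^\ast_s)=\frac{\mu_0-1}{\lambda\mu_0}$ and $h_s'(\hat q_s)=\frac{1-\gamma-\theB}{\lambda(1-\gamma)(1-\theB)}$, and your algebra checks out. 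Where you genuinely diverge is the endgame: the paper separately proves that $\qHat$ is interior for program~\eqref{program:dp}, equates the two derivative values, and then uses semi-regularity to argue the derivative of \eqref{program:dp}'s objective is constant on $[\hat q_s,q^\ast_s]$, so the objective is unchanged by substituting $q^\ast_s$. You instead rewrite \eqref{program:dp} as the box-constrained convex minimization of $\sumSig\SP{s}\MP{s}+\lambda\kappa\sumSig\SP{s}h_s(\MP{s})$ and observe that dividing \eqref{program:main}'s stationarity by $1-\mu_0$ exhibits $\qStar$ as a KKT point of this reformulation, whence optimality by KKT sufficiency. Your route is cleaner --- it avoids the interiority argument for $\qHat$ and the constant-derivative step --- at the cost of concluding only that $\qStar$ is \emph{an} optimizer of \eqref{program:dp} (which is all the lemma needs; the paper likewise concludes ``without loss of generality, $\qHat=\qStar$'').

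One step as written is a non sequitur: you justify $\mu_0>0$ by showing the coupling constraint is tight at $\qStar$, but tightness of a constraint does not imply its multiplier is strictly positive (complementary slackness only runs the other way). The correct one-line argument, for which you already have the ingredients: if $\mu_0=0$, stationarity reads $\SP{s}+\nu_s-\xi_s=0$ with $\nu_s,\xi_s\ge 0$, forcing $\xi_s\ge\SP{s}>0$ and hence $\MP{s}^\ast=0$ for every $s$ by complementarity; but $\qStar=\ve 0$ violates the coupling constraint since $h_s(0)>0$. Note also that your proof of $\mu_0<1$ uses $\mu_0\lambda\SP{s_0}\lvert h_{s_0}'\rvert>0$ and so silently relies on $\mu_0>0$ already being established; with the fix above the dependence is harmless. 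Everything else --- the sign of $h_s'$, the exclusion of $\qStar=\ve 1$, the preservation of complementary slackness under rescaling of multipliers --- is sound.
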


\begin{proof}
First, we observe that the boundary case $q_s = 0$ can be safely
ignored in both convex programs \eqref{program:main} and
\eqref{program:dp}. 
This is because $\frac{1}{\inv{F_s}(q)}$ approaches infinity as 
$q_s \to 0$. For program \eqref{program:main}, this violates the
non-trivial feasibility constraint; for program \eqref{program:dp},
this is clearly sub-optimal.

Next, we prove that every local optimum $\ve{q}^\ast$ is interior
for convex program (\ref{program:main}).
Applying the stationarity condition of the KKT Theorem to program
(\ref{program:main}), we know that for every $s \in \sigSpace$, the
optimal solution $\qStar$ satisfies: 
\begin{align}
- \SP{s} & = \mu_s + \sigma \Big(\lambda \SP{s} \frac{\partial}{\partial x} \frac{1 - x}{\inv{F_s}(x)}\vert_{x = q^\ast_s} - \SP{s} \Big).
\end{align}
Here, $\mu_s$ is the multiplier for the constraint $0 \le q_s \le 1$,
and $\sigma$ is the multiplier for the non-trivial constraint.
It is important that $\sigma$ is a constant that is independent of any
MAB instance.

Setting $q^\ast_s = 1$ for all $s \in \sigSpace$ is clearly not an
optimal solution as there will be slack in the constraints and
decreasing $q^\ast_s$ can improve the objective for program
\eqref{program:main}.
Hence, there is an $s \in \sigSpace$ such that $q^\ast_s < 1$. 
By complementarity, $\mu_s = 0$, so that
$- \SP{s} = \sigma (\lambda \SP{s} \pard{x}
\frac{1 - x}{\inv{F_s}(x)}|_{x = q^\ast_s} - \SP{s})$.
Canceling $\SP{s}$, we have 
$\pard{x} \frac{1 - x}{\inv{F_s}(x)}|_{x = q^\ast_s} 
= \frac{\sigma - 1}{\lambda \sigma}$.
But we know that $F_s$ is \semiReg, so 
$\pard{x} \frac{1-x}{\inv{F_s}(x)}|_{x = q^\ast_s} 
< \pard{x} \frac{1-x}{\inv{F_s}(x)}|_{x = 1} = 0$.
(Equality is impossible, as it would imply 
$\pard{x} \frac{1-x}{\inv{F_s}(x)} = 0$ for $x \in [q^\ast_s, 1]$ 
by definition of semi-regularity. 
But then, $\frac{1 - x}{\inv{F_s}(x)}$ would not be strictly monotone,
violating the fact that $F_s$ is non-decreasing.) 
Because $\frac{\sigma - 1}{\lambda \sigma} < 0$, we infer that
$\sigma \in (0, 1)$.

On the other hand, if $q^\ast_s = 1$ for some $s \in \sigSpace$, then
$\mu_s \ge 0$ and $-\SP{s} = \mu_s - \sigma \cdot \SP{s}$ 
(as $\pard{x} \frac{1-x}{\inv{F_s}(x)}|_{x = 1} = 0$), so $\sigma \ge 1$.
This would be a contradiction to $\sigma < 1$, 
so we infer that $q^\ast_s < 1$ for every $s \in \sigSpace$.

\smallskip

For program~\eqref{program:dp}, we first compute the partial
derivative of the objective w.r.t.~$q_s$; it is
\begin{equation}
\SP{s} \left(
1 - \gamma - \lambda (1 - \gamma) (1 - \theB) (1 - q_s) \pard{x}\frac{1}{\inv{F_s}(x)}|_{x = q_s} - \theB + \lambda \frac{(1 - \gamma)(1 - \theB)}{\inv{F_s}(x)}
\right).
\end{equation}
Rearranging and using the fact that 
$\pard{x} \frac{1 - x}{\inv{F_s}(x)} = (1 - x)\pard{x}
\frac{1}{\inv{F_s}(x)} - \frac{1}{\inv{F_s}(x)}$, we can rewrite it as
\begin{equation}
\label{eq:pd_of_dp}
\SP{s} \left(
1 - \gamma - \lambda(1-\gamma)(1 - \theB) \pard{x}
\frac{1 - x}{\inv{F_s}(x)}|_{x = q_s} - \theB
\right).
\end{equation}

Let $\theB \in (1 - \gamma, 1)$ solve the equation 
$\frac{1 - \gamma - \theB}{(1 - \gamma)(1 - \theB)} 
= \frac{\sigma - 1}{\lambda \sigma}$. 
A solution exists because $\sigma \in (0, 1)$ and 
$\frac{1 - \gamma - \theB}{(1 - \gamma)(1 - \theB)}$ varies continuously from $0$ to $-\infty$ as $\theB$ varies from $1 - \gamma$ to $1$.
Notice that the constant arm is still the myopic choice when $\theB < 1$.

%So $\theB = \frac{(1 - \gamma)(\lambda \sigma - 1) - \sigma (\lambda + 1)}{\lambda \sigma - (1 - \gamma)(\sigma - 1)}$
Recall that $\qHat$ is a global, and thus local, maximum of
program \eqref{program:dp}.
Assume that $\hat{q}_s = 1$ for some $s$.
Then, $\pard{q_s} \frac{1-x}{\inv{F_s}(x)}|_{x = \hat{q}_s} = 0$,
so the partial derivative of the objective \eqref{eq:pd_of_dp} 
w.r.t.~$q_s$ will be $\SP{s} (1 - \gamma - \theB)$, 
which is negative because $\theB > 1 - \gamma$.
Thus, decreasing $\hat{q}_s$ would maintain feasibility and increase
the value of the program \eqref{program:dp}. 
Hence, any local maximum of the program \eqref{program:dp} must be an
interior point as well.

In summary, both $\qStar$ and $\qHat$ are interior.
By the KKT condition for $\qStar$, all the $\mu_i$'s are zero,
and thus 
$\pard{x} \frac{1 - x}{\inv{F_s}(x)}|_{x = q^\ast_s} 
= \frac{\sigma - 1}{\lambda \sigma}$.
Because $\qHat$ is a local optimum,  expression \eqref{eq:pd_of_dp}
must be equal to zero, so
$\pard{x} \frac{1 - x}{\inv{F_s}(x)}|_{x = \hat{q}_s} 
= \frac{1 - \gamma - \theB}{(1 - \gamma)(1 - \theB)}$. 

By the choice of $\theB$, we then obtain that
$ \pard{x} \frac{1 - x}{\inv{F_s}(x)}|_{x = q^\ast_s}
= \pard{x} \frac{1 - x}{\inv{F_s}(x)}|_{x = \hat{q}_s} 
= \frac{1 - \gamma - \theB}{(1 - \gamma)(1 - \theB)}$.
%But by semi-regularity, we know that $\pard{x}
%\frac{1 - x}{\inv{F_s}(x)}$ is monotone and thus $\hat{q}_s = q^\ast_s$ for all $s \in \sigSpace$.
By semi-regularity of $F_s$, $\pard{x} \frac{1 - x}{\inv{F_s}(x)}$ is
negative and non-decreasing. 
Thus, it must be constant on the entire interval $[\hat{q}_s, q^\ast_s]$.

This means that the derivative \eqref{eq:pd_of_dp} of the
  objective of program~\eqref{program:dp} is constant over the entire
  interval, and we established above that it is 0 at $\hat{q}_s$.
  Thus, the objective function of program~\eqref{program:dp} is
  unchanged by replacing $\hat{q}_s$ with $q^*_s$.
By performing this operation for all $s$, we eventually obtain that
without loss of generality, $\qHat = \qStar$.
\end{proof}

Based on this lemma, we now prove the main theorem in this section. 
This also proves the second half of Theorem~\ref{theorem:main_at_intro} in the introduction.
\begin{theorem} \label{theorem:last_imposs}
The policy $\textbf{TES}^\ast$, parameterized by $\qStar$, has optimal
approximation ratio $1 - p^\ast \gamma$.
In particular, there exists a \emph{worst-case} MAB instance in which
the optimal policy with payments achieves exactly a Lagrangian reward
of a $(1 - p^\ast \gamma)$ fraction of the optimum.
\end{theorem}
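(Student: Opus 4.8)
The lower bound is already in hand: Theorem~\ref{theorem:main_algo} guarantees that $\textbf{TES}^\ast$ (parameterized by $\qStar$, with $\eta = (1-p^\ast)\gamma/(1-p^\ast\gamma)$) attains Lagrangian reward at least $(1-p^\ast\gamma)\,\opt_\gamma$ on \emph{every} MAB instance. Hence the entire content of the theorem is the matching upper bound: to exhibit one MAB instance on which \emph{no} policy that uses payments exceeds $(1-p^\ast\gamma)\,\opt_\gamma$. Since $\textbf{TES}^\ast$ is itself a policy with payments, such an instance simultaneously certifies that the analysis of $\textbf{TES}^\ast$ is tight and that $1-p^\ast\gamma$ is the optimal approximation ratio: the optimal policy with payments dominates $\textbf{TES}^\ast$ instance by instance, so if even it cannot beat $1-p^\ast\gamma$ on that instance, then no policy can, and the worst-case ratio of $\textbf{TES}^\ast$ is squeezed to exactly $1-p^\ast\gamma$. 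I would take the instance to be $\Dime$ with $\theB\in(1-\gamma,1)$ the value produced in the proof of Lemma~\ref{lemma:imposs}.

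First I would pin down the benchmark. On $\Dime$, a principal allowed to pull arms herself should repeatedly pull fresh non-constant arms: from the (by symmetry, unique) state in which every explored non-constant arm is bad, her time-discounted reward $W$ satisfies $W = \tfrac1M\theB M + (1-\tfrac1M)\gamma W$, so $W=\theB/(1-(1-\tfrac1M)\gamma)\to \theB/(1-\gamma)$ as $M\to\infty$; since $\theB>1-\gamma$ this exceeds the value $1$ of the constant arm, so this policy is optimal and $\opt_\gamma=\theB/(1-\gamma)$. Next I would argue that program~\eqref{program:dp} is a genuine upper bound on the Lagrangian value of \emph{any} policy with payments on $\Dime$, not merely of stationary signal-indexed ones. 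On $\Dime$ there is effectively one decision state: a bad arm stays bad and a good arm, once found, has myopic value dominating everything so it never needs a further payment; thus from the only live state the principal either induces myopic play of the constant arm (with signal-conditional probability $q_s$, for free) or incentivizes a fresh pull at the minimal payment $c_s=(1-\gamma)(1-\theB)/\inv{F_s}(q_s)$ derived in the text. A one-step exchange argument gives that the optimal Lagrangian value for each fixed $M$ obeys \eqref{eq:diamond_opt}; letting $M\to\infty$ and optimizing over $(q_s)$ shows the optimal policy with payments has value $V$ with $(1-\gamma)V=\max_{q}\bigl[(1-\gamma)\sumSig\SP{s}q_s+\sumSig\SP{s}(1-q_s)(\theB-\lambda c_s)\bigr]$, attained at $\qHat$.

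Finally I would invoke Lemma~\ref{lemma:imposs}, which for exactly this $\theB$ lets us take $\qHat=\qStar$, and then compute. The nontrivial constraint of \eqref{program:main} is tight at $\qStar$: by semi-regularity $\frac{1-x}{\inv{F_s}(x)}$ is strictly decreasing on $[q^\ast_s,1)$, so lowering any coordinate would both decrease the left side and increase the right side of that constraint, and slack would contradict the optimality of $\qStar$. Hence $\sumSig\SP{s}q^\ast_s=p^\ast=\lambda\sumSig\SP{s}\frac{1-q^\ast_s}{\inv{F_s}(q^\ast_s)}$ and $\sumSig\SP{s}(1-q^\ast_s)=1-p^\ast$. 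Substituting $c_s=(1-\gamma)(1-\theB)/\inv{F_s}(q^\ast_s)$ into the expression for $V$ makes the payment terms collapse to $\lambda(1-\gamma)(1-\theB)\cdot p^\ast/\lambda=(1-\gamma)(1-\theB)p^\ast$, and the algebra gives $(1-\gamma)V=(1-\gamma)p^\ast+\theB(1-p^\ast)-(1-\gamma)(1-\theB)p^\ast=\theB(1-p^\ast\gamma)$, so $V=(1-p^\ast\gamma)\,\theB/(1-\gamma)=(1-p^\ast\gamma)\,\opt_\gamma$. Together with Theorem~\ref{theorem:main_algo} this proves both assertions of the theorem.

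I expect the main obstacle to be the upper-bound step of the second paragraph: making rigorous that \eqref{program:dp}/\eqref{eq:diamond_opt} bound \emph{all} history-dependent, randomized policies with payments (in particular that randomizing the payment itself buys nothing beyond the induced $q_s$, and that the agent's tie-breaking behavior is as assumed), and that the $M\to\infty$ limit commutes with the optimization over $(q_s)$; this can be handled by first fixing $M$, verifying \eqref{eq:diamond_opt} for the optimal-with-payments value via the one-step exchange argument on the essentially unique post-decision state, and then passing to the limit. Once Lemma~\ref{lemma:imposs} and the constraint-tightness observation are in place, the remainder is the bookkeeping of the last paragraph.
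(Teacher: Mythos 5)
Your proposal is correct and follows essentially the same route as the paper: invoke Lemma~\ref{lemma:imposs} to identify $\qHat$ with $\qStar$ on the Diamonds-in-the-rough instance with the $\theB$ chosen there, note $\opt_\gamma = \theB/(1-\gamma)$, use tightness of the nontrivial constraint of program~\eqref{program:main} to collapse the payment terms when substituting $\qStar$ into the objective of program~\eqref{program:dp}, and combine with Theorem~\ref{theorem:main_algo}. Your extra attention to why program~\eqref{program:dp} bounds all history-dependent policies is a reasonable point of rigor, but the paper treats that as already established by the derivation of Equation~\eqref{eq:diamond_opt}, so it does not constitute a different approach.
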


\begin{proof}
Lemma~\ref{lemma:imposs} showed that for every signaling scheme,
there is a $\theB$ (a parameter of the non-constant arms) such that
$\qStar$ and $\qHat$ coincide on the MAB instance \Dime. 
% \dkdelete{Note that this \emph{does not} mean that the policy
% $\textbf{TES}^\ast$ (parameterized by $\qStar$) runs exactly the same
% as the $\mathcal{A}(\qHat)$ (as it turns out later they have the same
% performance).}
It remains to show that these instances are in fact worst-case
instances, i.e., that the ratio between
$\DR_{\lambda}(\mathcal{A}(\qHat))$ and $\opt_\gamma$ in \Dime is
exactly $1 - p^\ast \gamma$.

First note that $\opt_\gamma = \frac{\theB}{1 - \gamma}$:
once a non-constant arm is revealed to be in a good state, it is optimal to
pull that arm forever.
We next show that 
$\DR_{\lambda}(\mathcal{A}(\qHat)) 
= \frac{\theB \cdot (1 - p^\ast \gamma)}{1 - \gamma}$.
  %It is not clear how to directly analyze the performance of $\textbf{TES}^\ast$ on Diamonds-in-the-rough instance, but, $\mathcal{A}(\qHat)$ is the optimal policy \emph{with payments} and therefore $\DR_{\lambda}(\textbf{TES}^\ast) \le \DR_{\lambda}(\mathcal{A}(\qHat))$.
Thereto, we use both convex programs and the fact that 
$\qHat = \qStar$. 
Substituting $\qStar$ into the objective of
program~\eqref{program:dp}, we obtain the following Lagrangian reward
for the time-expanded policy:

\begin{align*}
\DR_{\lambda}(\mathcal{A}(\qHat)) 
& = \frac{1}{1 - \gamma} \Big((1 - \gamma) \sumSig \SP{s} q^\ast_s + \sumSig \SP{s} ( 1- q^\ast_s)(\theB -\lambda \frac{(1-\gamma)(1 - \theB)}{\inv{F_s}(q^\ast_s)})\Big).
\end{align*}

The non-trivial constraint of program \eqref{program:main} has to be
tight; otherwise, the objective could be increased by lowering
individual $q^*_s$ values.
Therefore, $\sum_{s \in \sigSpace} \SP{s} q^\ast_s 
= \lambda \sum_{s \in \sigSpace} \SP{s} \frac{1 - q^\ast_s}{F_s^{-1}(q^\ast_s)}$,
which allows us to simplify the previous expression to
\begin{align*}
\DR_{\lambda}(\mathcal{A}(\qHat))
& = \frac{\theB (1 - p^\ast \gamma)}{1 - \gamma}
\; = \;  (1 - p^\ast \gamma) \cdot \opt_\gamma.
\end{align*} 
Thus, on this instance,
$\DR_{\lambda}(\textbf{TES}^\ast) \leq (1 - p^\ast \gamma) \opt_\gamma$. 
By Theorem \ref{theorem:main_algo}, this bound is tight.\end{proof}

\newcommand{\THOLD}{\theta}
\section{Full Information Revelation}
\label{sec:full-information}

Our main positive results hold for the case of countable or finite
signal spaces, whereas uncountable signal spaces lead to technical
challenges. 
However, one important special case of uncountable signal spaces is
more easily handled, namely, when the principal learns the exact \Umr
$r$, i.e., $s=r$.
We show that in that case, $r$ itself can be used as the sole
randomization device, leading to a \emph{threshold policy}.
In this section, we assume that the distribution $\CDF$ is
continuous (an assumption that was not needed in Section \ref{sec:algo}).

\subsection{Optimal Time-Expanded Policy}

Our first goal will be to show that the optimal time-expanded policy
fixes a threshold $\THOLD$ and only incentivizes agents whose \Umr lies
above the threshold. Then, an optimization over threshold policies is
easy to carry out.
 
\begin{definition}
  The threshold policy $\TP{\THOLD}$ with threshold $\THOLD$ is 
  defined as follows:
% \dkdelete{time-expanded policy of $\ALG$ under a full information signal
%   scheme.}
When an agent with \Umr $r$ arrives, he is incentivized with
  suitable payment to pull $i_t^\ast = \ALG(\hat{\ve{S}}_t)$ if
  and only if $r \ge \THOLD$.
% \dkdelete{, the policy $\TP{r}$ 
%    deterministically, otherwise no payment will be offered to the agent.}
\end{definition}

\begin{lemma} \label{lem:one-round-threshold}
Consider a single arm pull, and a value $q \in [0,1]$. 
Among all policies that have this arm pull be myopic with
probability $q$, the one minimizing expected cost is a threshold
policy.
\end{lemma}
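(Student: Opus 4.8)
The plan is to reduce an arbitrary policy for a single arm pull to a one-dimensional object --- the probability with which the non-myopic pull is incentivized, viewed as a function of the realized \Umr $r$ --- and then to recognize the resulting cost-minimization problem as a continuous (fractional) knapsack, whose optimum is greedy and hence a threshold rule.

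First I would normalize the policy. Write $x - y \ge 0$ for the gap between the expected reward of the myopic arm and that of the arm $i_t^\ast$ the principal would like pulled; if $x = y$ there is nothing to prove, so assume $x > y$. For a fixed realized \Umr $r$, the only feature of the (possibly randomized) payment that matters is whether it is at least $\frac{x-y}{r}$ (so the agent pulls $i_t^\ast$) or strictly smaller (so he plays myopically): conditioned on the former, paying more than $\frac{x-y}{r}$ is wasteful, and conditioned on the latter, paying anything positive rather than $0$ is wasteful. Hence, up to a weak improvement in cost and with no change to which arm is pulled, every policy is described by a measurable function $\beta : \R^+ \to [0,1]$, where $\beta(r)$ is the probability that, upon observing \Umr $r$, the agent is incentivized to pull $i_t^\ast$ at payment exactly $\frac{x-y}{r}$ (and offered nothing otherwise). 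In these terms the myopic probability is $q = 1 - \int_0^\infty \beta(r)\, f(r)\, \dif r$ and the expected cost is $(x-y) \int_0^\infty \beta(r)\, \frac{f(r)}{r}\, \dif r$.

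The problem thus becomes: minimize $\int_0^\infty \beta(r)\, \frac{f(r)}{r}\, \dif r$ over measurable $\beta$ with $0 \le \beta \le 1$, subject to $\int_0^\infty \beta(r)\, f(r)\, \dif r = 1 - q$. This is exactly a fractional knapsack: one must ``buy'' a total probability mass $1-q$, and acquiring a unit of mass located at \Umr $r$ adds $\frac1r$ to the objective. Since the per-unit price $\frac1r$ is strictly decreasing in $r$, an optimal $\beta^\ast$ buys the mass with the largest $r$ first; formally, every optimal $\beta^\ast$ equals, up to a null set, $\beta^\ast(r) = \mathbf{1}[r > \THOLD]$ for the threshold $\THOLD = \inv{F}(q)$, since then $\Prob{r > \THOLD} = 1-q$ --- and continuity of $\CDF$ in this section is precisely what guarantees the target mass $1-q$ is attained exactly, with no fractional value needed at the threshold. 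This $\beta^\ast$ is exactly the threshold policy $\TP{\THOLD}$.

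I would make the optimality of this greedy choice rigorous by a standard exchange argument: if an optimal $\beta$ is not a.e.\ of the above form, then the mass constraint forces the existence of $r_1 \le \THOLD < r_2$ together with positive-measure sets of \Umrs near $r_1$ on which $\beta$ is bounded away from $0$ and near $r_2$ on which $\beta$ is bounded away from $1$; transferring an infinitesimal amount of acquired mass from the $r_1$-region to the $r_2$-region preserves $\int_0^\infty \beta(r) f(r)\, \dif r = 1-q$ while changing the objective by a positive multiple of $\big(\frac{1}{r_2} - \frac{1}{r_1}\big) < 0$, contradicting optimality. The main obstacle is not a single hard idea but keeping the reduction and the measure-theoretic bookkeeping clean --- in particular, carefully arguing that intermediate payments are wasteful, that the target mass $1-q$ is attained exactly thanks to continuity of $\CDF$, and that the exchange can be carried out on sets of positive measure rather than isolated points. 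Beyond that, the argument is just the elementary ``bang-per-buck'' principle.
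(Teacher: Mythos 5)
Your proof is correct and rests on the same core idea as the paper's: the per-agent cost of incentivizing an agent with conversion ratio $r$ is proportional to $1/r$, which is decreasing, so the incentivized probability mass should be pushed to the largest values of $r$. The paper executes this as a single direct comparison against the threshold policy (bounding $1/r$ by $1/\theta$ on either side of the threshold so that the two error terms cancel by the mass-balance identity), whereas you phrase it as a fractional knapsack solved greedily and verified by a local exchange argument; these are the same rearrangement inequality in slightly different clothing.
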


\begin{proof}
Consider any policy $\mathcal{P}$, and assume that
$q = \Prob{\text{$\mathcal{P}$ lets the agent play myopically}}$.
Now, define $\THOLD = \sup\{r| F(r) = q\}$, so that $F(\THOLD) = q$.

As before, we write
$x= \max_i \ExpectC{\reward{t}{i}}{\ve{S}_t,\Zed_{0:t-1}}$
and $ y= \ExpectC{\reward{t}{i_t^\ast}}{\ve{S}_t,\Zed_{0:t-1}}$.

For any $r$, let $P_r = 1$ iff $\mathcal{P}$ incentivizes agents
with \Umr r.
The expected payment of policy $\mathcal{P}$ is then
\begin{align*}
(x-y) \cdot \int_0^\infty \frac{P_r}{r} \dif F(r)
& = (x-y) \cdot \left(
  \int_{\THOLD}^\infty \frac{1}{r} \dif F(r)
+ \int_0^{\THOLD} \frac{P_r}{r} \dif F(r)
- \int_{\THOLD}^{\infty} \frac{1-P_r}{r} \dif F(r) \right) \\
& \geq (x-y) \cdot \left(
  \int_{\THOLD}^\infty \frac{1}{r} \dif F(r)
+ \int_0^{\THOLD} \frac{P_r}{\THOLD} \dif F(r)
- \int_{\THOLD}^{\infty} \frac{1-P_r}{\THOLD} \dif F(r) \right) \\
& = (x-y) \cdot \int_{\THOLD}^\infty \frac{1}{r} \dif F(r),
\end{align*}
where the last step used the definition of $\THOLD$, implying that the
measure (under $\CDF$) of \Umr $r < \THOLD$ for which $\mathcal{P}$
incentivizes agents is the same as the measure of $r > \THOLD$ for
which $\mathcal{P}$ does not incentivize agents.

The final expression is the expected Lagrangian cost of
$\mathcal{P}'$, so we have shown that $\mathcal{P}'$ has no larger
cost than $\mathcal{P}$ in this one round.
\end{proof}

\begin{lemma} \label{lem:threshold}
The \lagobj of any time-expanded policy
$\mathcal{P}$ of $\ALG$ is (weakly) dominated by that of a threshold
policy.
\end{lemma}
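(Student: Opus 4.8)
The plan is to lift the single-round statement of Lemma~\ref{lem:one-round-threshold} to the entire (infinite-horizon, time-discounted) policy by replacing, round by round, the payment rule of an arbitrary time-expanded policy $\mathcal{P}$ with the threshold payment rule, while keeping the induced distribution over arm pulls (and hence over the execution of $\ALG$ and all future states) unchanged. The key observation is that a time-expanded policy is completely specified by, for each round $t$ and each history, the probability $q_t$ with which the agent is made to play myopically; the reward term $\DR(\mathcal{P})$ depends only on these probabilities (through the coupled states $\hat{\ve{S}}_t$), and is therefore identical for $\mathcal{P}$ and for the threshold policy $\mathcal{P}'$ that realizes the same $q_t$ via a threshold $\THOLD_t = \sup\{r : F(r) = q_t\}$. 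Only the payment term $\DC$ can differ, and Lemma~\ref{lem:one-round-threshold} shows that in each round the threshold rule has (weakly) smaller expected Lagrangian cost, conditioned on the state entering that round.

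Concretely, I would argue as follows. Fix a time-expanded policy $\mathcal{P}$ of $\ALG$. Define $\mathcal{P}'$ to be the policy that, whenever $\mathcal{P}$ would make the agent play myopically with conditional probability $q$ (given the current round, history, and coupled state), instead offers the threshold payment $\frac{x-y}{\THOLD}$ with $\THOLD = \sup\{r: F(r)=q\}$ (randomizing as in the definition of $\Te{q}{A}$ if $F$ has an atom there — but since we assume $\CDF$ is continuous in this section, no extra randomization is needed and the incentive is deterministic). Because the marginal probability of myopic play in each round is preserved, an easy induction on $t$ shows that the joint distribution of $(\Zed_0,\dots,\Zed_t, \ve{S}_0,\dots,\ve{S}_t, \hat{\ve{S}}_0,\dots,\hat{\ve{S}}_t)$ is the same under $\mathcal{P}$ and under $\mathcal{P}'$. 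Consequently $\DR(\mathcal{P}') = \DR(\mathcal{P})$.

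For the payment term, write $\DC(\mathcal{P}) = \sum_{t=0}^\infty \gamma^t \Expect{\payment{t}{i_t}}$ and condition each summand on the state $\ve{S}_t$ and the coupling variables entering round $t$ (which, by the previous paragraph, have the same distribution under both policies). Lemma~\ref{lem:one-round-threshold}, applied to that single arm pull with $q = q_t$, gives $\Expect{\payment{t}{i_t} \mid \ve{S}_t, \Zed_{0:t-1}}$ under $\mathcal{P}'$ is at most the corresponding quantity under $\mathcal{P}$; taking expectations and summing with the nonnegative weights $\gamma^t$ yields $\DC(\mathcal{P}') \le \DC(\mathcal{P})$. Therefore $\DR_\lambda(\mathcal{P}') = \DR(\mathcal{P}') - \lambda \DC(\mathcal{P}') \ge \DR(\mathcal{P}) - \lambda \DC(\mathcal{P}) = \DR_\lambda(\mathcal{P})$, and $\mathcal{P}'$ is a threshold policy, proving the claim.

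The main obstacle is making the coupling argument fully rigorous when the round-$t$ myopic probability $q_t$ is itself a random variable depending on the realized history: one must be careful that "threshold policy" is interpreted as a policy whose threshold $\THOLD_t$ may adapt to $(\mathcal{H}_t, \ve{S}_t)$ (which is consistent with the definition given, since the payment is chosen as a function of the round and state), and that the conditional version of Lemma~\ref{lem:one-round-threshold} applies verbatim with $x,y$ the conditional expected rewards. A secondary, more technical point is justifying the interchange of expectation and the infinite discounted sum, which is routine given that rewards and the induced payments are bounded on each MAB instance (payments are $\frac{x-y}{\THOLD}$ with $\THOLD$ bounded below on the support away from $0$, exactly as in the feasibility discussion for program~\eqref{program:main}) and $\gamma < 1$.
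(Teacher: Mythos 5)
Your proof is correct and follows essentially the same route as the paper: preserve the per-round myopic probability (hence the reward, via the coupling), and invoke Lemma~\ref{lem:one-round-threshold} round by round to show the threshold rule has no larger expected cost, so the Lagrangian objective weakly improves. The paper's version is shorter only because a time-expanded policy by definition uses the \emph{same} probability $q$ in every round, so a single threshold $\THOLD=\sup\{r\mid F(r)=q\}$ suffices and the history-adaptive generality you worry about is not needed.
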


\begin{proof}
Consider a time-expanded policy $\mathcal{P}$. 
Thus, the probability
$q$ with which $\mathcal{P}$ lets the agent play myopically
is the same in each round.
As in the proof of Lemma~\ref{lem:one-round-threshold}, we let
$\THOLD = \sup\{r| F(r) = q\}$.
Because $F(\THOLD) = q$, in each round, the reward of
$\mathcal{P}$ is the same as that of the 
threshold policy with threshold $\THOLD$.

Lemma~\ref{lem:one-round-threshold} establishes that in each round,
the cost of $\mathcal{P}'$ is no more than that of $\mathcal{P}$;
thus, the \lagobj for $\mathcal{P}'$ is at least as large as for $\mathcal{P}$.
\end{proof}

%The proof of both lemmas is given in the appendix. 

Because of Lemma \ref{lem:threshold}, it suffices to study the
optimal threshold policy, and determine the correct threshold.

As before, we let 
$x = \max_i \ExpectC{\reward{t}{i}}{\ve{S}_t,\Zed_{0:t-1}}$, and
$y = \ExpectC{\reward{t}{i_t^\ast}}{\ve{S}_t,\Zed_{0:t-1}}$.
The expected Lagrangian reward of $\TP{\THOLD}$ is 
\begin{align*}
x \cdot F(\THOLD) 
  + \int_{\THOLD}^\infty y-\frac{\lambda(x-y)}{r} \dif F(r)
& = y+ (x-y) \cdot (F(\THOLD)- \lambda \int_{\THOLD}^\infty \frac{\dif F(r)}{r}).
\end{align*}

Because $\CDF$ was assumed continuous, so is
$H(\THOLD) = F(\THOLD)- \lambda \int_{\THOLD}^\infty \frac{\dif F(r)}{r}$;
and because $H(0) < 0, \lim_{\THOLD \to \infty} H(\THOLD) = 1$,
the equation $F(\THOLD) = \lambda \int_{\THOLD}^\infty \frac{\dif F(r)}{r}$
has a solution $\THOLD$.
Fixing this choice of $\THOLD$, the expected Lagrangian payoff
simplifies to $y$.
The rest of the proof now proceeds as in the proof of Lemma~\ref{lem:TE},
yielding an approximation ratio of $1-F(\THOLD)\gamma$.
Writing $p^\ast = F(\THOLD)$ for the unconditional probability of a
myopic pull under $\TP{\THOLD}$, we obtain the same $(1-p^\ast\gamma)$
approximation ratio for the \lagobj as for the case of discrete signals.

% , and $q^\ast$ be
% the solution of 
% \Equation[eq:eqfullinfo]{F(r^\ast)-\lambda \int_{r^{\ast+}}^\infty \frac{\dif
%   F(l)}{l} -(1+\frac{\lambda}{r^\ast})(p_{th} -q)}{0.} 
% The solution $q^\ast$
% must fall in $[0, p_{th}]$ because when $q=p_{th}$, the left side of
% \eqref{eq:eqfullinfo} is greater than $0$, by definition of
% $r^\ast$; when $q=0$, a rearrangement of the left side is 
% $F(r^\ast) - p_{th} - \lambda(\int_{r^{\ast+}}^\infty \frac{\dif
%   F(l)}{l} + \frac{p_{th}}{r^\ast}) = F(r^{\ast-}) - \lambda \int_{r^{\ast}}^\infty \frac{\dif
%   F(l)}{l}$, which is no more than $0$ also by the definition of
% $r^\ast$.

% When the distribution of
%   $r$ is continous, the minimum of $r^\ast$ is the value when the
%   equality is reached since the left side is monotone increasing, the right
%   side is monotone decreasing and both sides are
%   continous functions of $r^\ast$. 

\subsection{Upper Bound}
As for discrete signals, we next give a Diamonds-in-the-rough instance $\Dime$ on which the
upper bound for any policy matches the approximation ratio of the
threshold policy.
Consider the choice of the policy in the first round; it allows the
agent to play myopically with some probability $q$.
By Lemma~\ref{lem:one-round-threshold}, the optimal way to implement
this probability $q$ is to choose a threshold\footnote{Note that a
  priori, it is not clear that this threshold will not change in
  subsequent rounds; hence, we cannot yet state that a threshold policy is
  optimal.} $\THOLD$ and offer
incentives to the agent if and only if $r \geq \THOLD$.
Denote by $\THOLD^\ast$ the threshold we get for the time-expanded policy,
i.e., the solution of 
$F(\THOLD) = \lambda\int_{\THOLD}^\infty \frac{\dif F(r)}{r}$. 
Now, similar to Equation~\eqref{eq:diamond_opt}, the corresponding \lagobj is 
\begin{align}
  \label{eq:diamond_full}
    V =  (1 - \gamma + \gamma V) F(\THOLD) 
        + \int_{\THOLD}^\infty \left(
       \frac{1}{M} \cdot \theB \cdot M + (1 - \frac{1}{M})
    \gamma V - \lambda \frac{(1-\gamma)(1-\theB)}{r} \right) \dif F(r).
\end{align}

Letting $M \to \infty$ and solving Equation~\eqref{eq:diamond_full},
we obtain
\begin{align}
(1-\gamma)V
& = (1-\gamma - \theB) F(\THOLD) + \theB -\lambda
    (1-\gamma)(1-\theB) \int_{\THOLD}^{\infty}\frac{\dif F(r)}{r}.
\label{eq:fullinfo_opt}
\end{align}

Taking a derivative of Equation~\eqref{eq:fullinfo_opt} with respect
to $\THOLD$ suggests (note that the function may not be
differentiable, so this is merely used as a tool to suggest a useful
choice) that if we set $B$ to solve
$-\lambda(1-\gamma)(1 - B)=(1-\gamma - B)\theta^*$,
then our previously chosen threshold $\theta^*$ will be optimal for
the instance $\Dime$.

% \dkdelete{
% , we see that $V$ should be maximized when
% \begin{align*}
% (1-\gamma-\theB)f(\THOLD) + \frac{\lambda(1-\gamma)(1-\theB)f(\THOLD)}{\THOLD}= 0.
% \end{align*}
% If $f(\THOLD) > 0$, then $(1 - \gamma - B) + \frac{\lambda (1 - \gamma)(1-B)}{\theta} = 0$.
% Inspired by this, it is an educated guess that we should set $B$ such that
% $-\lambda(1-\gamma)(1- B)=(1-\gamma - B)\theta^*$ to make
% $\THOLD^\ast$ the optimal threshold. Let $B^\ast$ denote such $B$.

% Now we verify our guess by proving that $\THOLD^\ast$ is the maximizer
% of the r.h.s of Equation~\eqref{eq:fullinfo_opt} when $B = B^\ast$.}

We next verify that this is indeed the case.
Substituting the value of $B$ into Equation~\eqref{eq:fullinfo_opt},
we obtain that maximizing Equation~\eqref{eq:fullinfo_opt} is
equivalent to minimizing  
$G(\THOLD) = \frac{1}{\THOLD^*} F(\THOLD) + \int_{\THOLD}^\infty
\frac{\dif F(r)}{r}$. 
Now, for any $\theta$, we have 

\begin{align*}
G(\THOLD^\ast) - G(\THOLD) 
&= \left(\frac{F(\THOLD^\ast)}{\THOLD^\ast} + \int_{\THOLD^\ast}^\infty
\frac{\dif F(r)}{r}\right) 
-  \left(\frac{F(\THOLD)}{\THOLD^\ast} + \int_{\THOLD}^\infty
\frac{\dif F(r)}{r} \right)\\
&= \frac{F(\THOLD^\ast) - F(\THOLD)}{\THOLD^\ast} +
\int_{\THOLD^\ast}^{\THOLD}\frac{\dif F(r)}{r} \\
&\le \frac{F(\THOLD^\ast) - F(\THOLD)}{\THOLD^\ast} +
\frac{F(\THOLD) - F(\THOLD^\ast)}{\THOLD^\ast} = 0.
\end{align*}

So $\THOLD^\ast$ is the maximizer of Equation~\eqref{eq:fullinfo_opt}
for the specific choice of $B$.
 % And basic calculations can show that $\frac{1}{\theta^*} F(\theta^*) + \int_{\theta^\ast}^\infty \frac{d F(x)}{x} \ge
 %  \frac{1}{\theta^*} F(\theta) + \int_{\theta}^\infty \frac{d F(x)}{x}$.

%\dkedit{We can assume without loss of generality that
%$f(\THOLD) > 0$, as this change would only change the \Umr
%distribution on a set of measure 0.} \rqedit{So the optimality of a
%policy will still hold if we analyse it with the new distribution.} 
% \dkcomment{Is this reasonable? Changing the threshold is iffy, as we
%   can't change it infinitesimally.}
%Canceling $f(\THOLD)$ gives us the solution 
%$\hat{\THOLD} = \frac{\lambda(1-\gamma)(1-\theB)}{\theB+\gamma-1}$. 

Thus, on this particular instance, the ratio achieved by our
threshold policy matches that of best possible policy. 
\section{Budgeted version}
\label{sec:budget}
In this section, we show matching lower and upper bounds for
maximizing the \Tet reward with a budget constraint.
Let $b$ be the fraction of $OPT_\gamma$ that the principal is allowed to use
(i.e., the principal's budget is $b \cdot OPT_\gamma$).
Recall that $p^\ast(\lambda)$ is the optimal value of the convex program \eqref{program:main}.

%\begin{lemma}
%For $b \in (0, 1)$, ratio $\min_{\lambda \ge 0} \{ 1 - p^\ast(\lambda) \gamma + \lambda b \}$ is bounded.
%\end{lemma}

\begin{theorem}
Given budget $b \cdot OPT_\gamma$, there exists a policy whose approximation
ratio (with respect to $OPT_\gamma$) is
\begin{equation*}
\min_{\lambda} \{ 1 - p^\ast(\lambda) \gamma + \lambda b \}.
\end{equation*}
\end{theorem}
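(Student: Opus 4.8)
The strategy is the Lagrangian-to-budget reduction, exactly as for the unconstrained case in \cite{original}: solve the whole family of Lagrangian problems and pick the multiplier that makes the resulting policy spend precisely its budget. For $\lambda > 0$, let $p^\ast(\lambda)$ and $\qStar(\lambda)$ be the value and optimizer of the convex program~\eqref{program:main} with parameter $\lambda$, set $\eta(\lambda) = \frac{(1 - p^\ast(\lambda))\gamma}{1 - p^\ast(\lambda)\gamma}$, and let $\ALG_\lambda = \textbf{TES}_{\qStar(\lambda),\opt_{\eta(\lambda)}}$ be the policy of Theorem~\ref{theorem:main_algo}. In the Lagrangian formulation of Definition~\ref{def:approx}, that theorem says that on \emph{every} MAB instance
\begin{equation*}
\DR(\ALG_\lambda) - \lambda\,\DC(\ALG_\lambda) \;\ge\; \bigl(1 - p^\ast(\lambda)\gamma\bigr)\cdot\opt_\gamma .
\end{equation*}
Hence, if for the given instance we can choose $\lambda$ with $\DC(\ALG_\lambda) = b\cdot\opt_\gamma$ exactly, rearranging gives $\DR(\ALG_\lambda) \ge \bigl(1 - p^\ast(\lambda)\gamma + \lambda b\bigr)\opt_\gamma \ge \min_\mu\{1 - p^\ast(\mu)\gamma + \mu b\}\cdot\opt_\gamma$, with the budget met; this is exactly the claim. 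So the whole proof reduces to tuning $\lambda$.

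Fix the instance and study $D(\lambda) := \DC(\ALG_\lambda)$. As $\lambda\to\infty$, program~\eqref{program:main} forces $\qStar(\lambda)\to\ve{1}$, so $\ALG_\lambda$ almost never incentivizes and $D(\lambda)\to 0$; as $\lambda\to 0^+$, $\qStar(\lambda)\to\ve{0}$, $\ALG_\lambda$ incentivizes the reference arm every round, and $D$ rises to $\sup_\lambda D(\lambda)$ (possibly $+\infty$). The plan is to argue $D$ is monotone: the feasible sets of~\eqref{program:main} are nested in $\lambda$, so $p^\ast(\lambda)$ is non-decreasing and continuous; and the interior-optimality identity from the proof of Lemma~\ref{lemma:imposs}, $\pard{x}\frac{1-x}{\inv{F_s}(x)}\big|_{x = q^\ast_s} = \frac{\sigma-1}{\lambda\sigma}$, has a right-hand side \emph{independent of $s$}, so all coordinates $q^\ast_s(\lambda)$ move together (choosing a canonical optimizer where it is non-unique, as in that proof) and, since their weighted sum $p^\ast(\lambda)$ is non-decreasing, each is non-decreasing; with semi-regularity making $1/\inv{F_s}$ and the per-round payment depend continuously and monotonically on $\qStar$, this renders $D$ non-increasing. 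Then: if $b\cdot\opt_\gamma \le \sup_\lambda D(\lambda)$ and $D$ is continuous at the crossing, the intermediate value theorem supplies the desired $\lambda$. If instead $b\cdot\opt_\gamma > \sup_\lambda D(\lambda)$, then (payments being bounded multiples of the bounded reward gaps) $\lim_{\lambda\to 0^+}D(\lambda) < \infty$, which forces $\inf\operatorname{supp}(F_s) > 0$ for every realizable signal $s$; in that case $\textbf{TES}_{\ve{0},\opt_\gamma}$ is well defined, incentivizes the optimal arm every round (so $\DR = \opt_\gamma$) while respecting the budget, and this already beats the target since $\min_\lambda\{1 - p^\ast(\lambda)\gamma + \lambda b\}\le 1$. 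The symmetric trivial case $b = 0$ is handled by the myopic policy, which spends nothing and, by Lemma~\ref{lem:TE} with $\qStar = \ve{1}$, has reward at least $(1-\gamma)\opt_\gamma$.

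The one genuine obstacle is that the reference policy $\opt_{\eta(\lambda)}$ is a Gittins-index policy, which can abruptly switch the arm it pulls at isolated values of $\eta$; there $D$ may jump, and $b\cdot\opt_\gamma$ could fall strictly inside a jump. I would resolve this with a single randomization in round $0$: at such a $\lambda_0$, let $\ALG^-$ and $\ALG^+$ be the left- and right-limit policies — both use $\qStar(\lambda_0)$ and an optimal-for-$\eta(\lambda_0)$ reference policy (tie-broken toward $\eta<\eta(\lambda_0)$, resp.\ $\eta>\eta(\lambda_0)$), so both satisfy Theorem~\ref{theorem:main_algo} at parameter $\lambda_0$, i.e.\ $\DR - \lambda_0\DC \ge (1 - p^\ast(\lambda_0)\gamma)\opt_\gamma$ — and with $\DC(\ALG^-)\ge b\cdot\opt_\gamma \ge \DC(\ALG^+)$, run $\ALG^-$ with probability $\mu = \frac{b\cdot\opt_\gamma - \DC(\ALG^+)}{\DC(\ALG^-) - \DC(\ALG^+)}$ and $\ALG^+$ otherwise. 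Since $\DR$ and $\DC$ of the mixture are affine in $\mu$ with the \emph{same} multiplier $\lambda_0$, the mixture spends exactly $b\cdot\opt_\gamma$ and satisfies $\DR \ge (1 - p^\ast(\lambda_0)\gamma + \lambda_0 b)\opt_\gamma \ge \min_\lambda\{1 - p^\ast(\lambda)\gamma + \lambda b\}\opt_\gamma$, incurring no additive loss. The remaining items — the precise justification of the monotonicity and one-sided continuity of $p^\ast(\lambda)$ and $D(\lambda)$, and the harmless treatment of the boundary $q_s = 0$ exactly as in the proof of Lemma~\ref{lemma:imposs} — are routine.
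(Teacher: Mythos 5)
Your proposal is correct and follows essentially the same route as the paper: the Lagrangian-to-budget reduction inherited from Frazier et al., where one tunes the multiplier $\lambda$ so that the Lagrangian-optimal policy $\textbf{TES}_{\qStar(\lambda),\opt_{\eta(\lambda)}}$ spends exactly $b\cdot\opt_\gamma$, and randomizes between the two limiting policies when the expected cost jumps past the budget. The paper's own proof is only a brief sketch (parameterized by the myopic probability $p$ rather than $\lambda$, with the same randomization between a budget-respecting and a budget-exhausting policy), so your write-up in fact supplies more of the continuity/monotonicity and boundary-case details than the published argument does.
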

\begin{proof}
First, one can prove that for every $p \in [0, 1)$, 
there exists a $\lambda$ such that $p^\ast(\lambda) = p$.
Then, as every unconditional myopic probability $p$ can be achieved
using some $\lambda$,
we can follow the same approach in \citet{original} by taking the
limit of a sequence of probabilities $(p_n)_n$ that approaches
$p$. 
Here, the policy corresponding to each $p_n$ respects (or exhausts) the
budget, while the policy for $p$ exhausts (or respects) the budget. 
By suitable randomization between these policies, one can show that
the approximation ratio (with respect to $\opt_\gamma$) approaches
$\min_\lambda \{ 1 - p^\ast(\lambda) + \lambda  b\}$ in the limit.   
\end{proof} 
  
\begin{theorem}
Given budget $b \cdot OPT_\gamma$, the factor 
$\min_{\lambda} \{ 1 - p^\ast(\lambda) \gamma + \lambda b \}$ is
tight.
\end{theorem}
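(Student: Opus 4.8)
The plan is to exhibit, for any policy respecting the budget $b\cdot\opt_\gamma$, a Diamonds-in-the-Rough instance on which its \Tet reward cannot exceed $\min_\lambda\{1 - p^\ast(\lambda)\gamma + \lambda b\}\cdot\opt_\gamma$. The key idea is Lagrangian duality combined with the upper bound already proved in Section~\ref{sec:diamond}. Fix any $\lambda \in (0,1)$. For the instance $\Dime[\theB,\gamma]$ with the parameter $\theB$ chosen as in Lemma~\ref{lemma:imposs} (so that it is worst-case for the Lagrangian objective at multiplier $\lambda$), Theorem~\ref{theorem:last_imposs} tells us that \emph{every} policy using payments satisfies $\DR(\ALG) - \lambda\DC(\ALG) \le (1 - p^\ast(\lambda)\gamma)\opt_\gamma$. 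If in addition the policy respects the budget, i.e., $\DC(\ALG) \le b\cdot\opt_\gamma$, then rearranging gives $\DR(\ALG) \le (1 - p^\ast(\lambda)\gamma)\opt_\gamma + \lambda\DC(\ALG) \le (1 - p^\ast(\lambda)\gamma + \lambda b)\opt_\gamma$. Since this holds for the instance tailored to each $\lambda$, taking the infimum over $\lambda$ yields the claimed bound — but one must be slightly careful, because the instance depends on $\lambda$, so the naive argument only shows that for each $\lambda$ there is \emph{some} instance giving that bound, not a single instance giving the min.

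To get a single worst-case instance, I would argue as follows. Let $\lambda^\ast$ be the minimizer (or an approximate minimizer, if the infimum is not attained; then take a limit) of $1 - p^\ast(\lambda)\gamma + \lambda b$. Use the Diamonds-in-the-Rough instance $\Dime[\theB^\ast,\gamma]$ corresponding to $\lambda^\ast$ via Lemma~\ref{lemma:imposs}. On this fixed instance, consider any budget-respecting policy $\ALG$. Applying the Lagrangian upper bound at the multiplier $\lambda^\ast$ gives $\DR(\ALG) \le (1 - p^\ast(\lambda^\ast)\gamma)\opt_\gamma + \lambda^\ast\DC(\ALG) \le (1 - p^\ast(\lambda^\ast)\gamma + \lambda^\ast b)\opt_\gamma = \min_\lambda\{1 - p^\ast(\lambda)\gamma + \lambda b\}\cdot\opt_\gamma$. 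This shows the bound is tight on a concrete instance. One still needs the fact (used in the previous theorem's proof) that $\lambda \mapsto p^\ast(\lambda)$ is continuous and surjective onto $[0,1)$, so that the minimization over $\lambda$ is well-behaved and the limiting argument goes through when the infimum is approached rather than attained.

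The main obstacle I anticipate is the interchange of "for each $\lambda$ there is an instance" and "there is an instance for the $\min$": this is exactly resolved by noting that the Lagrangian \emph{upper} bound $\DR(\ALG) - \lambda\DC(\ALG) \le (1-p^\ast(\lambda)\gamma)\opt_\gamma$ in Theorem~\ref{theorem:last_imposs} is proved on the single instance $\Dime[\theB(\lambda),\gamma]$, and that for the minimizing $\lambda^\ast$ this single instance already does all the work — the budget constraint $\DC(\ALG) \le b\cdot\opt_\gamma$ is what converts the Lagrangian bound into a pure reward bound. A secondary technical point is handling the case where the minimum over $\lambda$ is not attained (e.g., the optimal $\lambda$ is $0$ or the boundary of $(0,1)$); there one takes a sequence $\lambda_n$ with $1 - p^\ast(\lambda_n)\gamma + \lambda_n b \to \inf$, and since each associated instance certifies $\DR(\ALG) \le (1 - p^\ast(\lambda_n)\gamma + \lambda_n b)\opt_\gamma$ for policies respecting the budget on \emph{that} instance, one concludes that no policy can beat the infimum uniformly; combined with the matching algorithmic guarantee from the previous theorem, the factor is tight.
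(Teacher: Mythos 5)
Your proposal is correct and follows essentially the same route as the paper: on the $\lambda$-specific Diamonds-in-the-Rough instance, combine the Lagrangian upper bound $\DR(\ALG) - \lambda\DC(\ALG) \le (1 - p^\ast(\lambda)\gamma)\opt_\gamma$ from Lemma~\ref{lemma:imposs} with the budget constraint $\DC(\ALG) \le b\cdot\opt_\gamma$ and minimize over $\lambda$. Your extra care about producing a single instance is a harmless refinement; since the approximation ratio is a worst-case quantity over all instances, the ``one instance per $\lambda$'' argument already suffices, which is exactly what the paper does.
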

\begin{proof}
Again consider the class of Diamonds-in-the-rough instances.
Assume that the optimal policy that respects the budget is $\ALG$.
Thus $\DC(\ALG) \le b \cdot \opt_\gamma$ for every MAB instance.

Moreover, by Lemma \ref{lemma:imposs}, we know that for every
Lagrangian multiplier $\lambda$, the optimal 
policy for that $\lambda$ and its corresponding \Dime has \lagobj
value exactly $(1 - p^\ast(\lambda) \gamma) \cdot \opt_\gamma$. 

This means that for every Lagrangian multiplier $\lambda$ (and its
corresponding Diamonds-in-the-rough instance $\Dime$), we have 
the following equivalent inequalities for all $ \lambda \in (0,1)$,
\begin{align*}
\DR(\ALG) - \lambda \DC(\ALG) &\le (1 - p^\ast(\lambda)\gamma) \cdot \opt_\gamma\\
\DR(\ALG) &\le  (1 - p^\ast(\lambda) \gamma) \cdot \opt_\gamma +
            \lambda \cdot \DC(\ALG)\\
\DR(\ALG) &\le (1 - p^\ast(\lambda) \gamma +
\lambda b) \cdot \opt_\gamma
\end{align*}

So $$\DR(\ALG) \le \min_\lambda \{ (1 - p^\ast(\lambda) \gamma
+\lambda b) \} \cdot \opt_\gamma .$$
This matches the theorem above.
\end{proof}
% and therefore $$\DR(\ALG) \le  (1 - p^\ast(\lambda) \gamma) \cdot
% \opt_\gamma + \lambda \cdot \DC(\ALG)$$.
% So $$\DR(\ALG) \le (1 - p^\ast(\lambda) \gamma + \lambda b) \cdot \opt_\gamma$$ 
% for every $\lambda$
% and thus $$\DR(\ALG) \le \min_\lambda \{ (1 - p^\ast(\lambda) \gamma +
% \lambda b) \} \cdot \opt_\gamma $$.

\section{Reducing Concave Money Value to Linear Money Value}
\label{sec:concave-linear}
\newcommand{\Done}[1][]{\ensuremath{\ifthenelse{\equal{#1}{}}{\mathcal{D}^1}{\mathcal{D}^1_{#1}}}\xspace}
\newcommand{\Dtwo}[1][]{\ensuremath{\ifthenelse{\equal{#1}{}}{\mathcal{D}^2}{\mathcal{D}^2_{#1}}}\xspace}
\def\scaleC{\ensuremath{C}\xspace}
\def\mDiff{\xi}

In this section, we show that among concave \MUFs, linear functions
obtain the worst-case approximation ratio. 
This shows that our assumption that agents have linear \MUFs is
without loss of generality in regard to worst-case approximation ratios. 
We fix the distribution (or prior) over agents and the signaling
scheme and then perform a worst-case analysis over MAB instances.
The intuition is that the rewards of an MAB instance can be scaled up
so large that only the asymptotic behavior of concave \MUFs matters.

For any \MUF $\MF$, define 
$r_{\MF} = \lim_{x \to \infty} \frac{\MF(x)}{x}$
to be the limiting slope\footnote{%
As $\MF$ is concave, this limit always exists and is finite.}
of $\MF$, 
and $d_{\MF} = \inf \Set{t \ge 0}{r_{\MF} x + t \geq \MF(x) \text{ for
    all } x \geq 0}$ the minimum intercept.
Thus, we ensure that for all $x$:
\[ 
r_{\MF} x \; \leq \; \Mf{x} \; \leq \; d_{\MF} + r_{\MF} x.
\]

Fix a signaling scheme $\sigSch$.
For a given signal $s$ and its original posterior distribution $F_s$
over concave functions, define two distributions over affine functions.
Under \Done[s], each function $\MF$ is mapped to the linear function
$x \mapsto r_{\MF} \cdot x$.
Under \Dtwo[s], the function $\MF$ is mapped to the function
$x \mapsto d_{\MF} + r_{\MF} \cdot x$.
We assume that \Done[s] is a continuous distribution for each $s$.

Our proof breaks into two parts: lower bound (algorithm) and upper
bound (impossibility). 

\textbf{Lower bound}: 
Given the true distribution $(F_s)_{s \in \sigSpace}$ over concave
functions, instead compute the optimal policy for 
$(\Done[s])_{s \in \sigSpace}$, which we denote by $\opt_1$.
Notice that $\opt_1$ is a time-expansion policy and only offers payment on the Gittins arm in each round.
 
Because $r_{\MF} \cdot x \leq \Mf{x}$ for all $x$, each agent likes money
more under $\MF$ than under $r_{\MF}$. 
Thus, we can couple the choices of agents between the two distributions:
each agent who plays non-myopically under the policy for $\Done$ can also
be incentivized to do so under $F$ using lower payments in expectation.
Therefore, the optimal policy for the true distribution can only do better.

Formally, let arm $i^\ast$ be the Gittins arm at some round $t$ during the execution of $\opt_1$.
Suppose that $\opt_1$ offers a payment of $p$ on arm $i^\ast$ and the arm will be pulled with probability $q$ when agents are drawn from $\Done[s]$.
When agents are drawn from $F_s$ instead, the probability $q_F$ that the arm will be pulled is no less than $q$ as agents like money more under $\MF$. Then our new policy for $F_s$ can randomize between offering the same payment $p$ or offering no payment so that the overall probability of pulling $i^\ast$ under $F_s$ is the same as that under $\Done[s]$.
In this way, we can couple the execution of $\opt_1$ and our new policy so that agents always make a same choices yet $\opt_1$ offers more payment in expectation.

\textbf{Upper bound}:
Next we show that in the worst case, the distribution $(F_s)$ over
concave functions does not yield a better approximation guarantee than
$(\Done[s])$.
First, notice that by the same argument as the previous paragraph, the
principal's utility under $(F_s)$ is upper bounded by her utility with
the distribution $(\Dtwo[s])_{s \in \sigSpace}$ over affine functions.
We will upper-bound the worst-case utility under $(\Dtwo[s])$ in terms
of that under $(\Done[s])$.

Thereto, we modify the Diamonds-in-the-rough instance from
Section~\ref{sec:diamond}, by scaling all reward values by some large
constant $\scaleC$. 
Let $\mDiff = C(1 - \gamma)(1 - \theB)$ be the difference between the
reward of the myopic arm and the Gittins arm. 
We will prove that for some carefully chosen constant $\scaleC$, the
optimal policy with payments for $\Dtwo$ cannot achieve much more than
the optimum for $\Done$. 
Let $(c_s)_s$ be the payments offered under different signals under
the optimal solution for $\Dtwo$.

By the argument preceding the program \eqref{program:dp},
the optimum solution for $\Dtwo$ maximizes
\begin{equation}
C (1 - \gamma) \cdot 
\sum_s \SP{s} \cdot \Prob[(r,d) \sim {\Dtwo[s]}]{r \cdot c_s + d \leq \mDiff}
+ \sum_s \SP{s} \cdot (1 - \Prob[(r,d) \sim {\Dtwo[s]}]{r \cdot c_s + d \leq \mDiff})(C \cdot B - \lambda c_s),
\label{eqn:dtwo-optimization}
\end{equation}
while the optimum solution for $\Done$ maximizes
\begin{equation}
C (1 - \gamma) \cdot 
\sum_s \SP{s} \cdot \Prob[r \sim {\Done[s]}]{r \cdot c_s \leq \mDiff}
+ \sum_s \SP{s} \cdot (1 - \Prob[r \sim {\Done[s]}]{r \cdot c_s \leq \mDiff})(C \cdot B - \lambda c_s).
\label{eqn:done-optimization}
\end{equation}

Now, for any arbitrarily small $\epsilon$, focus on a finite set
$\sigSpace'$ of signals which together contribute at least a $(1-\epsilon)$
fraction of the optimum utility under both $\Done$ and $\Dtwo$. 
Note that such a set must always exist, because we assumed that the
signal space is countable; by sorting the signals by their
contribution and taking prefixes in this ordering, we see that the
total utility of these prefixes must converge to the overall total
utility as the size of the prefix grows. 

For each individual signal $s$, the optimal $c_s \to \infty$
as $\scaleC \to \infty$. 
Thus, because $\sigSpace'$ is finite,
for any desired lower bound $\alpha$, there exists a $\scaleC$ 
such that $c_s \geq \alpha$ for all $s \in \sigSpace'$.
Similarly, for any $\epsilon > 0$, there is a (sufficiently large)
$\beta$ such that
$\Prob[(r,d) \sim {\Dtwo[s]}]{d \leq \beta} \geq 1 - \epsilon$ 
holds simultaneously for all $s \in \sigSpace'$.
Given a target $\epsilon$, we first choose a suitable $\beta$, and
then choose $\scaleC$ to ensure that $\alpha \gg \beta$.
Then, we get that for all $s \in \sigSpace'$,

\begin{align*}
\Prob[(r,d) \sim {\Dtwo[s]}]{r c_s + d \leq \mDiff} 
& = \Prob[(r,d) \sim {\Dtwo[s]}]{r \leq \frac{\mDiff - d}{c_s}}\\
& \geq (1-\epsilon) \cdot 
\Prob[(r,d) \sim {\Dtwo[s]}]{r \leq \frac{\mDiff}{c_s} - \frac{\beta}{\alpha}}\\
& = (1-\epsilon) \cdot 
\Prob[r \sim {\Done[s]}]{r \leq \frac{\mDiff}{c_s} - \frac{\beta}{\alpha}}.
\end{align*}

Because we assumed that $\Done[s]$ is continuous for each $s$, 
we obtain that for each $s$,
by choosing $\alpha \gg \beta$ large enough (ensured by making
$\scaleC$ large enough),
$\Prob[r \sim {\Done[s]}]{r \leq \frac{\mDiff}{c_s} - \frac{\beta}{\alpha}}
\geq \Prob[r \sim {\Done[s]}]{r \leq \frac{\mDiff}{c_s}} - \epsilon$.
By choosing $\scaleC$ as the maximum of the corresponding values, this
inequality holds simultaneously for all $s$.
In summary, we obtain that
%As semi-regularity implies continuity, we can increase $\scaleC$ even further to ensure that $\Prob[(r,d) \sim {\Dtwo[s]}]{r c_s + d \leq \mDiff} \ge (1-\epsilon) \cdot (\Prob[(r,d) \sim {\Dtwo[s]}]{r \leq \frac{\mDiff}{c_s}} - \epsilon)$ for all $s \in \sigSpace'$. This implies that 
\[
\Prob[(r,d) \sim {\Dtwo[s]}]{r c_s + d \leq \mDiff}
\geq 
\Prob[r \sim {\Dtwo[s]}]{r c_s \leq \mDiff} + 2\epsilon.
\]

Substituting this inequality into the objective value
\eqref{eqn:dtwo-optimization} shows that the objective values 
\eqref{eqn:dtwo-optimization} and \eqref{eqn:done-optimization}
differ by at most
%Such difference in the value of probability will cause a change in the objective of $\opt(\Dtwo)$ of value at most:
\[
  \sum_s \SP{s} \cdot (|C(1 - \gamma - B)| \cdot 2\epsilon + \lambda \beta)
\; = \; |C(1 - \gamma - B)| \cdot 2\epsilon + \lambda \beta.
\]

% Recall that $c_s$ the optimal for $\opt(\Dtwo)$ and that $\Prob[(r,d) \sim {\Dtwo[s]}]{r \leq \frac{\mDiff}{c_s}} = \Prob[r \sim {\Done[s]}]{r \cdot c_s \leq \mDiff}$,

%\begin{equation}
%\opt(\Dtwo) = C (1 - \gamma) \cdot 
%\sum_s \SP{s} \cdot \Prob[(r,d) \sim {\Dtwo[s]}]{r \cdot c_s + d \leq \mDiff}
%+ \sum_s \SP{s} \cdot (1 - \Prob[(r,d) \sim {\Dtwo[s]}]{r \cdot c_s + d \leq \mDiff})(C \cdot B - \lambda c_s)
%\end{equation}
%\begin{equation}
%\le C (1 - \gamma) \cdot 
%\sum_s \SP{s} \cdot \Prob[r \sim {\Done[s]}]{r \cdot c_s \leq \mDiff}
%+ \sum_s \SP{s} \cdot (1 - \Prob[r \sim {\Done[s]}]{r \cdot c_s \leq \mDiff})(C \cdot B - \lambda c_s) + C(1 - \gamma - B) 2\epsilon + \lambda \beta.
%\end{equation}

The optimum value of the objective \eqref{eqn:done-optimization}
grows at least linearly in $C$. 
The reason is that when $C$ is scaled up by any constant $\nu$, a
feasible solution is obtained by scaling all $c_s$ up by $\nu$ as
well; this results in a multiplicative increase of $\nu$ in the
objective value. Thus, the \emph{best} $c_s$ values must attain at
least such an increase.
Because the error term $C(1 - \gamma - B) 2\epsilon + \lambda \beta$
is at most $O(\epsilon) \cdot \opt(\Done)$, we obtain that
\[
\opt(\Dtwo) \; \leq \; \opt(\Done) \cdot (1 + O(\epsilon)).
\]
Finally, adding in the $O(\epsilon)$ terms for signals not considered
in this argument does not change the above conclusion.
Making $\epsilon$ arbitrarily small (and scaling the
Diamonds-in-the-Rough instance correspondingly) then shows in the
limit that the distributions over linear and affine functions have the
same worst-case behavior.
\section{Garbling of Signaling Schemes}
\label{sec:garbling}
In this section, we show that if we garble a signaling scheme
$\sigSch$ into $\sigSch'$, the optimal approximation ratio for the
garbled signaling scheme $\sigSch'$ cannot improve. 
By ``garbling,'' we mean (stochastically) mapping each signal in
$\sigSch$ to a random (new) signal in $\sigSch'$. 
\citet{marschak1968economic} gave the following\footnote{There are
  several equivalent definitions, of which we have chosen the one most
  suitable for our purposes.} formal definition (rephrased to fit our model):

\begin{definition}
Let $\sigSch, \sigSch'$ be signaling schemes with respective signal
spaces $\sigSpace, \sigSpace'$.
Then, $\sigSch'$ is a \emph{garbling} of $\sigSch$ if for all \Umrs $r$
and signals $s \in \sigSpace, s'\in \sigSpace'$:
$f_{s,s'}(r) = f_s(r)$, where $f_s(r)$ is the pdf of \Umr $r$
conditioned on signal $s$.
\end{definition}

As one can see, the garbled signaling scheme $\sigSch'$ contains less information than the original signaling scheme $\sigSch$. 
Recall that $1 - p^*(\psi) \gamma$ gives the optimal approximation ratio when the underlying (exogenous) signaling scheme is $\psi$. 
Intuitively, more informative signaling schemes should give raise to better approximation ratios, i.e. $1 - p^*(\sigSch) \gamma \ge 1 - p^*(\sigSch') \gamma$ if $\sigSch'$ is garbled from $\sigSch$.
Theorem~\ref{theorem:garbling_at_intro}, restated here, confirms this intution.

%With this precise definition, the proof of
%Theorem~\ref{theorem:garbling_at_intro} is given in
%Appendix~\ref{sec:app_garbling}. 

% In this section, we show that if we garble a signalling scheme
% $\sigSch$ into $\sigSch'$, the optimal approximation ratio for garbled signaling scheme $\sigSch'$ is worse. Here by
% garbling we mean (stochastically) mapping each signal in $\sigSch$ to a random (new) signal in $\sigSch'$. \citet{marschak1968economic} gave a
% formal definition (rephrased to fit our model) as follows:

% \begin{definition}
%   A signaling scheme $\sigSch'$ is a garbled verion of signaling scheme $\sigSch$ if for all \Umr $r$, signal $s\in \sigSpace$ , $s'\in \sigSpace'$ the following conditions hold,
%   \begin{itemize}
%   \item $\ProbC{s'}{s, r} = \ProbC{s'}{s}$;
%   \item $f(r|s,s')  = f(r|s)$, $f(r|\cdot)$ is the conditional pdf of \Umr $r$;
%   \item $\ProbC{s,s'}{r} = \ProbC{s}{r} \cdot \ProbC{s'}{s}$.
%   \end{itemize}
% where  $\sigSpace,
% \sigSpace'$ are the signaling spaces of $\sigSch$ and $\sigSch'$
% respectively.
% \end{definition}

\begin{rtheorem}{Theorem}{\ref{theorem:garbling_at_intro}}
Let $\sigSch$ and $\sigSch'$ be two signaling schemes such that
$\sigSch'$ is a garbling of $\sigSch$.
Then, $1 - p^\ast(\sigSch) \gamma \geq 1 - p^\ast(\sigSch') \gamma$. 
\end{rtheorem}

\begin{proof}
Let $\SP{s}, \SP{s'}$ be the unconditional probabilities of observing
signals $s,s'$, as defined in Section~\ref{sec:prelim-signaling},
and let $(\MP{s'})_{s'\in \sigSpace'}$ be the optimal solution for
program \eqref{program:main} with signaling scheme $\sigSch'$.
We will show that the program \eqref{program:main} for signaling
scheme $\sigSch$ has a feasible solution $(\MP{s})_{s\in \sigSpace}$ 
such that 
$\sum_{s\in \sigSpace} \SP{s}\MP{s} =
\sum_{s'\in \sigSpace'} \SP{s'}\MP{s'}$.
This implies that the policy with the garbled signaling scheme cannot
outperform the policy with the original signaling scheme.

Let $\SP{s,s'}$ be the (joint) probability that signal $s$ is revealed
by $\sigSch$ and signal $s'$ is revealed by the garbling $\sigSch'$. 
Thus, $\SP{s'} = \sum_{s\in\sigSpace}\SP{s,s'}$.
Conditioned on the two revealed signals $s,s'$, let 
$\MP{s,s'}$ be the probability that the policy with parameters
$(\MP{s'})_{s' \in \sigSpace}$ (which only observes $s'$) 
obtains a myopic arm pull. Notice that this probability depends
on $s$: while $s$ does not affect the threshold that is set by the
policy, it does affect the distribution of the agent's \Umr, and hence
the probability of a myopic pull.
Thus, the overall probability of a myopic pull with signal $s'$ can be
obtained by summing over all (unobserved) signals $s$ as
$\SP{s'} \MP{s'} = \sum_{s\in \sigSpace}\SP{s,s'} \MP{s,s'}$.

Since $\MP{s'}$ is a feasible solution of \eqref{program:main}, we have 
\[
  \sum_{s'\in \sigSpace'} \SP{s'} \MP{s'} 
\geq \lambda\sum_{s'\in \sigSpace'} \SP{s'} \frac{1-\MP{s'}} {F_{s'}^{-1}(\MP{s'})}.
\]

Substituting $\SP{s'} = \sum_{s\in\sigSpace}\SP{s,s'}$
and $\MP{s'} = \frac{\sum_{s\in \sigSpace}\SP{s,s'} \MP{s,s'}}{\SP{s'}}$,
we can write

\[
\SP{s'} \cdot \frac{1-\MP{s'}}{F_{s'}^{-1}(\MP{s'})}
\; = \; \frac{\SP{s'}}{F_{s'}^{-1}(\MP{s'})} \cdot
  (1-\frac{\sum_{s\in \sigSpace}\SP{s,s'}\MP{s,s'}}{\SP{s'}})\\
\; = \; \sum_{s\in \sigSpace}
   \SP{s,s'} \cdot \frac{1-\MP{s,s'}}{F_{s'}^{-1}(\MP{s'})}.
\]

Next, we observe that $F_{s}^{-1}(\MP{s,s'}) = F_{s'}^{-1}(\MP{s'})$,
as follows:
because $\sigSch'$ is a garbling of $\sigSch$, we get that
the conditional distributions satisfy $F_s(r) = F_{s,s'}(r)$.
$F_{s'}^{-1}(\MP{s'}) =: \tau$ is a threshold chosen by the mechanism
with knowledge solely of $s'$, chosen to achieve a probability of
myopic play of exactly $\MP{s'}$.
$\MP{s,s'}$ denotes the probability of myopic play with this threshold
$\tau$, with the additional knowledge that the ungarbled signal was
$s$. 
Thus, we obtain that $\MP{s,s'} = F_{s,s'}(\tau) = F_s(\tau)$,
by the garbling property. 
Taking inverses now shows that
$F_{s'}^{-1}(\MP{s'}) = \tau = F_{s}^{-1}(\MP{s,s'})$.

Now define the target probabilities for the ungarbled signals as follows:
$\MP{s} =\sum_{s' \in \sigSpace'} \frac{\SP{s,s'}}{\SP{s}} \cdot \MP{s,s'}$.
We can then write

\begin{align*}
\sum_{s\in \sigSpace} \SP{s}\MP{s} 
& = \sum_{s \in \sigSpace} \sum_{s' \in \sigSpace'} \SP{s,s'}\MP{s,s'} \\
& = \sum_{s' \in \sigSpace'} \sum_{s \in \sigSpace} \SP{s,s'}\MP{s,s'} \\
& \geq \lambda \sum_{s' \in \sigSpace'} \sum_{s \in \sigSpace}
  \SP{s,s'} \cdot \frac{1-\MP{s,s'}}{F_{s}^{-1}(\MP{s,s'})}\\
& = \lambda \sum_{s \in \sigSpace} \sum_{s' \in \sigSpace'}
  \SP{s} \cdot \frac{\SP{s,s'}}{\SP{s}} \cdot
  \frac{1-\MP{s,s'}}{F_{s}^{-1}(\MP{s,s'})}\\
& \stackrel{(*)}{\geq} \lambda \sum_{s \in \sigSpace} \SP{s} \cdot 
  \frac{1-\sum_{s' \in \sigSpace'} \frac{\SP{s,s'}}{\SP{s}} \MP{s,s'}}{%
  F_{s}^{-1}(\sum_{s' \in \sigSpace'} \frac{\SP{s,s'}}{\SP{s}} \cdot \MP{s,s'})}\\
& = \lambda \sum_{s \in \sigSpace} 
  \SP{s} \cdot \frac{1-\MP{s}}{F_{s}^{-1}(\MP{s})}.
\end{align*}
Here, the inequality labeled (*) followed by semi-regularity of $F_s$.
The inequality derived above implies that $\MP{s}$ is a feasible
solution of \eqref{program:main} with signaling scheme $\sigSch$, and
it attains at least the same utility for the principal.
This completes the proof of the theorem.
\end{proof}
%\begin{rtheorem}{Theorem}{\ref{theorem:garbling_at_intro}}
%Given a prior distribution $\CDF$ and two signaling schemes $\sigSch$ and $\sigSch'$ where $\sigSch'$ is a garbled version of $\sigSch$, then $1 - p^\ast(\sigSch) \gamma \ge 1 - p^\ast(\sigSch') \gamma$.
%\end{rtheorem}
\section{Conclusions} \label{sec:conclusions}

We showed that the framework recently proposed by \citet{original} can
be generalized to the case when different agents have different
and non-linear tradeoffs for money vs.~utility derived from arm pulls.
While the generalized framework does not result in as clean a
characterization of feasible regions as the original work of
\citet{original}, it nonetheless holds true that time-expanded
versions of Gittins index policies are optimal in the worst case, and
that worst-case examples are of the simple ``Diamond-in-the-rough''
form.

We needed to assume a technical condition called semi-regularity for
our results. Whether time-expanded policies are optimal in the absence
of this condition is open.

There are many natural ways in which the model of \citet{original}
coul dbe further generalized. Perhaps most intriguingly, in the model,
agents only interact with the mechanism once, whereas it would be
natural to assume that the same agents return multiple times.
A natural model here would be one of the principal and just \emph{one}
agent, who has a different (steeper) time discount $\gamma' < \gamma$
than the principal, and must be incentivized to pull arms with more
foresight. In this sense, we analyzed the special case $\gamma' = 0$.

This direction appears quite a bit more difficult to analyze.
If agents may return more than once, this opens the door for strategic
behavior; it seems possible that an agent may choose a particular arm
to pull to help or prevent the principal from learning, in turn
affecting possible future payments. This makes this model quite a bit
more complicated to analyze.

If one were to try and take the model into a direction of more
realism, one could consider specifics of some of the applications
listed in the introduction, such as that of an online retailer. 
In that case, one may have to account for the fact that agents who
receive a payment (i.e., discount) on a product may alter their
perception or rating of the product.
\bibliography{bibliography/names,bibliography/conferences,bibliography/bibliography,bibliography/publications,bibliography/paper-specific}{}
\bibliographystyle{plainnat}

\appendix

\section{Missing Proofs}
\label{sec:missing-proofs}
\begin{rtheorem}{Lemma}{\ref{lemma:semireg}}
Let $G$ be the CDF of a non-negative random variable,
and define $\inv{G}(x) := sup \{ t \ge 0 : G(t) \le x \}$. 
If $\inv{G}(x) (1 - x)$ is concave, then
$\frac{1 - x}{\inv{G}(x)}$ is convex.
In particular, regularity implies semi-regularity.
\end{rtheorem}

\begin{proof}
  We only need to look at 3 points $x$, $\alpha x + (1 - \alpha)y$ and $y$, where $0 \le x < y \le 1$.
  Let $z = \alpha x + (1 - \alpha) y$
  By concavity of $\inv{G}(x) (1 - x)$, we have
  \begin{equation}
    \inv{G}(z) (1 - z) \ge \alpha \inv{G}(x)(1 - x) + (1 - \alpha)\inv{G}(y) (1 - y)
  \end{equation}
  Rearranging, we have:
  \begin{equation}
    0 \ge \alpha (1 - x) \frac{\inv{G}(x) - \inv{G}(z)}{\inv{G}(z)}
    + (1 - \alpha)(1 - y) \frac{\inv{G}(y) - \inv{G}(z)}{\inv{G}(z)}
  \end{equation}
  By monotonicity of $\inv{G}$, $\inv{G}(x) \le \inv{G}(z) \le \inv{G}(y)$.
  This implies
  \begin{equation}
    \label{app:semi}
    0 \ge \alpha (1 - x) \frac{\inv{G}(x) - \inv{G}(z)}{\inv{G}(x)}
    + (1 - \alpha)(1 - y) \frac{\inv{G}(y) - \inv{G}(z)}{\inv{G}(y)},
   \end{equation}
   as $\inv{G}(x) \le \inv{G}(z)$ and $\inv{G}(y) \ge \inv{G}(z)$.
   One can see that inequality \eqref{app:semi} is equivalent to the convexity of $\frac{1 - x}{\inv{G}(x)}$.
\end{proof}

The proofs are very straightforward (syntactic) modifications of those of the corresponding lemmas in \cite{original}, we include them here for completeness.

\begin{rtheorem}[Modification of Lemma 4.2 of \cite{original}]{Lemma}{\ref{cite42}}
Given a parameter $\lambda$ and a signaling scheme $\sigSch$. Let $\zeta_{t-1}=\sum_{t'<t}Z_{t'}$ be the total
number of non-myopic steps performed by the time-expanded algorithm $\Te{\MPV}{A}$
prior to time $t$, where $\MPV$  satisfies
$\sum_{s \in \sigSpace} \SP{s} \MP{s} \ge\lambda\sum_{s \in \sigSpace}
\SP{s} \frac{1 -\MP{s}}{F_s^{-1} (\MP{s})}$. 
Then, for any $0\le n\le t$,
\begin{align*}
\ExpectC[\Te{\MPV}{A}]{\reward{t}{i_t}-\lambda\payment{t}{i_t}}{\zeta_{t-1} = n}
& \ge \Expect[\ALG]{\reward{n}{i_n}}.
\end{align*}
\end{rtheorem}

\begin{proof}
The Lagrangian utility at time $t$ is 
$\reward{t}{i_t} - \lambda\payment{t}{i_t}$.
Let $x = \max_i \ExpectC{\reward{t}{i}}{\ve{S}_t,\Zed_{0:t-1}}$ 
and $y = \ExpectC{\reward{t}{i_t^\ast}}{\ve{S}_t,\Zed_{0:t-1}}$, 
where $\Zed_{0:t-1}=(\Zed_0,\Zed_t,\dots,\Zed_{t-1})$. 
Since the myopic arm is played with probability $\MP{s}$,
the expected Lagrangian utility will be 
\begin{align*}
\ExpectC[{\Te{q}{A}}]{\reward{t}{i_t} - \lambda \payment{t}{i_t}}{\ve{S}_t, \Zed_{0:t-1}}&= \sum_{s\in\sigSpace}\SP{s}(\MP{s} x +
(1-\MP{s})(y-\lambda\frac{x-y}{F_s^{-1}(\MP{s})}) )\\
&= y\sum_{s\in \sigSpace} \SP{s} + (x-y)\sum_{s\in\sigSpace}\SP{s}(\MP{s} -
\frac{\lambda(1-\MP{s})}{F_s^{-1} (\MP{s})}) \ge y.
\end{align*}

The last inequality above is due to $x\ge y$ 
(by the myopic arm preference) and 
$\sum_{s \in \sigSpace} \SP{s} \MP{s} \ge\lambda\sum_{s \in \sigSpace}
\SP{s} \frac{1 - \MP{s}}{F_s^{-1} (\MP{s})}$ (by assumption of the lemma).
Notice the condition 
$\sum_{s \in \sigSpace} \SP{s} \MP{s} \ge\lambda\sum_{s \in \sigSpace} \SP{s} \frac{1 - \MP{s}}{F_s^{-1} (\MP{s})}$ in the
statement of the lemma allowed us the cancel out the myopic
arm-rewards $x$, which would otherwise have been difficult to analyze.
Now we have $$\ExpectC[\Te{\MPV}{A}]{\reward{t}{i_t}- \lambda \payment{t}{i_t}}{\ve{S}_t, \Zed_{0:t-1}} \geq
\ExpectC[\Te{\MPV}{A}]{\reward{t}{i_t^\ast}}{\ve{S}_t, \Zed_{0:t-1}}.$$

Since the expected reward only depends on the current state of arm
$\ve{S}_t$, we can add the status of previous round to the condition, 
$$\ExpectC[\Te{\MPV}{A}]{\reward{t}{i_t}- \lambda \payment{t}{i_t}}{\ve{S}_{0:t}, \Zed_{0:t-1}} \geq
\ExpectC[\Te{\MPV}{A}]{\reward{t}{i_t^\ast}}{\ve{S}_{0:t},
  \Zed_{0:t-1}}.$$

Taking conditional expectation on both sides with the condition
$\hat{\ve{S}}_t, \zeta_{t-1} = n$,

\begin{align*}
 \ExpectC[\Te{\MPV}{A}]{\ExpectC[\Te{\MPV}{A}]{\reward{t}{i_t}- \lambda
   \payment{t}{i_t}}{\ve{S}_{0:t}, \Zed_{0:t-1}}}{\hat{\ve{S}}_t,\zeta_{t-1} = n} \\
\geq
\ExpectC[\Te{\MPV}{A}]{\ExpectC[\Te{\MPV}{A}]{\reward{t}{i_t^\ast}}{\ve{S}_{0:t}, \Zed_{0:t-1}}}{\hat{\ve{S}}_t, \zeta_{t-1} = n}.
\end{align*}

Since $\hat{\ve{S}}_t, \zeta_{t-1} = n$ is measurable by
$\ve{S}_{0:t}, \Zed_{0:t-1}$, we can use the tower property of
conditional expectations and get
$$\ExpectC[\Te{\MPV}{A}]{\reward{t}{i_t}- \lambda
  \payment{t}{i_t}}{\hat{\ve{S}}_t,\zeta_{t-1} = n} \geq
\ExpectC[\Te{\MPV}{A}]{\reward{t}{i_t^\ast}}{\hat{\ve{S}}_t, \zeta_{t-1}
  = n}.$$

Let $\tau\ge t$ be the next time when $\Te{\MPV}{A}$ performs a
non-myopic pull. Since the reward sequence for each arm forms a
Martingale, and $i_u^\ast$ is fixed with a certain $\hat{\ve{S}}_t$,
we have that $\ExpectC[\Te{\MPV}{A}]{\reward{u}{i_{u}^\ast}}{\hat{\ve{S}}_t,
\zeta_{t-1} = n}$ is identical for all $u\ge t$. Writing $\ExpectC[\Te{\MPV}{A}]{\reward{\tau}{i_{\tau}^\ast}}{\hat{\ve{S}}_t,
\zeta_{t-1} = n}$ with the law of total probability, we have

\begin{align*}
 \ExpectC[\Te{\MPV}{A}]{\reward{\tau}{i_{\tau}^\ast}}{\hat{\ve{S}}_t,\zeta_{t-1}
  = n} &= \sum_{u\ge t}\Prob{\tau = u}\cdot
  \ExpectC[\Te{\MPV}{A}]{\reward{u}{i_{u}^\ast}}{\hat{\ve{S}}_t,\zeta_{t-1}
  = n, \tau = u} \\
&  = \ExpectC[\Te{\MPV}{A}]{\reward{t}{i_{t}^\ast}}{\hat{\ve{S}}_t,\zeta_{t-1}
  = n} .
\end{align*}

So far, we have proved that $$\ExpectC[\Te{\MPV}{A}]{\reward{t}{i_t}- \lambda
  \payment{t}{i_t}}{\hat{\ve{S}}_t,\zeta_{t-1} = n} \geq
\ExpectC[\Te{\MPV}{A}]{\reward{\tau}{i_\tau^\ast}}{\hat{\ve{S}}_t, \zeta_{t-1}
  = n}.$$

Taking conditional expectations with respect to $\zeta_{t-1} = n$, then
applying the law of iterated expectation, 
$$\ExpectC[\Te{\MPV}{A}]{\reward{t}{i_t}- \lambda
  \payment{t}{i_t}}{\zeta_{t-1} = n} \geq
\ExpectC[\Te{\MPV}{A}]{\reward{\tau}{i_\tau^\ast}}{\zeta_{t-1}
  = n}.$$
Note the right-hand side is exactly $\Expect[\ALG]{\reward{n}{i_n}}$ by
definition of $\Te{\MPV}{A}$,
finishing the proof.
\end{proof}

\begin{rtheorem}[Variation of Lemma 3.2 of \cite{original}]{Lemma}{\ref{cite32}}
Given a parameter $\lambda$ and a signaling scheme $\sigSch$.  Assume  $\MPV$ satisfies
$\sum_{s \in \sigSpace} \SP{s} \MP{s} \ge\lambda\sum_{s \in \sigSpace}
\SP{s} \frac{1 -\MP{s}}{F_s^{-1} (\MP{s})}$, then for $\eta =
\frac{(1-p)\gamma}{1-p\gamma}$, where $p=\sum_{s\in \sigSpace}\SP{s}\MP{s}$, we have 
\begin{align*}
R_{\lambda}^{(\gamma)}(\Te{\MPV}{A})
& \ge \frac{1-\eta}{1-\gamma}\cdot R^{(\eta)}(\ALG).
\end{align*}
\end{rtheorem}

\begin{proof}
  \begin{align*}
    R_{\lambda}^{(\gamma)}(\Te{\MPV}{A})\quad &=\quad &&\sum_{t=0}^\infty \gamma^t
                                        \Expect[\Te{\MPV}{A}]{\reward{t}{i_t}-
                                        \lambda\payment{t}{i_t}}\\
    &= &&\sum_{t=0}^\infty\sum_{n=0}^\infty\gamma^t \ExpectC[\Te{\MPV}{A}]{\reward{t}{i_t}- \lambda
  \payment{t}{i_t}}{\zeta_{t-1} =n} \cdot \Prob{\zeta_{t-1} = n}\\
&\stackrel{\mathclap{\mbox{Lemma~\ref{cite42}}}}{\ge}&&\sum_{n=0}^\infty
  \Expect[\ALG]{\reward{n}{i_n}}\cdot\sum_{t=0}^\infty\gamma^t\cdot
  \Prob{\zeta_{t-1}=n}\\
&=&&\sum_{n=0}^\infty  \Expect[\ALG]{\reward{n}{i_n}}\cdot \sum_{t=0}^\infty\gamma^t\cdot
  \binom{t}{n}\cdot (1-p)^np^{t-n}\\
&=&&\sum_{n=0}^\infty  \Expect[\ALG]{\reward{n}{i_n}}\cdot \gamma^n
  (1-p)^n \cdot \sum_{i=0}^\infty
  \binom{n+i}{n}\cdot (\gamma p)^i\\
&=&&\sum_{n=0}^\infty  \Expect[\ALG]{\reward{n}{i_n}}\cdot \gamma^n
  (1-p)^n \cdot (1-\gamma p)^{-(n+1)}\\
&=&&\sum_{n=0}^\infty  \Expect[\ALG]{\reward{n}{i_n}}\cdot
  \frac{1-\eta}{1-\gamma}\cdot \eta^n.
  \end{align*}

In the penultimate step, we use $\sum_{i=0}^\infty \binom{n+i}{n} x^i
= (1-x)^{-(n+1)}$ for $n \ge 0$ and $|x|< 1$. 
\end{proof}

\section{Impossibility of Exploration with Reward-Dependent Payments}
\label{app:rewarddep}

In this section, we show that if the payment scheme is a function only of the arm reward obtained by the agent, it can be impossible to incentivize the agent to pull the optimal arm. 

We use a slightly modified ``Diamonds in the Rough'' instance \Dime,
by changing the myopic arm to two arms. Both of them produce
i.i.d. rewards from a known distribution. The distributions are showed
in Table~\ref{tab:dis}:
\begin{table}[h]
  \centering

\begin{tabular}[h]{c|c| c}
Reward &  Arm 1 outputs reward with probability & Arm 2 outputs reward with probability\\
\hline
$(1-\gamma)B\cdot M$ & $(1+\epsilon)/M$ & $(1-\epsilon)/M$\\
0 &  $1- (1+\epsilon)/M$&  $1-(1-\epsilon)/M$
\end{tabular}
    
  \caption{Distributions of myopic arms}
  \label{tab:dis}
\end{table}

We emphasize that this is not the type of arm with degenerate distribution that produces an initially unknown constant reward. 

Also we keep the infinite supply of arms with degenerate distribution. Recall the definition as follows,

\begin{enumerate}
 \item With probability $1/M$, the arm's reward is a degenerate distribution of the constant $(1 - \gamma) \theB \cdot M$
 (good state);
 \item With probability $1 - 1/M$, the arm's reward is a degenerate
   distribution of the constant $0$ (bad
 state).
\end{enumerate}

%Arm 1 is still myopically optiaml, and has  an expected reward of $(1-\gamma)(1+\epsilon)B$. 
Because the payment function can depend only on the reward that the agent obtained, it is completely characterized by the payment $p_H$ in response to obtaining $(1-\gamma)B\cdot M$ and the payment $p_L$ in response to obtaining $0$.
When $(1-\gamma)B\cdot M + p_H\ge p_L$,   we have
$$((1-\gamma)B\cdot M + p_H)\cdot\frac{1+\epsilon}{M}+ p_L\cdot \left(1-\frac{1+\epsilon}{M}\right) \ge ((1-\gamma)B\cdot M + p_H)\cdot\frac{1}{M}+ p_L\cdot (1-\frac{1}{M});$$
otherwise, we have 
$$((1-\gamma)B\cdot M + p_H)\cdot\frac{1-\epsilon}{M}+ p_L\cdot \left(1-\frac{1-\epsilon}{M}\right) \ge ((1-\gamma)B\cdot M + p_H)\cdot\frac{1}{M}+ p_L\cdot (1-\frac{1}{M}).$$

So the expected utility of pulling the optimal arm is always weakly dominated by one of two myopic arms. Therefore, there is no way to incentivize the agent to pull the optimal arm. 
 
\end{document}